\documentclass[a4paper, 11pt]{article}
\usepackage{graphicx} 
\usepackage{amsmath,amssymb,amsthm}
\usepackage{geometry}
\usepackage{enumerate}
\usepackage[svgnames]{xcolor}
\usepackage{color}
\usepackage{hyperref}
\hypersetup{
    colorlinks=true,
    linkcolor=purple,
    citecolor=blue,
    urlcolor=Navy,
    filecolor=Gray,
}
\usepackage{cleveref}
\usepackage{graphicx}
\usepackage{tikz}
\usepackage{tcolorbox}
\usepackage{authblk}
\usepackage{thmtools}
\usepackage{thm-restate}
\usepackage{geometry}
\usepackage{fontawesome5}
\usepackage{orcidlink}
\tcbuselibrary{skins, breakable}

\newcommand{\defparproblem}[4]{
\vspace{3mm}
\noindent\fbox{
  \begin{minipage}{0.95\textwidth}
  \begin{tabular*}{\textwidth}{@{\extracolsep{\fill}}lr} \textsc{#1} \\ \end{tabular*}
  {\bf{Input:}} #2  \\
  {\bf{Parameter:}} #3\\
  {\bf{Question:}} #4
  \end{minipage}
  }
  \vspace{2mm}
}

\newcommand{\PITVD}{{\sc(Proper-Interval, Tree)-Graph Vertex Deletion}}
\newcommand{\pitgraph}{{(prop-int, tree)-graph}}
\newcommand{\ctpair}{{(claw, triangle)-pair}}
\newcommand{\pig}{proper interval graph}
\newcommand{\cBB}{{\mathcal B}}
\newcommand{\cCCC}{{\mathcal C}}

\newcommand{\cGG}{{\mathcal G}}
\newcommand{\cKK}{{\mathcal K}}
\newcommand{\cOO}{{\mathcal O}}
\newcommand{\uv}{<_{\mathcal{V}}}

\newtheorem{lemma}{Lemma}
\newtheorem{reductionrule}{Reduction Rule}
\newtheorem{claim}{Claim}
\newtheorem{definition}{Definition}

\newtheorem{observation}{Observation}
\newtheorem{proposition}{Proposition}

\newenvironment{claimproof}{\let\oldqedsymbol\qedsymbol\renewcommand{\qedsymbol}{$\lhd$}\begin{proof}}{\end{proof}\let\qedsymbol\oldqedsymbol}

\title{A Polynomial Kernel for Vertex Deletion to the Scattered Class of Proper Interval Graph and Trees}
\author[1]{Ashwin Jacob\thanks{ashwinjacob@nitc.ac.in \href{mailto:ashwinjacob@nitc.ac.in}{\textcolor{purple}{\faEnvelope}} \href{https://ashwinjacob.github.io}{\faHome} \orcidlink{0000-0003-4864-043X}}}

\author[2]{Arpit Kumar\thanks{arpitk@iiitd.ac.in \href{mailto:arpitk@iiitd.ac.in}{\textcolor{purple}{\faEnvelope}} \href{https://sites.google.com/iiitd.ac.in/arpit-gang}{\faHome} \orcidlink{0009-0002-0741-5682}}}

\author[2]{Diptapriyo Majumdar\thanks{diptapriyo@iiitd.ac.in \href{mailto:diptapriyo@iiitd.ac.in}{\textcolor{purple}{\faEnvelope}} \href{https://diptapriyomajumdar.wixsite.com/toto}{\faHome} \orcidlink{0000-0003-2677-4648}}}

\affil[1]{National Institute of Technology Calicut, India}
\affil[2]{Indraprastha Institute of Information Technology Delhi, New Delhi, India}

\date{}

\begin{document}
\maketitle

\begin{abstract}
Vertex deletion to hereditary graph class is well-studied in parameterized complexity.
Vertex deletion to the scattered graph classes have gained attention in recent years.
In this paper, we consider {\PITVD}, the input to which is an undirected graph $G = (V, E)$ and an integer $k$. 
The goal is to pick a set $X \subseteq V(G)$ of at most $k$ vertices such that $G - X$ is a simple graph and every connected component of $G - X$ is a proper interval graph or a tree.
When parameterized by the solution size $k$, {\PITVD} has been proved to be fixed-parameter tractable by Jacob et al. [JCSS-2023, FCT-2021].
In this paper, we consider this problem from the perspective of polynomial kernelization.
We provide a first nontrivial polynomial kernel for {\PITVD}, with $O(k^{33})$ vertices.
\end{abstract}

\section{Introduction}
\label{sec:intro}
A graph class $\Pi$ is {\em hereditary} if $\Pi$ is closed under induced subgraphs.
Graph modification problems are among the central area in the field of algorithms.
Graph modification problems can be of various types.
In the past 2-3 decades, vertex deletion problems have been extensively studied from the perspective of parameterized complexity and kernelization.
Such problems include {\sc Vertex Cover}, {\sc Feedback Vertex Set}, {\sc Odd Cycle Transversal}, {\sc Chordal Vertex Deletion Set}, {\sc Interval Vertex Deletion Set} etc.
Several of these computational problems are NP-hard in general graphs but are polynomial-time solvable in special graph classes.
For instance, both {\sc Vertex Cover} and {\sc Feedback Vertex Set} are NP-hard in general graphs but is polynomial-time solvable in chordal graphs \cite{FominVillanger2010,FominVillanger2012,Golumbic2004}, AT-free graphs \cite{KratschMullerTodinca1999,KratschMT08}, bounded treewidth graphs \cite{Courcelle1990}, bounded mim-width graphs \cite{BuiXuanTelleVatshelle2013,JaffkeKwonTelle2020} permutation graphs \cite{Liang1994FVS} etc.
In addition, {\sc Vertex Cover} is polynomial-time solvable in bipartite graphs \cite{Egervary1931,konig1931graphs,Schrijver2003} as well.
Consider the following situation when the input graph $G$ has multiple connected components and each such component belongs to one of these special graph classes.
Such graph classes are called {\em scattered graph classes}.
If a vertex deletion problem $L$ is polynomial-time solvable in each such graph class, then $L$ becomes polynomial-time solvable in $G$.
This makes the area {\em vertex deletion problems to scattered graph classes} interesting to explore.

The study of scattered classes were initiated by Ganian et al. \cite{GanianRamanujanSzeider2017} from the perspective of parameterized complexity.
In their work, they explored backdoors to scattered classes of CSP instances.
Subsequently, Jacob et al. \cite{JacobKroonMajumdarRaman2023,JACOB2023280} built on the works by Ganian et al. \cite{GanianRamanujanSzeider2017} to initiate a study on vertex deletion to scattered graph classes from the perspective of parameterized complexity.
They considered {\sc $(\cGG_1,\ldots,\cGG_d)$-Vertex Deletion} where the input instance is $(G, k)$ and the objective is to delete a set $S$ of at most $k$ vertices such that each connected component of $G - S$ is in ${\cGG}_i$ for some $i \in [d]$.
The solution size $k$ is a natural parameter choice.
After that, {\sc $(\cGG_1, \cGG_2)$-Vertex Deletion} gained special attention where the graph class pair $(\cGG_1, \cGG_2)$ satisfy some special properties.
Jacob et al. \cite{JACOB2023280} provided single-exponential time FPT algorithms when $(\cGG_1, \cGG_2)$ is ({\sc Clique, Trees}), ({\sc Split, Bipartite}), ({\sc Clique, Cactus}), ({\sc Claw-Free, Triangle-Free}), ({\sc Interval, Tree}), ({\sc Proper-Interval, Tree}) etc.

However, polynomial kernelization for vertex deletion to scattered graph classes remained unexplored until recently.
Jacob et al.~\cite{JacobMZ2024isaac} initiated this direction by giving the first nontrivial kernel for the {\sc (Clique, Tree)-Vertex Deletion} problem with $\cOO(k^5)$ vertices.
This result was later improved to $\cOO(k^4)$ by Tsur~\cite{Tsur25} and further to $\cOO(k^2)$ by Kumabe~\cite{Kumabe2025isaac}.

\paragraph*{Our Problem and Result.}
To the best of our knowledge, there have been very limited exploration to vertex deletion to graph classes from the perspective of polynomial kernelization that is one of the central subfields in parameterized complexity.
Apart from {\sc (Clique, Tree)-Vertex Deletion}, the other polynomial kernelization results are only some folklore results from \cite{JacobKroonMajumdarRaman2023,JACOB2023280} and these folklore results are very restrictive and do not cover any infinite families of graphs.
Additionally, in the {\sc (Clique, Tree)-Vertex Deletion} problem, one of the graph classes (cluster graph case) has finite forbidden families.
In this paper, we consider {\PITVD} in which none of the two graph classes can be characterized by finite forbidden families.
For the simplicity of presentation, if every connected component is a proper interval graph or a tree, we call it a {\em {\pitgraph}}.  
The input to our problem is an undirected simple graph $G$ and an integer $k$, and we ask if there is a set $X$ of at most $k$ vertices such that $G - X$ is a simple {\pitgraph}.
Nevertheless, our reduction rules can create some parallel edges.
Therefore, we consider the more general formulation of the problem as follows.

\begin{figure}[h!]
    \centering
    \label{fig:obst-pig}
    \includegraphics[scale=0.6]{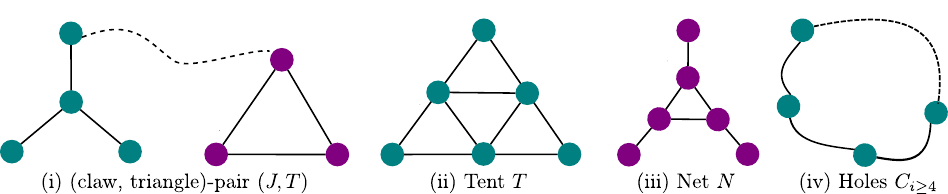}
    \caption{Forbidden Induced Subgraphs or Obstructions for {\PITVD} problem. The dashed path in {\ctpair} $(J, T)$ is a path connecting the claw $J$ and the triangle $T$.}
\end{figure}

\defparproblem{{\PITVD} (PITVD)}{An undirected (multi)graph $G = (V, E)$ and an integer $k$.}{$k$}{Does there exist $X \subseteq V(G)$ such that $G - X$ is a simple graph and every connected component of $G - X$ is a proper interval graph or a tree?}

Our problem is a natural extension to the {\sc (Clique, Tree)-Vertex Deletion}. The class of proper interval graphs are more rich than the subclass of cliques as they admit a linear vertex ordering and  do not have a finite forbidden family of induced subgraphs. Moreover, the problem also fits into the paradigm where one class is dense, and the other is sparse.
We prove the following result on polynomial kernelization.

\begin{restatable}{theorem}{FinalMainThm}
    \label{thm:final-result}
    {\PITVD} admits a kernel with $\cOO(k^{33})$ vertices.
\end{restatable}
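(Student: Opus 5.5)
Our plan follows the now-standard route for kernels of vertex deletion to scattered classes (as for {\sc (Clique, Tree)-Vertex Deletion}): first obtain an approximate solution, then bound the number of components of $G - S$ and the size of each component by reduction rules that exploit the structure of proper interval graphs and trees. We use the obstruction set of the figure: small obstructions (claws, nets, tents, holes of length at least $4$, triangles with attached structures), cycles together with claws or triangles in the same component, and \ctpair{}s $(J,T)$ joined by a path. First, by a standard hitting-set/greedy argument, we compute in polynomial time either a conclusion that $(G,k)$ is a no-instance or an approximate solution $S$ of size $\mathrm{poly}(k)$. Small obstructions on at most a constant number of vertices are packed greedily. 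The long holes and \ctpair{}s are handled using the FPT algorithm's structural insights (or a known $\mathcal{O}(\log)$/constant-factor approximation for proper interval deletion and feedback vertex set), applied component-wise after the small obstructions are hit. Parallel edges are treated as $2$-cycles, i.e., obstructions of size two.

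Second, we fix $S$ and analyse $G - S$, every component of which is a proper interval graph or a tree. The goal is to bound three things. \emph{(i)} The number of components adjacent to $S$: we use an expansion-lemma (or $q$-expansion) rule. If a vertex $v \in S$ has more than $\mathrm{poly}(k)$ components that each form an obstruction together with $v$, we either force $v$ into the solution or add parallel edges, which is exactly why multigraphs appear. Components attached to $S$ without creating obstructions are shown irrelevant and deleted. \emph{(ii)} For tree components, we adapt the feedback-vertex-set kernel: we bound degree-one and degree-two vertices relative to $S$ by bypassing long induced paths with no neighbours in $S$, and we use the expansion lemma to bound high-degree vertices. \emph{(iii)} For proper interval components, we use the umbrella ordering. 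We mark, for each vertex or pair of vertices of $S$ and each relevant obstruction type, $\mathrm{poly}(k)$ vertices of each clique in the ordering. Between consecutive marked positions, a long stretch of unmarked consecutive cliques can be shrunk (the ``module/clique compression'' rule used in proper interval vertex deletion kernels), and large cliques can be trimmed to $\mathrm{poly}(k)$ vertices by twin-like arguments.

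The main obstacle is the interaction step: a component of $G - S$ may switch type once solution vertices are removed. A proper interval component could become a tree piece, or a component containing a claw might need to stay triangle-free, and the \ctpair{} obstruction is non-local. To handle this we would first try to classify each component of $G - S$ by whether it contains a claw or a triangle. We would then argue that the path connecting $J$ and $T$ in a \ctpair{} can be shortened while preserving all $S$-neighbourhood patterns, via marking $\mathrm{poly}(k)$ representative shortest paths between marked claws and triangles. Proving safeness of this shortening in both directions is where most effort goes.

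Finally, we multiply the bounds: $|S|$ polynomial, times the number of marked components, times the per-component size after compression. Tracking the exponents through the nested markings (pairs of $S$-vertices, expansion lemma, clique sizes) should yield $\cOO(k^{33})$ vertices, completing the kernel claimed in \Cref{thm:final-result}.
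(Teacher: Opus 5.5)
Your proposal outlines the generic kernelization template but is not a proof, and its starting point is misstated. A claw alone is not an obstruction: a tree may contain arbitrarily many disjoint claws. Nor is ``cycles together with claws or triangles'' an obstruction class. The characterization is: no net, no tent, no hole, and no component containing both a claw and a triangle. Packing ``small obstructions'' that include claws would reject yes-instances.

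Your construction of $S$ is also left vague where the argument needs precision. Combining proper interval deletion and feedback vertex set approximations component-wise fails. A large proper interval part joined by a path to a large tree full of claws needs one deletion, but many for either class alone. The paper instead uses the $7$-approximation of Jacob et al.\ and adds the vertices of a sunflower-based representative family (Proposition~\ref{prop:sunflower-apply}) for nets, tents and $C_4,C_5,C_6$. This gives $|S|=\cOO(k^6)$ together with the property of Lemma~\ref{lem:small-obstruction}: a set of at most $k$ vertices hits all such small obstructions of $G$ if and only if it hits those inside $G[S]$. That property is what lets the safeness proof of Reduction Rule~\ref{rul:mark-clique} consider only holes of length at least $7$ and claw--triangle pairs through a deleted unmarked vertex. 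A greedy packing gives a hitting set but not this representative property, and nothing in your marking handles nets, tents and short holes instead.

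The steps that carry the real content are left as intentions. You ``would try'' to shorten claw--triangle connecting paths via representative shortest paths, with no rule or safeness proof, and the paper does not need this. Its actual argument has four parts:
\begin{itemize}
\item It bounds the proper interval components by $3|S|$ via a $3$-expansion: each contains a triangle, so an $S$-vertex seeing three of them forms a claw--triangle pair.
\item It deletes an $S$-vertex meeting $6k+5$ cliques of one component.
\item It proves (Lemma~\ref{lem:2-disjoint-triangle}) that seven consecutive cliques avoiding $N(S)$ contain two disjoint triangles once degree-$2$ paths are shortened. Hence in a window of $14k+5$ such cliques any solution leaves triangles on both sides, which makes bypassing a clique avoided by a minimum separator (Reduction Rule~\ref{rul:bypass-rule}) safe without tracking paths.
\item In each clique it keeps the first and last $k+3$ vertices for every $Z\in\binom{S}{\le 3}$ and every pattern $f\colon Z\to\{0,1\}$, plus patterns relative to the neighbouring cliques. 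Triples are essential: a claw through an unmarked vertex may have all three leaves in $S$, or its centre and both other leaves in $S$. Marking by single vertices or pairs of $S$, as you suggest, does not guarantee a replacement vertex.
\end{itemize}
The bound is then $\cOO(k^2|S|^2)=\cOO(k^{14})$ cliques times $\cOO(k|S|^3)=\cOO(k^{19})$ vertices per clique, plus $\cOO(|S|^2)$ for the tree part. Your final claim that the exponents ``should yield'' $33$ has no computation behind it. The number comes from these specific choices, none of which appear in your plan.
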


To the best of our knowledge, our result above is the first nontrivial polynomial kernel for {\sc ($\cGG_1, \cGG_2$)-Vertex Deletion} in which both $\cGG_1$ and $\cGG_2$ are characterized by infinite forbidden families.
Our proof uses non-trivial insights using Expansion Lemma at various graph structures, and exploits various problem specific augmentations in combination of intricate marking schemes.
% Due to lack of space, we refer to the full version (attached at the end) and to \cite{CyganFKLMPPS15,downey2013fundamentals} for more formal definitions of parameterized complexity and kernelization.

\section{Preliminaries and Notations}
\label{sec:prelims}

\paragraph*{Sets and Graph Theory.} 
We use standard graph theoretic terminologies from the book of Diestel \cite{diestel2017graph}.
Throughout the paper, we refer to only undirected graphs.
If every connected component of a simple undirected graph $G$ is a proper interval graph or a tree, then we say that $G$ is a \emph{\pitgraph}.
Throughout the paper, we use ``a simple {\pitgraph}'' for the sake of simplicity most of the time.
A {\em path} in a graph is a sequence of distinct vertices $(v_1, v_2, \ldots, v_{\ell})$ such that for all $1 \leq i \leq {\ell - 1}$, $v_i v_{i+1}$ is an edge.
A {\em cycle} in a graph is a sequence of vertices $(v_1, v_2, \ldots, v_{\ell}, v_1)$ such that $\ell \geq 2$, $v_1,\ldots, v_{\ell}$ are distinct, for every $1 \leq i \leq \ell - 1$, $v_i v_{i+1}$ is an edge and $v_{\ell} v_1$ is an edge. 
A path $P = (v_1, v_2, \ldots, v_{\ell})$ is a {\em degree-2-path} in a graph if every internal vertex has degree exactly 2.
A degree-2-path is a {\em degree-2-tail} in $G$ if $d_G(v_1) > 2$ and $d_G(v_{\ell}) = 1$.
A degree-2-path is a {\em degree-2-overbridge} in $G$ if $d_G(v_1), d_G(v_{\ell}) > 2$.
Given an undirected graph $G$ and $v \in V(G)$, we define {\em attach a degree-2-tail of length $\ell$ at $v$} as follows: 
{create new vertices $u_1, u_2, \dots, u_{\ell}$ and add edges $vu_1$ and $u_iu_{i+1}$ for every $1\le i\le \ell-1$.}

\begin{definition}[$v$-flower]
  For a graph $G$ and a vertex $v \in V(G)$, a {\em $v$-flower} is a family of $\ell$ cycles $C_1, C_2, \ldots, C_{\ell}$ in $G$ such that each cycle contains $v$, and any two distinct cycles $C_i$ and $C_j$ intersect only at $v$. We refer to the $C_i$'s as the {\em petals}, and to $v$ as the {\em core}. The number of cycles $\ell$ is the {\em order} of the $v$-flower. 
\end{definition}

We use the following proposition on the $v$-flower structure.

\begin{proposition}[\cite{CyganFKLMPPS15}, Lemma 9.6]
  {\label{prop:flower}}
  {Given a graph $G$, a vertex $v \in V(G)$, and an integer $k$, there is a polynomial-time algorithm that either finds a $v$-flower of order $k+1$ or computes a set $Z \subseteq V(G)\setminus\{v\}$ of size at most $2k$ that intersects every cycle of $G$ passing through $v$.}
\end{proposition}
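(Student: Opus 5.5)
This is the standard flower lemma, and I would prove it by reducing to Menger's theorem, computed via max-flow, in an auxiliary graph. The plan is to write $N = N_G(v)$ and handle multi-edges and loops at $v$ first.

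\emph{Trivial cycles.} If $v$ has a self-loop, that loop is a petal which is disjoint from every other cycle outside $v$. If $v$ has a parallel edge to some $u$, the resulting 2-cycle is a petal using only $u$. I would collect these petals greedily and treat the corresponding neighbors $u$ specially: each is either used as a petal or put into $Z$. In what follows I assume the remaining structure is simple at $v$.

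\emph{Auxiliary instance.} Every cycle through $v$ corresponds to a path in $G - v$ between two distinct vertices of $N$. A $v$-flower of order $\ell$ corresponds to $\ell$ vertex-disjoint $N$--$N$ paths in $G-v$, each of which may be a single edge between two neighbors. So I would run Gallai's $A$-paths theorem with $A = N$ in $G - v$. Its polynomial-time version, due to Gallai and Schrijver via matroid matching, either returns $k+1$ vertex-disjoint $A$-paths or returns a set $Z$ of at most $2k$ vertices hitting all $A$-paths. The factor $2$ comes from the $\tfrac12$ in Gallai's min–max formula: the dual certificate is a set $X$ together with components $C$ of $G-v-X$, and we take $Z = X \cup \bigcup_C (\text{all but one of } C \cap A \text{ border vertices})$. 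The standard bound gives $|Z| \le 2k$.

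\emph{Conclusion.} Any cycle through $v$ avoiding $Z$ would yield an $A$-path avoiding $Z$, which is impossible. Together with the trivial-cycle handling, where each such neighbor counts once on either side, this gives the dichotomy. The main obstacle is the algorithmic version of Gallai's theorem, and in particular verifying the constant 2. If I wanted to avoid citing it, I would instead do a greedy packing with a simpler argument: repeatedly find a shortest $N$--$N$ path, but that does not give $2k$. Since the statement is quoted verbatim from \cite{CyganFKLMPPS15}, in the paper I would simply invoke that reference.
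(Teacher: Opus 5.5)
Your proposal is correct and is essentially the standard argument behind the cited Lemma 9.6 of Cygan et al.\ (the paper gives no proof of its own). That argument applies Gallai's $A$-paths theorem to $G-v$ with $A=N_G(v)$, after peeling off the set $U$ of neighbours joined to $v$ by double edges and running it with parameter $k-|U|$, so that $|Z|\le |U|+2(k-|U|)\le 2k$.

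The only slip is your remark on self-loops. A loop at $v$ can neither be hit by $Z\subseteq V(G)\setminus\{v\}$ nor charged to a neighbour, so it cannot be handled the way you describe. This is moot here, because the paper's cycles have at least two distinct vertices and $G$ is assumed loopless.
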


We use $v$-flower only for cycles. For our kernelization algorithm, we also use a more generic concept of {\em sunflower}: a family $\{S_1,\dots,S_\ell\}$ with core $Y$ such that $S_i\cap S_j=Y$ for all $i\neq j$. The sets $S_i\setminus Y$ are petals, and we require none of them to be empty.

\begin{proposition}[Lemma 3.2 of \cite{fomin2013polynomial}, Lemma 3.3 of \cite{EibenHR19}]
  \label{prop:sunflower-apply}
  Let $\mathcal{F}$ be a family of sets of cardinality at most $d$ over a universe $\mathcal{U}$ and $k$ be a positive integer.
  Then there is an $\mathcal{O}(|\mathcal{F}|(k+|\mathcal{F}|))$ time algorithm that finds a non-empty set $\mathcal{F}'\subseteq \mathcal{F}$ such that
  \begin{enumerate}[(i)]
    \item For every $Z\subseteq \mathcal{U}$ of size at most $k$, $Z$ is a minimal hitting set of $\mathcal{F}$ if and only if $Z$ is a minimal hitting set of $\mathcal{F}'$; and
    \item $|\mathcal{F}'|\le d!(k+1)^d$.
  \end{enumerate}
\end{proposition}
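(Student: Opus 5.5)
The plan is to prove the proposition by iterated sunflower extraction: repeatedly find a large sunflower in the current family and discard one of its petal sets, showing that each discard preserves the small minimal hitting sets. I use the constructive form of the Erdős–Rado sunflower lemma: any family of more than $d!\,k'^d$ sets, each of size at most $d$, contains a sunflower with $k'+1$ sets, and such a sunflower can be found in time polynomial (in fact linear) in the size of the family. I apply it with $k' = k+1$. Start from $\mathcal{F}' := \mathcal{F}$. While $|\mathcal{F}'| > d!(k+1)^d$, find a sunflower $S_1,\ldots,S_{k+2}$ in $\mathcal{F}'$ with core $Y$, and delete $S_{k+2}$ from $\mathcal{F}'$. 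When the loop stops, part (ii) holds by construction. Part (i) follows by induction on the number of iterations, provided one deletion step preserves minimal hitting sets of size at most $k$.

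For the single step, let $\mathcal{G}$ be the family before the deletion and $\mathcal{G}' = \mathcal{G}\setminus\{S_{k+2}\}$. The key observation is that every $Z \subseteq \mathcal{U}$ with $|Z|\le k$ that hits $\mathcal{G}'$ must intersect $Y$. Indeed, the sets $S_1\setminus Y,\ldots,S_{k+1}\setminus Y$ are pairwise disjoint. A set $Z$ avoiding $Y$ would therefore need $k+1$ distinct elements to hit $S_1,\ldots,S_{k+1}$. If $Y=\emptyset$ this already shows that no hitting set of size at most $k$ exists for either family, so the equivalence holds vacuously. Since $Y \subseteq S_{k+2}$, any such $Z$ also hits $S_{k+2}$, and hence hits $\mathcal{G}$.

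I then check both directions of the equivalence. First, suppose $Z$ is a minimal hitting set of $\mathcal{G}'$ with $|Z|\le k$. Then $Z$ hits $\mathcal{G}$ by the observation. Every proper subset of $Z$ misses some set of $\mathcal{G}' \subseteq \mathcal{G}$, so $Z$ is minimal for $\mathcal{G}$. Conversely, suppose $Z$ is a minimal hitting set of $\mathcal{G}$ with $|Z|\le k$. Clearly $Z$ hits $\mathcal{G}'$. If some proper subset $Z'\subsetneq Z$ hit $\mathcal{G}'$, then $|Z'|\le k$, so $Z'$ would meet $Y$ and thus hit $S_{k+2}$. That would make $Z'$ a hitting set of $\mathcal{G}$, contradicting the minimality of $Z$. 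Hence $Z$ is minimal for $\mathcal{G}'$.

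It remains to bound the running time. There are at most $|\mathcal{F}|$ iterations, and each sunflower search costs $\mathcal{O}(|\mathcal{F}| + k)$-ish time when implemented as in the cited works, giving the stated $\mathcal{O}(|\mathcal{F}|(k+|\mathcal{F}|))$ bound. The only delicate point is this running-time accounting. The structural argument above is routine; for the fine-grained time bound I would defer to the implementations in \cite{fomin2013polynomial,EibenHR19}, since only polynomiality is needed for the kernel.
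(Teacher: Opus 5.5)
Your argument is correct and is the standard proof behind the cited lemma; the paper itself gives no proof and only cites Fomin et al.\ and Eiben et al. You repeatedly extract a sunflower with $k+2$ sets, delete one of them, and use that any set of size at most $k$ hitting the remaining $k+1$ sets must meet the core, so $|\mathcal{F}|$ iterations of an $\mathcal{O}(k+|\mathcal{F}|)$ sunflower search give the time bound. The only thing you leave implicit is that $\mathcal{F}'$ is non-empty, which is immediate because you delete only while $|\mathcal{F}'| > d!(k+1)^d \ge 1$ (for non-empty $\mathcal{F}$).
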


\begin{definition}[$q$-expansion]
  Let $q$ be a positive integer and $G$ be a bipartite graph with vertex bipartition $(A, B)$.
  For $\widehat{A}\subseteq A$ and $\widehat{B}\subseteq B$, a set $M \subseteq E(G)$ of edges is called a {\em $q$-expansion} of $\widehat{A}$ into $\widehat{B}$ if
  \begin{enumerate}[(i)]
    \item every vertex of $\widehat{A}$ is incident to exactly $q$ edges in $M$; and
    \item exactly $q|\widehat{A}|$ vertices of $B$ are incident to the edges in $M$.
  \end{enumerate}
  The vertices of $\widehat{A}$ and $\widehat{B}$ that are the endpoints of the edges of $M$ are {\em saturated} by $M$.
\end{definition}
We use the concept of {\em $q$-expansion}, along with its algorithmic tools. 
It is important to note that by definition of a $q$-expansion $M$ of $\widehat{A}$ into $\widehat{B}$, every vertex of $\widehat{A}$ is saturated by $M$, and $|\widehat{B}| \geq q|\widehat{A}|$. However, a vertex of $\widehat{B}$ need not be saturated by $M$.

\begin{proposition}[$q$-Expansion Lemma \cite{CyganFKLMPPS15, 10.1145/1721837.1721848}]
{\label{lem:q-expansion}}
  Let $q$ be a positive integer and $G$ be a bipartite graph with vertex bipartition $(A, B)$ such that $|B|\ge q|A|$, and there is no isolated vertex in $B$. 
  Then there exist non-empty {sets} $\widehat{A}\subseteq A$ and $\widehat{B}\subseteq B$ such that
  \begin{enumerate}[(i)]
    \item\label{expansion-prop-1} There is a $q$-expansion $M$ of $\widehat{A}$ into $\widehat{B}$ in $G$.
    \item\label{expansion-prop-2} $N(\widehat{B})\subseteq \widehat{A}$.
  \end{enumerate}
  Furthermore, the sets $\widehat{A}, \widehat{B}$ and the $q$-expansion $M$ can be computed in time $\mathcal{O}(mn^{1.5})$.
\end{proposition}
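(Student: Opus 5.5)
We prove the $q$-Expansion Lemma by reducing it to the case $q=1$ via vertex duplication, and then settle the case $q=1$ with a Kőnig-type argument based on alternating paths.

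\textbf{Reduction to $q=1$.} Construct the bipartite graph $G_q$ with bipartition $(A_q,B)$, where $A_q$ consists of $q$ copies $a^{(1)},\dots,a^{(q)}$ of every $a\in A$, and each copy is adjacent to all of $N_G(a)$. Then $|B|\ge q|A| = |A_q|$, and $B$ still has no isolated vertex. Suppose we find nonempty $\widehat{A}_q\subseteq A_q$ and $\widehat{B}\subseteq B$, a matching $M_q$ saturating $\widehat{A}_q$ into $\widehat{B}$, and $N_{G_q}(\widehat{B})\subseteq \widehat{A}_q$. Since all copies of $a$ share the same neighbourhood, the condition $N_{G_q}(\widehat{B})\subseteq\widehat{A}_q$ forces the following: whenever one copy of $a$ has a neighbour in $\widehat{B}$, all $q$ copies lie in $\widehat{A}_q$. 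We may also assume that every vertex of $\widehat{A}_q$ has a neighbour in $\widehat{B}$; this holds automatically, because it is matched into $\widehat{B}$. Hence $\widehat{A}_q$ is a union of full copy-classes. Let $\widehat{A}$ be the set of originals. Collapsing the copies turns $M_q$ into a set $M$ in which each $a\in\widehat{A}$ meets exactly $q$ edges with distinct endpoints, giving $q|\widehat{A}|$ saturated vertices of $B$ in total. Moreover $N_G(\widehat{B})\subseteq\widehat{A}$, which is exactly the required $q$-expansion.

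\textbf{Case $q=1$.} Let $H$ be bipartite with bipartition $(A,B)$, $|B|\ge |A|$, and no isolated vertex in $B$. Compute a maximum matching $M$ using Hopcroft–Karp.

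If $M$ saturates $A$, we iterate. Take $\widehat{A}=A$ and $\widehat{B}=B$; then $N(B)\subseteq A$ holds trivially and $M$ is a $1$-expansion. Both sets are nonempty because $B\neq\emptyset$ is forced by the absence of isolated vertices once $A\neq\emptyset$. This settles the saturating case.

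Otherwise let $U\subseteq A$ be the set of $M$-unsaturated vertices of $A$, and let $R$ be the set of vertices reachable from $U$ by $M$-alternating paths. Set $A'=A\setminus (R\cap A)$ and $B'=B\setminus(R\cap B)$. By maximality of $M$, every vertex of $R\cap B$ is matched into $R\cap A$, and $N(R\cap A)\subseteq R\cap B$. This is the Kőnig cover structure: $|R\cap A|>|R\cap B|$, so $|A'|<|A|$ and $|B'|\ge|B|-|R\cap B|\ge|A'|$ with slack. Crucially, $N(B')\subseteq A'$, since any edge from $B'$ to $R\cap A$ would place that $B$-vertex in $R$. We recurse on $H[A'\cup B']$. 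Vertices of $B'$ are still non-isolated there, because all their neighbours lie in $A'$. Since $|B'|\ge|A'|$, we have $A'\neq\emptyset$ whenever $B'\neq\emptyset$, and $B'\neq\emptyset$ because $|B|\ge|A|>|R\cap A|>|R\cap B|$. The recursion terminates because $|A|$ strictly decreases, and it must end in the saturating case with $A\neq\emptyset$. Neighbourhood containment composes along the recursion: $N_H(\widehat{B})=N_{H'}(\widehat{B})\subseteq\widehat{A}$, because $\widehat{B}\subseteq B'$ has all its neighbours in $A'$.

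\textbf{Running time and the main obstacle.} The running time is at most $|A|\le n$ matching computations. A more careful accounting, or a single run of the alternating-path analysis with one final maximum matching, gives the claimed $\mathcal{O}(mn^{1.5})$ bound, since each Hopcroft–Karp call costs $O(m\sqrt n)$ on $G_q$ (whose size is polynomially bounded in that of $G$, as $q\le n$ is assumed WLOG).

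The main obstacle is the invariant in the $q=1$ recursion. We must verify that removing $R$ preserves $|B'|\ge|A'|$, non-isolation in $B'$, and nonemptiness, and that the final pair satisfies $N(\widehat{B})\subseteq\widehat{A}$ in the original graph and not merely in the subgraph. These all follow from the closure property $N(B\setminus R)\cap (R\cap A)=\emptyset$, which I would check first.
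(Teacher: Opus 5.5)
The paper does not prove this proposition; it quotes it from Cygan et al.\ and Thomass\'e. Your argument is essentially the standard textbook proof and is correct: duplicate each vertex of $A$ to pass to $q=1$, then delete the Hall-deficient set $R\cap A$ together with its neighbourhood and recurse.

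Two small repairs are needed.
\begin{enumerate}
\item The nonemptiness step is circular as written. The bound $|B'|\ge|A'|$ does not give $A'\neq\emptyset$, and $|A|>|R\cap A|$ is exactly the statement $A'\neq\emptyset$. Argue instead that $|B'|\ge|A|-|R\cap B|>|A|-|R\cap A|=|A'|\ge 0$, so $B'\neq\emptyset$; then $A'\neq\emptyset$ follows from non-isolation in $B'$ together with $N(B')\subseteq A'$.
\item The $\mathcal{O}(mn^{1.5})$ bound needs one more observation. The $M$-mate of every $a\in A\setminus R$ lies outside $R$, so $M$ restricted to $H-R$ already saturates $A'$ and the recursion stops after one round. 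A single Hopcroft--Karp run on $G_q$ then costs $\mathcal{O}(qm\sqrt{n})\subseteq\mathcal{O}(mn^{1.5})$, because $q\le|B|/|A|\le n$.
\end{enumerate}
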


Recently, Fomin et al. \cite{fomin2019subquadratic} designed the following generalization of the $q$-Expansion Lemma.

\begin{proposition}[New $q$-Expansion Lemma \cite{babu2024packing,fomin2019subquadratic, a16030144}]
{\label{lem:new_q-expansion}}
  Let $q$ be a positive integer and $G$ be a bipartite graph with vertex bipartition $(A, B)$. Then, there exist non-empty sets $\widehat{A}\subseteq A$ and $\widehat{B}\subseteq B$ such that
  \begin{enumerate}[(i)]
    \item\label{new-exp-1} There is a $q$-expansion $M$ of $\widehat{A}$ into $\widehat{B}$ in $G$.
    \item\label{new-exp-2} $N(\widehat{B})\subseteq \widehat{A}$.
    \item\label{new-exp-3} $|B \setminus \widehat{B}|\le q|A \setminus \widehat{A}|$.
  \end{enumerate}
  Furthermore, the sets $\widehat{A}, \widehat{B}$ and the $q$-expansion $M$ can be computed in polynomial-time.
\end{proposition}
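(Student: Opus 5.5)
The New $q$-Expansion Lemma (Proposition~\ref{lem:new_q-expansion}) can be derived from the classical $q$-Expansion Lemma (Proposition~\ref{lem:q-expansion}) by a max-flow/min-cut (equivalently, K\H{o}nig-type) argument on an auxiliary network. I would build the bipartite graph $G_q$ obtained by replacing each $a\in A$ by $q$ copies $a^1,\dots,a^q$, each adjacent to $N_G(a)$. A matching in $G_q$ saturating all copies of a set $\widehat A$ is exactly a $q$-expansion of $\widehat A$ into its image. Let $M^*$ be a maximum matching of $G_q$. Every $q$-expansion in the statement will be extracted from $M^*$, and the required sets from the alternating-path structure with respect to $M^*$.

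Concretely, I would follow the K\H{o}nig construction starting from $B$. Let $U\subseteq B$ be the vertices of $B$ unmatched by $M^*$, and let $R$ be the set of vertices reachable from $U$ by $M^*$-alternating paths. These paths start with a non-matching edge from $B$ to a copy and then use matching edges back to $B$. Set $\widehat B=R\cap B$, and let $\widehat A$ be the set of $a\in A$ some copy of which lies in $R$. I would then verify three facts.
\begin{itemize}
\item Any copy reached is matched, by maximality of $M^*$. Its partner lies in $\widehat B$, and a copy is reached if and only if one of its neighbours in $\widehat B$ is, which happens for all copies of $a$ simultaneously since they share neighbourhoods. So all $q$ copies of each $a\in\widehat A$ lie in $R$ and are matched into $\widehat B$. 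This gives (\ref{new-exp-1}).
\item By construction, every neighbour of $\widehat B$ has its copies reached, so $N(\widehat B)\subseteq\widehat A$. This gives (\ref{new-exp-2}).
\item Every vertex of $B\setminus\widehat B$ is matched, because all unmatched vertices of $B$ lie in $U\subseteq\widehat B$. Its partner is a copy outside $R$, hence a copy of some $a\in A\setminus\widehat A$. Therefore $|B\setminus\widehat B|\le q|A\setminus\widehat A|$, which is (\ref{new-exp-3}).
\end{itemize}

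Non-emptiness is the main obstacle. If $U=\emptyset$, the construction yields empty sets. Likewise, if $\widehat B$ contains only isolated vertices of $B$, then $\widehat A=\emptyset$. To handle this, first note that isolated vertices of $B$ can always be put into $\widehat B$ without harming (\ref{new-exp-2}). In the degenerate case where $B$ is perfectly matched into copies, I would take $\widehat A=A$ and $\widehat B=B$ when the whole of $A$ admits a $q$-expansion, and otherwise restrict to the set reached from unmatched copies. In the literature, the statement is in fact read with the convention that the sets may be empty in such trivial cases; the paper's intended use only needs (\ref{new-exp-3}) together with an expansion whenever $\widehat A\neq\emptyset$.

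The whole construction is a single bipartite matching plus a BFS, so it runs in polynomial time.
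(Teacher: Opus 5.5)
The paper gives no proof of this proposition; it is imported from the literature. Your core argument (a maximum matching $M^*$ in the graph $G_q$ with $q$ copies of each vertex of $A$, with $\widehat{B}$ and $\widehat{A}$ read off from $M^*$-alternating reachability out of the unmatched vertices of $B$) is the standard K\H{o}nig-type proof. It correctly yields (i)--(iii) in polynomial time. One step is worth spelling out: the case where the edge $ba^i$ with $b\in\widehat{B}$ lies in $M^*$. Then $b\notin U$, so $b$ can only have been reached through its partner $a^i$, whence $a^i\in R$. This is what makes ``all copies of $a$ are reached simultaneously'', and hence both (i) and (ii), correct.

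The genuine gap is non-emptiness, and it cannot be closed, because that part of the statement is false. Take $q=2$, $A=\{a\}$, $B=\{b\}$ with the single edge $ab$. The only non-empty candidate $\widehat{A}=\{a\}$ would need two distinct neighbours of $a$. On the other hand, $\widehat{A}=\emptyset$ with $\widehat{B}=\{b\}$ violates (ii). So only $\widehat{A}=\widehat{B}=\emptyset$ satisfies (i)--(iii), and your fallback of restricting ``to the set reached from unmatched copies'' cannot work. Even $|B|>q|A|$ does not save $\widehat{A}$: with $q=2$, $A=\{a\}$, $B=\{b_1,b_2,b_3\}$ and only the edge $ab_1$, every valid pair has $\widehat{A}=\emptyset$.

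So rather than offering a patch, you should state plainly that:
\begin{itemize}
\item the lemma holds with possibly empty sets, as in the cited sources;
\item $\widehat{B}\neq\emptyset$ follows from (iii) whenever $|B|>q|A|$;
\item $\widehat{A}\neq\emptyset$ needs extra hypotheses.
\end{itemize}
Under the classical hypotheses ($|B|\ge q|A|$, no isolated vertex in $B$, $A\neq\emptyset$), your construction does give non-empty sets. If $U\neq\emptyset$, a vertex of $U$ has a neighbour, which lands in $\widehat{A}$. If $U=\emptyset$, then $|B|=q|A|$, every copy is matched, and $(A,B)$ itself is a valid pair. This is precisely why the paper must establish $A\neq\emptyset$ separately, in item (iii) of \Cref{lemma:expansion-apply}, before \Cref{rule:expansion} can use the expansion.
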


Observe that \Cref{lem:new_q-expansion} omits two conditions required by \Cref{lem:q-expansion}: $B$ has no isolated vertex and $|B|\ge q|A|$. In particular, if $|B|>q|A|$, then it must be that $|\widehat{B}|>q|\widehat{A}|$ and $\widehat{B}$ contains some vertex not saturated by the $q$-expansion $M$.

\paragraph*{Forbidden Subgraph Characterization.}
It is well-known due to \cite{JACOB2023280} that a simple undirected graph is {\pitgraph} if and only if it does not contain any net, tent, and holes as induced subgraph, and it has no component that contains a claw and a triangle.
We call these graph structures {\em obstructions} for {\pitgraph}.
If the vertices of a claw $J$ and the vertices of a triangle $T$ has nonempty intersection, then we say that $(J, T)$ is a {\em pairwise-intersecting} {\ctpair}.

\paragraph*{Proper Interval Graphs.}
An \emph{interval graph} is an intersection graph of intervals on the real line.
The set of intervals corresponding to the vertices of an interval graph is called an \emph{interval representation} or \emph{interval model} of the graph which can be specified by a set of intervals $\{I_v:=[{\sf lp}_{v}, {\sf rp}_{v}] \mid v \in V(G)\}$ such that $I_u \cap I_v \neq \emptyset$ if and only if $uv \in E(G)$.
A \emph{proper interval graph} is an interval graph that has an interval representation in which no interval is properly contained in another interval.
A \emph{proper interval ordering} \cite{looges1993optimal} of a proper interval graph $G$ is a linear ordering of its vertices $v_1, v_2, \ldots, v_n$ such that for every $1 \leq i < j \leq n$, ${\sf lp}_{v_i} < {\sf lp}_{v_j}$ (and ${\sf rp}_{v_i} < {\sf rp}_{v_j}$ by definition of {\pig}).
It is well-known that given an undirected graph, there exists a polynomial-time algorithm that either outputs that the graph is not a proper interval graph or outputs a proper interval ordering in which every interval is of unit length.
Hence, we can assume that all the intervals representing a proper interval graphs are closed intervals of unit length.
The following proposition due to Cao et al. \cite{KE2018109} is well-known property of proper interval graphs.

\begin{proposition}[\cite{KE2018109}]
  {\label{prop:pig-clique}}
  Let $v_{1}, \dots, v_{n}$ be a proper interval ordering of a proper interval graph $G$.
  Then, for every $1 \leq i < j \leq n$, if $v_{i}v_{j} \in E(G)$, then $v_{i}, \dots, v_{j}$ is a clique.
\end{proposition}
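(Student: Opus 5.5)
The statement to prove is \Cref{prop:pig-clique}: if $v_i v_j \in E(G)$ with $i<j$ in a proper interval ordering, then $\{v_i,\dots,v_j\}$ is a clique. I will argue directly from the interval model. Fix a proper interval representation in which the ordering $v_1,\dots,v_n$ satisfies ${\sf lp}_{v_1}<\dots<{\sf lp}_{v_n}$ and ${\sf rp}_{v_1}<\dots<{\sf rp}_{v_n}$, as in the definition of a proper interval ordering.

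The first step is to pin down what the edge $v_iv_j$ means in the model. Since $i<j$, we have ${\sf lp}_{v_i}<{\sf lp}_{v_j}$, and $I_{v_i}\cap I_{v_j}\neq\emptyset$. Together these force ${\sf lp}_{v_j}\le {\sf rp}_{v_i}$.

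The second step handles an arbitrary pair $v_a,v_b$ with $i\le a<b\le j$. By monotonicity of right endpoints, ${\sf rp}_{v_a}\ge {\sf rp}_{v_i}$. By monotonicity of left endpoints, ${\sf lp}_{v_b}\le {\sf lp}_{v_j}$. Chaining these with the first step gives
\[
{\sf lp}_{v_a}<{\sf lp}_{v_b}\le {\sf lp}_{v_j}\le {\sf rp}_{v_i}\le {\sf rp}_{v_a}.
\]
So ${\sf lp}_{v_b}$ lies in $I_{v_a}$, the intervals intersect, and $v_av_b\in E(G)$. Since the pair was arbitrary, $\{v_i,\dots,v_j\}$ is a clique.

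There is no real obstacle here. The only point needing care is that the ordering must be taken with respect to a fixed representation in which both endpoint sequences are increasing. This is exactly what the definition provides, and the unit-length normalization stated earlier guarantees such a representation exists.
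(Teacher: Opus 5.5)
Your proof is correct. The edge $v_iv_j$, together with ${\sf lp}_{v_i}<{\sf lp}_{v_j}$ and ${\sf rp}_{v_i}<{\sf rp}_{v_j}$, forces ${\sf lp}_{v_j}\le {\sf rp}_{v_i}$. The chain ${\sf lp}_{v_a}<{\sf lp}_{v_b}\le {\sf lp}_{v_j}\le {\sf rp}_{v_i}\le {\sf rp}_{v_a}$ then puts ${\sf lp}_{v_b}$ inside $I_{v_a}$ for every $i\le a<b\le j$.

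The paper gives no proof of this proposition; it only cites Cao et al., so there is no internal argument to compare against. Your endpoint argument is a complete, self-contained derivation from the paper's own geometric definition of a proper interval ordering.

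One small correction to your closing remark: monotonicity of the right endpoints does not need the unit-length normalization, only properness of the representation. If ${\sf lp}_{v_a}<{\sf lp}_{v_b}$ but ${\sf rp}_{v_a}\ge {\sf rp}_{v_b}$, then $I_{v_b}$ would be properly contained in $I_{v_a}$. Unit length just makes this immediate, since then ${\sf rp}={\sf lp}+1$.
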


\paragraph*{Construction of Clique Partition.}
Let $G$ be a proper interval graph, and we fix a proper interval ordering $\mathcal{V} :=\{v_1, v_2, \dots, v_n\}$ of {\pig} $G$.
We partition the vertex set of $G$ into a sequence of cliques as follows.
We start with the first vertex $v_1 \in \mathcal{V}$ and put all the vertices adjacent to $v_1$ (including $v_1$) in the first clique $K_1$, i.e., $K_1 =\{v_1, \dots, v_i\}$ such that ${\sf lp}_{v_i} \in I_{v_1}$ and ${\sf lp}_{v_{i+1}}\notin I_{v_1}$ where $v_{i+1}\in \mathcal{V}$.
Next, we remove all the vertices of $K_1$ from $\mathcal{V}$ and repeat the same procedure to form the next clique $K_2$ from the remaining vertices.
We continue this process until all the vertices of $G$ are assigned to some clique.
Let $\cKK := \{K_1, K_2, \dots, K_t\}$ be the sequence of cliques obtained from this process.
{Note that the indices of the cliques in the sequence $\cKK$ correspond to their order not the number of vertices in the cliques.}
{We say that $\cKK$ is a {\em clique partition} of $G$.}
There are some important properties of this sequence of cliques $\cKK$ as stated in the following propositions.

\begin{proposition}[\cite{KE2018109}, Proposition 2.2]
  {\label{prop:pig-clique-nbd}}
  Let $\cKK = \{K_1, K_2, \dots, K_t\}$ be a clique partition of a proper interval graph $G$ obtained from a proper interval ordering $\mathcal{V}$ of $G$.
  Then, for each $1 < i < t$, $N(K_i)\subset K_{i-1}\cup K_{i+1}$. Moreover, $N(K_1)\subseteq K_2$ and $N(K_t)\subseteq K_{t-1}$.
\end{proposition}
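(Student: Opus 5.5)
The claim to prove is the last statement, Proposition (KE2018109, Prop.~2.2): for $1<i<t$ we have $N(K_i)\subseteq K_{i-1}\cup K_{i+1}$, with the analogous statements for $K_1$ and $K_t$. The plan is to use the unit-length interval model together with the greedy construction. The key step is to show that every clique $K_j$ spans a window of length at most one. If $K_j$ starts at vertex $v_a$, then every $v_b\in K_j$ satisfies ${\sf lp}_{v_a}\le {\sf lp}_{v_b}\le {\sf rp}_{v_a}={\sf lp}_{v_a}+1$. Conversely, the first vertex of $K_{j+1}$ has left endpoint strictly greater than ${\sf lp}_{v_a}+1$.

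\textbf{Step 1: consecutive cliques are well separated.} Let $a_j$ denote the index of the first vertex of $K_j$, and write $s_j={\sf lp}_{v_{a_j}}$. By construction $s_{j+1}>s_j+1$, and all left endpoints of $K_j$ lie in $[s_j,s_j+1]$. Because the cliques are contiguous blocks of the ordering, the left endpoints of $K_j$ all lie in $[s_j,s_{j+1})$. Combining these facts, $s_{j+2}>s_{j+1}+1>s_j+2$.

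\textbf{Step 2: adjacency forces distance at most one clique.} Take $u\in K_i$ and a neighbour $w\in K_j$ with $j>i$. Since both intervals have unit length, adjacency means $|{\sf lp}_u-{\sf lp}_w|\le 1$. Now ${\sf lp}_u\le s_i+1<s_{i+1}$, so ${\sf lp}_w\le {\sf lp}_u+1\le s_i+2<s_{i+2}$. As $w\in K_j$ implies ${\sf lp}_w\ge s_j$, and the $s$-values increase, we get $j\le i+1$. The case $j<i$ is symmetric: apply the same argument with the roles of $i$ and $j$ swapped. Hence $N(K_i)\setminus K_i\subseteq K_{i-1}\cup K_{i+1}$.

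\textbf{Step 3: boundary cases.} The statement for $K_1$ follows because there is no $K_0$, and the statement for $K_t$ follows because there is no $K_{t+1}$.

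\textbf{Possible obstacle.} The only delicate point is whether $N(K_i)$ is meant to be the open neighbourhood of the set. If it were taken as $\bigcup_{v\in K_i} N(v)$, it could include vertices of $K_i$ itself. I will interpret $N(K_i)$ as $\bigcup_{v\in K_i}N(v)\setminus K_i$, which makes the claimed inclusion exact. I will also rely on the assumed unit-length representation; proper-interval orderings with general lengths would require Proposition~\ref{prop:pig-clique} instead.
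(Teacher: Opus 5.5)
Your proof is correct. The paper gives no proof of this statement; it imports it from Ke et al.\ and only notes in the preliminaries that one may assume a closed unit-length model. Under exactly that assumption your argument is a sound self-contained justification:
\begin{itemize}
\item Step~1 correctly gives $s_{j+1}>s_j+1$, hence $s_{i+2}>s_i+2$.
\item Step~2 correctly gives ${\sf lp}_w\le {\sf lp}_u+1\le s_i+2<s_{i+2}$, which forces $j\le i+1$.
\item Reading $N(K_i)$ as the open neighbourhood is the standard convention.
\end{itemize}

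Two minor remarks. First, your closing caveat is unnecessary. For a proper model of arbitrary lengths, the same argument works using only the monotonicity of right endpoints, which is built into the paper's definition of a proper interval ordering. For $u\in K_i$ adjacent to $w\in K_j$ with $j>i$, adjacency and ${\sf lp}_u<{\sf lp}_w$ give ${\sf lp}_w\le {\sf rp}_u<{\sf rp}_{v_{a_{i+1}}}<{\sf lp}_{v_{a_{i+2}}}$. Here the last inequality is the greedy construction of $K_{i+2}$, so $j\le i+1$. Second, your Step~1 also yields the strict inclusion $\subset$ that the paper writes for $1<i<t$. The first vertex of $K_{i-1}$ has right endpoint $s_{i-1}+1<s_i$, so it never lies in $N(K_i)$.
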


Now, we define one of the crucial techniques for our problem.
Consider a clique partition $\cKK = \{K_1, K_2,\ldots,K_t\}$ of a proper interval graph.
Suppose that there is an $\ell$ such that $2 \leq \ell \leq t-1$ and $N(K_{\ell}) \cap K_{\ell-1}, N(K_{\ell}) \cap K_{\ell + 1} \neq \emptyset$.
We borrow the concept of a graph operation from \cite{KE2018109} as follows.
We say that we {\em bypass} $K_{\ell}$ if we remove $K_{\ell}$ from the graph, and add edges between every pair of vertices $u\in N(K_{\ell})\cap K_{\ell -1}$ and $v\in N(K_{\ell})\cap K_{{\ell}+1}$.
It is well-known due to \cite{KE2018109} that if a clique from a proper interval graph is bypassed, then the resulting graph is also a proper interval graph. 

The following two lemma statements ensure us that there are two graph operations under which {\pig}s are closed.

\begin{lemma}
  \label{lemma:degree-2-tail-attach-to-pendant}
  Let $G$ be a simple {\pitgraph}, $v$ be a pendant vertex in $G$, and we attach a degree-2-tail $P = (v, u_1, u_2, \ldots, u_{\ell})$ at $v$.
  We obtain $\widehat{G}$ by attaching $P$ at $v$.
  Then $\widehat{G}$ is a simple {\pitgraph}.
\end{lemma}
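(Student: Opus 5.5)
Only the component $C$ of $G$ containing $v$ changes: $\widehat{G}$ has the same components as $G$, except that $C$ is replaced by $\widehat{C} := C \cup \{u_1,\ldots,u_\ell\}$. The new vertices together with the new edges $vu_1, u_1u_2,\ldots,u_{\ell-1}u_\ell$ add no parallel edges, so $\widehat{G}$ is simple. The whole argument therefore reduces to showing that $\widehat{C}$ is a tree or a proper interval graph. I split according to which of the two $C$ is; if it is both, either argument works.

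\emph{Case 1: $C$ is a tree.} Attaching a path at a single vertex creates no cycle and keeps the graph connected, so $\widehat{C}$ is a tree.

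\emph{Case 2: $C$ is a proper interval graph.} I build a proper interval model directly. Fix a unit-interval representation of $C$ with proper interval ordering $w_1,\dots,w_n$. Since $v$ is pendant, it has exactly one neighbour $x$; if $C$ is just the edge $vx$, then $\widehat{C}$ is a path and we are done. Otherwise I first argue that $v$ is an end of some proper interval ordering, say $v=w_n$. Suppose $v=w_i$ with $1<i<n$. Because $C$ is connected, there is an edge $w_aw_b$ with $a<i<b$ (for instance on a path from $w_1$ to $w_n$). By \Cref{prop:pig-clique}, $w_a,\dots,w_b$ is then a clique containing $v$, so $v$ has degree at least $2$, a contradiction.

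So $v=w_n$, and its closed unit interval $I_v=[{\sf lp}_v,{\sf lp}_v+1]$ meets only $I_x$. I then place the unit intervals $I_{u_j}$ starting at ${\sf lp}_v+j(1+\varepsilon)-\varepsilon'$, choosing small constants so that three conditions hold:
\begin{itemize}
\item consecutive intervals $I_{u_{j-1}}$ and $I_{u_j}$ overlap, where $u_0:=v$;
\item non-consecutive intervals are disjoint;
\item no $I_{u_j}$ reaches back to any interval other than $I_v$.
\end{itemize}
The last condition holds because every other interval has right endpoint at most ${\sf rp}_v$, and we can place $I_{u_1}$ to start strictly after all of these. More simply, let $r^* = \max\{{\sf rp}_w : w\neq v\} < {\sf rp}_v$ and set ${\sf lp}_{u_1}\in(r^*,{\sf rp}_v]$. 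Subsequent intervals are each shifted by slightly less than $1$ but more than $1/2$, which makes $u_{j}$ and $u_{j+2}$ disjoint. All intervals have unit length, so none properly contains another, and the resulting model is a proper interval representation of $\widehat{C}$.

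The main obstacle is the degenerate bookkeeping, not the geometry. One must handle the small case where $C$ is a single edge, and one must ensure the strict inequality $r^*<{\sf rp}_v$, which holds because $w_n$ has the largest right endpoint and ties can be broken by perturbation. An alternative is to avoid coordinates entirely and use the forbidden-subgraph characterization of proper interval graphs (claw-free, net-free, tent-free, hole-free). A new induced claw would need a centre with three pairwise non-adjacent neighbours. The only changed adjacency is at $v$, and $v$ now has neighbours $x$ and $u_1$ only, so no vertex gains enough neighbours. Nets, tents and holes are 2-connected or contain no pendant paths through a degree-$2$ cut structure of this kind, so any copy would live in $C$. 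I would present the interval-model argument as the primary proof.
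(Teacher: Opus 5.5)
Your proof is correct up to two easily fixed slips in the interval placement, but your primary route differs from the paper's. The paper argues only through the forbidden-subgraph characterization. In $\widehat{G}$ the vertices $v,u_1,\dots,u_\ell$ have degree at most $2$ and lie on no cycle. Hence no claw can pass through a new vertex $u_j$, since its centre would be $u_j$ or one of $v,u_{j\pm1}$, all of degree at most $2$. Tents and holes are excluded because they need cycles, and a net is excluded because a net pendant needs a degree-$3$ neighbour on a triangle. So every obstruction would already lie in $C$.

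That is essentially your ``alternative'', but your wording there is inaccurate: nets are not $2$-connected, so the pendant/triangle reason above is the one to give. Your main proof instead extends a unit-interval model. You first use \Cref{prop:pig-clique} to show that a pendant vertex is an end of every proper interval ordering. Say explicitly that the $w_1$--$w_n$ path avoids $v$, because a pendant vertex cannot be internal on a path. This gives a constructive, self-contained argument that shows exactly why pendancy matters: the tail is appended past the last interval. The paper's proof is shorter but relies on the (claw, net, tent, hole)-free characterization and a case check over obstruction types.

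Two details of your placement need fixing.
\begin{enumerate}
\item The formula ${\sf lp}_v+j(1+\varepsilon)-\varepsilon'$ puts consecutive left endpoints more than $1$ apart, so consecutive intervals would be disjoint. Drop it.
\item In the ``more simply'' version, a shift $s\in(1/2,1)$ does not by itself keep $I_{u_2}$ off $I_v$. If ${\sf lp}_{u_1}$ is close to ${\sf lp}_v$, then ${\sf lp}_{u_1}+s<{\sf rp}_v$ is possible, and $u_2$ would become adjacent to $v$.
\end{enumerate}
The cleanest choice is ${\sf lp}_{u_j}={\sf rp}_v+(j-1)$. Closed unit intervals then meet exactly their consecutive neighbours, and $I_{u_1}$ meets $I_v$ only at ${\sf rp}_v$. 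No $I_{u_j}$ meets any other original interval, because those all end by $r^*<{\sf rp}_v$.
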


\begin{figure}[ht]
  \centering
  \label{fig:attaching_tail}
  \includegraphics[scale=0.6]{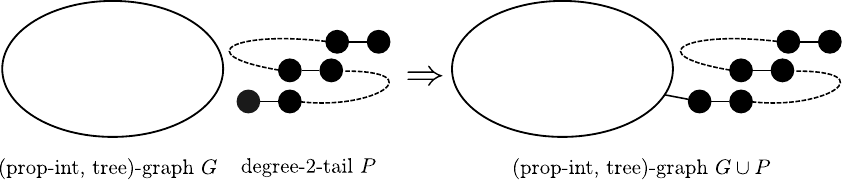}
  \caption{Schematic Diagram of \Cref{lemma:degree-2-tail-attach-to-pendant}}
\end{figure}

\begin{proof}
Suppose that the premise of the statement is true.
We obtain $\widehat{G}$ by attaching a degree-2-tail at a pendant vertex $v$ of $G$.
It is sufficient to prove that $\widehat{G}$ is simple graph every connected component of $\widehat{G}$ is a proper interval graph or a tree.
First, observe that this graph operation does not create any self-loop.
Note that $G$ is a simple graph, and the degree-2-tail $P$ is a simple graph.
The edges of $G$ and the edges of $P$ are pairwise disjoint.
Therefore, attaching a degree-2-tail to a vertex does not create any parallel edge or self loop.
{Hence, $\widehat{G}$ does not contain a self loop.}

Now, consider the connected component $C$ to which this degree-2-tail $P$ is attached.
If $C$ is a tree, then note that $P$ is a tree on its own, and attaching $P$ into a tree $C$ does not create any cycle.
Hence, the newly constructed connected component is also a tree.
In case $C$ is a proper interval graph, then what is crucial is that the vertex $v$ to which the degree-2-tail is attached to is a pendant vertex.
As $v$ is a pendant vertex of $G$, let $u$ be the unique neighbor of $v$ in $G$.
Observe that, after construction of $\widehat{G}$, $v$ is a degree-2-vertex in $\widehat{G}$ and is not part of any cycle in $\widehat{G}$.
Hence, there does not exist any tent or net in $\widehat{G}$ containing $v$ and a subset of $P$.
Furthermore, observe that degree of $v$ and any vertex $u_i$ of $P$ is at most 2.
Hence, there cannot exist a claw containing $v$ and some vertex of $P$.
Hence, the connected component constructed by attaching $P$ to a pendant vertex of $C$ is a proper interval graph.
This completes the proof of the lemma.
\end{proof}

\begin{lemma}
  \label{lem:subdiving-edges-pig}
  Let $G$ be a simple {\pig}, and $P=\{v_1, v_2, v_3, v_4\}$ be an induced path in $G$.
  Let $G'$ be the graph obtained from $G$ by subdividing the edge $v_2 v_3$ by new vertices $Z=\{u_1,\dots, u_{\ell}\}$ such that $\ell \ge 1$ and $P'=\{v_1, v_2, u_1, \dots, u_{\ell}, v_3, v_4\}$ is an induced path in $G'$.
  Then, $G'$ is also a simple {\pig}. 
\end{lemma}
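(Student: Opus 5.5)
The plan is to work directly with a unit interval representation of $G$ and build one for $G'$ explicitly. First I fix a proper interval ordering $\mathcal{V}$ of $G$, which exists since $G$ is a simple {\pig}. Because $P=\{v_1,v_2,v_3,v_4\}$ is an induced path, \Cref{prop:pig-clique} forces $v_1,v_2,v_3,v_4$ to appear monotonically in $\mathcal{V}$. Reversing $\mathcal{V}$ if needed, I take $v_1 \uv v_2 \uv v_3 \uv v_4$. Indeed, if $v_1$ lay between $v_2$ and $v_3$, then the edge $v_2v_3$ together with \Cref{prop:pig-clique} would make $v_1v_3$ an edge, contradicting inducedness. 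The same argument applies to the other placements.

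Next I split $\mathcal{V}$ at the edge $v_2v_3$ into a left part $L=\{u : u \uv v_3\}$, which contains $v_1,v_2$, and a right part $R=\{u : v_2 \uv u\}\setminus L$, which contains $v_3,v_4$. In $G'$ the edge $v_2v_3$ is gone and the path $v_2,u_1,\dots,u_\ell,v_3$ is inserted. The hypothesis that $P'$ is an induced path in $G'$ says that the new vertices $u_i$ are adjacent only to their path neighbours. The construction is then: keep the unit intervals of $L$, translate every interval of $R$ to the right by a large amount $D$, and place $\ell$ unit intervals for $u_1,\dots,u_\ell$ in the resulting gap. These intervals are chained so that $I_{u_1}$ meets only $I_{v_2}$ and $I_{u_2}$, each $I_{u_i}$ meets only $I_{u_{i-1}}$ and $I_{u_{i+1}}$, and $I_{u_\ell}$ meets only $I_{u_{\ell-1}}$ and $I_{v_3}$. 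Since all intervals have unit length, none is properly contained in another. Provided the adjacencies are correct, this realizes $G'$ as a proper interval graph, and $G'$ is simple because subdivision adds no parallel edges or loops.

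The main obstacle is verifying that the translation destroys exactly the edge $v_2v_3$ and no other edge between $L$ and $R$. Every edge $xy$ with $x\in L$, $y\in R$ spans the cut in $\mathcal{V}$. By \Cref{prop:pig-clique}, such an edge forces $x,\dots,y$ to form a clique, and that clique contains both $v_2$ and $v_3$. So the key step is to show that the inducedness of $P'$ in $G'$, together with $G'$ being obtained by subdividing only $v_2v_3$, leaves no crossing edge other than $v_2v_3$ itself. I would argue this by taking a crossing edge $xy\neq v_2v_3$ and using the clique from \Cref{prop:pig-clique} to exhibit a vertex adjacent to both $v_2$ and $v_3$. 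In $G'$ this vertex would close a cycle with the subdivided path. I would then compare against the hypotheses on $P'$: first check that such a vertex yields a chord or an extra adjacency on $P'$, and if it does not, treat the common-neighbour configuration as excluded by the paper's intended hypothesis and verify it separately.

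Once the cut contains only $v_2v_3$, the remaining checks are routine. The intervals inside $L$ and inside $R$ keep their mutual intersections because translation preserves them. The chain of $u_i$'s produces exactly the path edges, which yields the unit interval model of $G'$.
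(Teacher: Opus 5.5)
\textbf{The key step fails, and it cannot be repaired from the stated hypotheses, because the lemma as stated is false.} You observe correctly that an edge $xy\neq v_2v_3$ crosses the cut between $L$ and $R$ exactly when $v_2$ and $v_3$ have a common neighbour $w$. Such a $w$ can never lie on $P'$. The $u_i$ are new vertices, and $w\in\{v_1,v_4\}$ would give the chord $v_1v_3$ or $v_2v_4$ in $P$. So the hypothesis that $P'$ is induced says nothing about $w$, and your ``first check'' necessarily fails. In that case $w,v_2,u_1,\dots,u_\ell,v_3$ is an induced cycle of length $\ell+3\ge 4$ in $G'$.

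For a concrete counterexample, take the bull: $V(G)=\{v_1,v_2,w,v_3,v_4\}$ with edges $v_1v_2,v_2w,v_2v_3,wv_3,v_3v_4$. It is realised by the unit intervals $[0,1],[0.9,1.9],[1.5,2.5],[1.8,2.8],[2.7,3.7]$ for $v_1,v_2,w,v_3,v_4$ respectively. Here $v_1v_2v_3v_4$ is an induced path, and subdividing $v_2v_3$ once keeps $P'$ induced, yet $wv_2u_1v_3$ is a $C_4$ in $G'$. Your fallback of treating this configuration ``as excluded by the intended hypothesis'' is therefore an extra assumption, not a proof step.

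The paper's own proof has the same gap. It asserts that every hole meeting $Z$ contains all of $V(P)$, and that $v_2,v_3$ have degree $2$ in $G'$; neither follows from the stated hypotheses. Contracting the hole above only returns the triangle $wv_2v_3$.

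What is true is the lemma under the additional assumption that $v_2$ and $v_3$ have no common neighbour, i.e.\ that $v_2v_3$ lies in no triangle. This is all that the only application needs: in the safeness proof of \Cref{rule:degree-2-overbridge}, the vertices playing the roles of $v_2$ and $v_3$ have degree exactly $2$. Under that assumption your construction is a correct proof, and a genuinely different one from the paper's forbidden-subgraph case analysis.

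By \Cref{prop:pig-clique}, no vertex lies strictly between $v_2$ and $v_3$ in $\mathcal{V}$. So $v_2$ is the last vertex of $L$, $v_3$ is the first of $R$, and $v_2v_3$ is the only edge between $L$ and $R$ (it is a bridge). Hence $I_{v_2}$ has the largest right endpoint in $L$ and $I_{v_3}$ the smallest left endpoint in $R$. This is exactly what lets you attach $I_{u_1}$ to $I_{v_2}$ alone and $I_{u_\ell}$ to the translate of $I_{v_3}$ alone. You should state this explicitly, since your proposal asserts the chain placement without justifying it. Note also that $v_1$ and $v_4$ play no role in the corrected argument.
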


\begin{proof}
  For the sake of contradiction, suppose that $G'$ is not a simple {\pig}.
  Then, $G'$ contains an induced cycle, a claw, a net, or a tent as an induced subgraph which intersects $Z$.
  Note that $d_{G'}(u_i)=2$ for all $i \in \{1,\dots,\ell\}$ and $G'[Z]$ induces a path.
  First, we consider the case when there is an induced cycle $C_i$ with $i \ge 4$ in $G'$ such that $C_i$ intersects $Z$.
  Then, $C_i$ must contain all vertices of $Z$ and $V(P)\subseteq C_i$.
  By contracting the edges of $P'$ one by one such that we get back the edge $v_2 v_3$, we obtain an induced cycle $C_i'=(C_i\setminus Z)\cup \{v_2v_3\}$ in $G$.
  In particular, $C_i'$ contains all vertices $v_1, v_2, v_3, v_4$.
  This contradicts that $G$ is a simple {\pig}.
  
  Next, we consider the case when $G'$ has a net or a tent or a claw.
  Note that the claws and nets of $G'$ cannot contain any vertex from $Z$ as every vertex of $Z$ and $N_{G'}(Z) (= \{v_2, v_3\})$ is of degree exactly 2 in $G'$.
  Now, consider that there is a tent $T$ in $G'$ such that $T$ intersects $Z$.
  Let's say that $u_j \in V(T) \cap Z$ for some $j \in \{1,\dots, \ell\}$.
  As $d_{G'}(u_j)=2$ and the only possible degree for a vertex in a tent is $2$ or $4$, $u_j$ must be a degree-$2$ vertex in $T$.
  It implies that both neighbors of $u_j$ in $G'$, i.e., $u_{j-1}$ and $u_{j+1}$ (or $v_2$ if $j=1$ and $v_3$ if $j=\ell$) must be in $T$.
  By the structure of a tent, the neighbor of a degree-$2$-vertex is a degree-$4$-vertex.
  Since no vertex of $Z \cup \{v_2, v_3\}$ has degree at least 4, therefore, no tent can also intersect $Z$.
  
  Finally, observe that the construction of $G'$ does not create any double-edge in a simple graph $G$.
  Therefore, $G'$ is a simple {\pig}.
\end{proof}

\paragraph*{Parameterized Complexity and Kernelization.} A parameterized problem $L$ is a set of instances
$(x, k) \in \Sigma^{*}\times \mathbb{N}$ where $\Sigma$ is a finite alphabet and $k \in \mathbb{N}$ is the parameter.
The notion of `tractability' for a parameterized decision problem is defined as follows:
\begin{definition}[Fixed-Parameter Tractability]
  A parameterized problem $L \subseteq \Sigma^* \times \mathbb{N}$ is {\em fixed-parameter tractable} (or {\em FPT} in short) if there is an algorithm $\mathcal{A}$ that given $(x, k) \in \Sigma^{*}\times \mathbb{N}$, correctly decides if $(x, k) \in L$ in $f(k)|x|^{\mathcal{O}(1)}$-time for some computable function $f : \mathbb{N}\rightarrow \mathbb{N}$.
  This algorithm $\mathcal{A}$ is called a {\em fixed-parameter algorithm} (or {\em FPT algorithm} in short) for the problem $L$.
\end{definition}

Observe that we allow combinatorial explosion in the parameter $k$ while the algorithm runs in polynomial-time with respect to $|x|$.
We say that two instances $(x, k)$ of $L$ and $(x', k')$ of $L$ are equivalent when $(x, k) \in L$ if and only if $(x', k') \in L$.
The notion of kernelization (preprocessing) for in the area of parameterized complexity is defined as follows.

\begin{definition}[Kernelization]
  A parameterized problem $L \subseteq \Sigma^{*} \times \mathbb{N}$ is said to admit a kernelization if there exists an algorithm that, given an instance $(x, k)$ of $L$, outputs in time polynomial in $|x| + k$ an equivalent instance $(x', k')$ of $L$ such that $|x'| + k' \leq g(k)$ for some computable function $g : \mathbb{N} \rightarrow \mathbb{N}$.
  The instance $(x', k')$ is called a kernel for the instance $(x, k)$.
  If $g(k)$ is $k^{\mathcal{O}(1)}$, then we say that $L$ admits a {\em polynomial kernel}.  
\end{definition}

A kernelization algorithm usually consists of a collection of reduction rules that have to be applied exhaustively.
A reduction rule is {\em safe} if given an instance $(x, k)$ of $L$, one application of the reduction rule outputs an equivalent instance $(x', k')$ of $L$. 
It is well-known due to \cite{downey2013fundamentals,cygan2015parameterized,fomin2019subquadratic} that a decidable parameterized problem is FPT if and only if it admits a kernelization.

\section{Polynomial Kernel for {\PITVD}}
\label{sec:pitvd-kernel}
This section is devoted to the proof of our main result.
Let $(G, k)$ be the input instance.
We assume without loss of generality that $G$ has no self-loop, as otherwise, we can always remove them.
Our first step is to invoke a polynomial-time algorithm by Jacob et al. \cite{JACOB2023280} that outputs a set $\widehat{S}$ such that $G - \widehat{S}$ is a {\pitgraph} and $S$ is a 7-approximate solution.
Clearly, if $|\widehat{S}| > 7k$, we can immediately output that $(G, k)$ is a no-instance.
Hence, we can assume that $|\widehat{S}| \leq 7k$.

Now, with the help of this 7-approximate solution $\widehat{S}$ and \Cref{prop:sunflower-apply}, we first focus on the small obstructions, i.e., nets, tents, and induced cycles $C_\ell$, $4\le \ell \le 6$ to obtain a set ${S}$ such that $G - {S}$ is a {\pitgraph} and $\widehat{S} \subseteq S$.
It allows us to forget about the small obstructions of size at most $6$ and focus on the large obstructions like {\ctpair s} and holes of size at least $7$ in the next step.
Let us define a new family $\mathcal{F}$ of obstruction such that it contains net, tent, and holes $C_4$, $C_5$ and $C_6$.
Our next lemma prepares us a set $S$ such that $\widehat{S} \subseteq {S}$ and ensures that it is sufficient to intersect $C_4, C_5, C_6$, nets, and tents in $G[S]$ only.

\begin{lemma}
  \label{lem:small-obstruction}
  Let $(G,k)$ be an instance to {\PITVD}.
  Then there is a polynomial time algorithm that either concludes that $(G, k)$ is a no-instance for {\PITVD}, or
  finds a non-empty set $S \subseteq V(G)$ such that
  \begin{itemize}
    \item $G - S$ has no net, tent, and induced cycles $C_\ell$, $4\le \ell \le 6$;
    \item Every set $Z \subseteq V(G)$ of size at most $k$ is a minimal hitting set for nets, tents, and induced cycles $C_\ell$, $4\le \ell \le 6$, in $G$ if and only if it is a minimal hitting set for nets, tents, and induced cycles $C_\ell$, $4\le \ell \le 6$, contained in $G[S]$; and
    \item $|S| \le 6\cdot 6!(k+1)^6 + 7k$.
  \end{itemize}
\end{lemma}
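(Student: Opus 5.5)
We model the small obstructions as a hitting-set instance and shrink it with \Cref{prop:sunflower-apply}. Let $\mathcal{F}$ be the family of vertex sets $V(H)$ such that $H$ is an induced net, tent, or induced cycle $C_\ell$ with $4\le\ell\le 6$ in $G$. Every member has at most $6$ vertices, so $d=6$ in \Cref{prop:sunflower-apply}. Since each member has at most $6$ vertices, $\mathcal{F}$ can be enumerated in $\cOO(n^6)$ time by brute force over all $6$-subsets. We then invoke \Cref{prop:sunflower-apply} with universe $V(G)$, family $\mathcal{F}$ and parameter $k$. This returns $\mathcal{F}'\subseteq\mathcal{F}$ with $|\mathcal{F}'|\le 6!(k+1)^6$ such that every $Z$ with $|Z|\le k$ is a minimal hitting set of $\mathcal{F}$ if and only if it is one of $\mathcal{F}'$.

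Next we build the set
\[
S:=\widehat{S}\cup\bigcup_{F\in\mathcal{F}'}F .
\]
Here $\widehat{S}$ is the 7-approximate solution with $|\widehat{S}|\le 7k$; if that algorithm returns a larger set, we conclude that $(G,k)$ is a no-instance. The size bound $|S|\le 6\cdot 6!(k+1)^6+7k$ is then immediate. The first bullet is also immediate: $G-\widehat{S}$ is a {\pitgraph}, hence has no net, tent or hole, and $G-S$ is an induced subgraph of $G-\widehat{S}$. If $\mathcal{F}$ is empty, $S=\widehat{S}$; if moreover $\widehat{S}=\emptyset$, the input graph is already a solution with $X=\emptyset$, and we may take $S$ to be any single vertex. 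So $S$ can be assumed non-empty.

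The remaining step, and the only subtle one, is the second bullet: we must pass from ``minimal hitting set of $\mathcal{F}'$'' to ``minimal hitting set of the obstructions contained in $G[S]$''. Let $\mathcal{F}_S$ be the family of members of $\mathcal{F}$ contained in $S$. We have $\mathcal{F}'\subseteq\mathcal{F}_S\subseteq\mathcal{F}$, because every member of $\mathcal{F}'$ lies in $S$. We also use that induced nets, tents and short holes of $G[S]$ are exactly the members of $\mathcal{F}$ inside $S$, since inducedness is preserved when passing to $G[S]$.

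For the forward direction, let $Z$ with $|Z|\le k$ be a minimal hitting set of $\mathcal{F}$. Then $Z$ hits $\mathcal{F}_S$. For minimality, use that $Z$ is also a minimal hitting set of $\mathcal{F}'$: for each $z\in Z$ some $F\in\mathcal{F}'\subseteq\mathcal{F}_S$ is hit only by $z$, so no proper subset of $Z$ hits $\mathcal{F}_S$.

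For the converse, let $Z$ with $|Z|\le k$ be a minimal hitting set of $\mathcal{F}_S$. We first want $Z\subseteq S$, which holds because any vertex outside $S$ hits nothing in $\mathcal{F}_S$ and would contradict minimality. Then $Z$ hits $\mathcal{F}'$. It remains to check that $Z$ is a minimal hitting set of $\mathcal{F}'$, after which \Cref{prop:sunflower-apply} gives that $Z$ is a minimal hitting set of $\mathcal{F}$. Take a minimal subset $Z'\subseteq Z$ hitting $\mathcal{F}'$. By \Cref{prop:sunflower-apply}, $Z'$ is a minimal hitting set of $\mathcal{F}$, so it hits $\mathcal{F}_S$, and minimality of $Z$ for $\mathcal{F}_S$ forces $Z'=Z$.

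I expect this last sandwich argument, $\mathcal{F}'\subseteq\mathcal{F}_S\subseteq\mathcal{F}$, to be the main point needing care; the rest is bookkeeping.
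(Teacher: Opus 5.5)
Your proposal is correct and follows the paper's route: run the 7-approximation to get $\widehat{S}$, apply \Cref{prop:sunflower-apply} to the family of nets, tents and holes on at most six vertices, and take $S$ to be $\widehat{S}$ together with the vertices of the reduced family $\mathcal{F}'$. Your sandwich argument $\mathcal{F}'\subseteq\mathcal{F}_S\subseteq\mathcal{F}$ supplies the justification for the second bullet that the paper only asserts. Deriving the first bullet from $\widehat{S}$ is also the right move, since the sunflower family alone does not guarantee that $G-\bigcup_{F\in\mathcal{F}'}F$ is free of these obstructions.
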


\begin{proof}
  In the first step, we invoke the 7-approximation algorithm by Jacob et al. \cite{JACOB2023280} and obtain a vertex subset $\widehat{S}$ such that $G - \widehat{S}$ is a {\pitgraph}.
  If $|\widehat{S}| > 7k$, then we output that $(G, k)$ is a no-instance.
  Otherwise, $|\widehat{S}| \leq 7k$.
  Next, we apply \Cref{prop:sunflower-apply} on $\mathcal{F}$, and it outputs a set $\mathcal{F}'$ such that
  \begin{itemize}
      \item $X$ is a minimal hitting set of $\mathcal{F}$ of size at most $k$ if and only if $X$ is a minimal hitting set of $\mathcal{F}'$ of size at most $k$; and
      \item $|\mathcal{F}'|\le 6!(k+1)^6$.
  \end{itemize}
  Let $S^* = \bigcup\limits_{F \in \mathcal{F}'} V(F)$.
  Clearly, $|S^*|\le 6\cdot 6!(k+1)^6$.
  It implies that every set $Z \subseteq V(G)$ of size at most $k$ is a minimal hitting set for nets, tents, and induced cycles $C_\ell$, $4\le \ell \le 6$, in $G$ if and only if it is a minimal hitting set for nets, tents, and induced cycles $C_\ell$, $4\le \ell \le 6$, contained in $G[\widehat{S}]$.
  Then, $G- S^*$ does not contain any net, tent, and induced cycles $C_\ell$, $4\le \ell \le 6$.
  We set $S = S^* \cup \widehat{S}$.
  Observe that $G - S$ is a {\pitgraph} and it satisfies the properties that hitting every net, tent, and holes with at most 6 vertices of $G[\widehat{S}]$ is good enough.
  This completes the proof of the lemma, and we have that $|S| \leq 6\cdot 6!(k+1)^6 + 7k$.
\end{proof}

From now onward we work with the set $S$ that we obtained from the above lemma.
Observe that $|S|$ is $\cOO(k^6)$ and $G - S$ is a {\pitgraph}.

We use the following observation throughout the paper the proof of which is trivial.
\begin{observation}{\label{obs:reduction}}
  For any subset $Z\subseteq V(G)$, if $(G, k)$ is a yes-instance for PITVD problem with solution $X$, then $(G - Z, k)$ is a yes-instance for PITVD problem with solution $X \setminus Z$.
\end{observation}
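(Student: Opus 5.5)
The claim is monotonicity under vertex deletion. The plan is to take the solution $X$ for $(G,k)$ and show that $X' := X\setminus Z$ is a solution for $(G-Z,k)$. There are three things to verify: the size bound, simplicity of the remaining graph, and the component structure.

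The size bound is immediate, since $|X'|\le |X|\le k$.

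The key identity is
\[
(G-Z)-X' = G-(Z\cup (X\setminus Z)) = G-(X\cup Z) = (G-X)-(Z\setminus X).
\]
So the remaining graph is an induced subgraph of $G-X$. Because $G-X$ is simple, any induced subgraph of it has no parallel edges or self-loops. Hence $(G-Z)-X'$ is simple.

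For the component structure, note that a simple graph is a {\pitgraph} exactly when it has no induced net, tent or hole, and no connected component containing both a claw and a triangle. This is the characterization from \cite{JACOB2023280} recalled in the preliminaries. Suppose $(G-Z)-X'$ contained an induced net, tent or hole. It would then be induced in $G-X$ as well, which is a contradiction. Now suppose some component of $(G-Z)-X'$ contained a claw $J$ and a triangle $T$. A path joining them inside this component also lies in $G-X$, so $J$ and $T$ lie in the same component of $G-X$. This is again a contradiction.

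Alternatively, one can argue directly. Each component $C'$ of the induced subgraph lies inside some component $C$ of $G-X$. If $C$ is a tree, then $C'$, being a connected subgraph of a forest, is a tree. If $C$ is a proper interval graph, then $C'$ is an induced subgraph of $C$, and proper interval graphs are hereditary (restrict the interval model). Either route shows that $(G-Z)-X'$ is a {\pitgraph}.

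No step is a real obstacle. The only point requiring care is the claw–triangle condition, which is not closed under subgraphs in an arbitrary sense. It is handled because connectivity in the smaller graph implies connectivity in $G-X$, or, in the direct argument, because each new component sits inside a single old component.
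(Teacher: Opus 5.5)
Your proof is correct and matches the argument the paper intends; the paper omits the proof as trivial. The identity $(G-Z)-(X\setminus Z)=(G-X)-(Z\setminus X)$ shows the remaining graph is an induced subgraph of $G-X$, and being a simple graph whose components are proper interval graphs or trees is hereditary, which your point about each new component sitting inside a single component of $G-X$ correctly confirms for the claw--triangle condition.
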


We assume without loss of generality that $G$ contains no connected component that is a proper interval graph or a tree, since any such components can be removed. 
Hence, we assume that every component of $G - S$ (when it is a proper interval graph or a tree) is adjacent to at least one vertex in $S$.
Let $V_1$ denote the set of vertices of the connected components of $G - S$ that have a cycle.
Let $V_2$ denote the set of vertices of the connected components of $G - S$ that are trees.
Observe that every connected component of $G[V_1]$ is a proper interval graph and has at least one triangle.
The details of our kernelization algorithm is divided into 3 critical parts.
In \Cref{sec:initial-rules}, we illustrate some reduction rules that ensure us some important structure to the entire graph.
Subsequently, in \Cref{sec:bounding-tree-components}, we provide some reduction rules that ensures us that $G[V_2]$ has $\cOO(|S|^2)$ vertices and in \Cref{sec:bounding-pig-components2}, we provide some reduction rules that ensure us that $G[V_1]$ has $\cOO(k^{33})$ vertices. 

\subsection{Initial Preprocessing Rules}
\label{sec:initial-rules}
We have some preprocessing rules that are common for the entire graph that we describe in this section.
It is important to note that some of our subsequent reduction rules may generate components in $G - S$ that is a tree or a proper interval graph but has no neighbor in $S$.
Therefore, for the sake of completeness, we state the following reduction rule that removes every connected component of $G$ that is a proper interval graph or a tree.
This deletion is safe because every obstruction for PITVD is connected induced subgraph, and intersect with $S$. 
Hence, no obstruction can be entirely contained within a component $C$ in $G - S$ that has no neighbor in $S$.

\begin{reductionrule}
  \label{rul:isolated-component-reomval}
  If there exists a connected component $C$ in $G$ that is a simple proper interval graph or a tree, then delete $C$ from $G$.
  The new instance is $(G - C, k)$.
\end{reductionrule}

Our next two reduction rules also apply to the {\sc (Clique, Tree)-Vertex Deletion} problem. 
One is the standard buss-kernelization rule for {\sc Vertex Cover} whose safeness is trivial.
The other reduces edge multiplicity to at most 2 (standard for {\sc Feedback Vertex Set}) to obtain a simple graph after deleting at most $k$ vertices. See \cite{CyganFKLMPPS15,downey2013fundamentals} for details.

\begin{reductionrule}
  \label{rule:multiplicity-reduction}
  If there exists an edge in $G$ with multiplicity more than 2, then reduce the multiplicity of that edge to exactly 2.	
\end{reductionrule}

\begin{reductionrule}
  \label{rule:buss-rule-double-edge}
  If there is a vertex $v$ such that $v$ is adjacent to $k+1$ vertices $u_1, u_2,\ldots$, $u_{k+1}$ such that for every $i \in [k+1]$, $vu_i$ is a double-edge, then the new instance is $(G - v, k-1)$.	
\end{reductionrule}

Our next two reduction rules contract the degree-2-paths to at most length $3$: degree-2-tails are contracted to length $1$ and the other degree-2-paths contracted to length $3$.

\begin{reductionrule}
  {\label{rule:degree-2-tail}}
  Let $P = \{v_1, v_2, \ldots, v_{\ell}\}$ be a {degree-2-tail} of with $\ell \geq 3$ vertices in $G$ and
   $Z = \{v_3, v_4, \ldots, v_{\ell}\}$.
  Then, the new instance is $(G - Z, k)$.
\end{reductionrule}

\begin{lemma}
 \Cref{rule:degree-2-tail} is safe.
\end{lemma}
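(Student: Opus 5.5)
We prove both directions of the equivalence between $(G,k)$ and $(G-Z,k)$, where $P=(v_1,\dots,v_\ell)$ is a degree-2-tail with $d_G(v_1)>2$, $d_G(v_\ell)=1$, and $Z=\{v_3,\dots,v_\ell\}$. After the deletion, $v_2$ becomes a pendant vertex attached to $v_1$.

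\emph{Forward direction.} If $(G,k)$ is a yes-instance, then \Cref{obs:reduction} applied to $Z$ shows that $X\setminus Z$ is a solution for $G-Z$ of size at most $k$. Nothing more is needed here.

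\emph{Reverse direction.} Let $X'$ be a solution for $(G-Z,k)$. We first normalize $X'$ so that it avoids the tail. If $v_2\in X'$, replace $v_2$ by $v_1$ (or simply drop $v_2$ if $v_1\in X'$ already). The set $X''$ is still a solution: deleting $v_1$ leaves $v_2$ as an isolated vertex, which is harmless, and every other component is an induced subgraph of a component of $G-Z-X'$. Proper interval graphs and trees are hereditary, so these components are still fine, and simplicity is preserved. Hence we may assume $v_2\notin X'$ and $|X'|\le k$.

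We now claim that $X'$ is a solution for $G$. Split on whether $v_1\in X'$.
\begin{itemize}
\item If $v_1\in X'$, then $P-\{v_1\}$ is a separate path component of $G-X'$. A path is a tree, and the remaining components are unchanged from $G-Z-X'$.
\item If $v_1\notin X'$, then $G-X'$ is obtained from $G-Z-X'$ by attaching the degree-2-tail $(v_3,\dots,v_\ell)$ at $v_2$. Since $v_2$ is a pendant vertex of $G-Z-X'$ (its only neighbor is $v_1$), \Cref{lemma:degree-2-tail-attach-to-pendant} applies directly and shows that $G-X'$ is a simple {\pitgraph}.
\end{itemize}

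There are two things to watch. First, the edges on the tail must be simple edges: internal vertices of the tail have degree exactly $2$, and $v_\ell$ has degree $1$, so a double edge on the tail would contradict these degree conditions (or would make the component a cycle, which is handled separately). Second, the normalization step needs care in the case $v_1\in X'$ with $v_2$ also deleted. I expect the main subtlety to be verifying that the exchange argument preserves the solution size and validity. That argument rests on heredity, so it should go through, and the tail-attachment lemma handles the rest.
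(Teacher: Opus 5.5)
Your proof is correct and follows essentially the same route as the paper. The forward direction comes from \Cref{obs:reduction}, and the backward direction cases on whether $v_1, v_2$ are deleted: the leftover tail is either its own path component, or it is attached at the pendant vertex $v_2$ and \Cref{lemma:degree-2-tail-attach-to-pendant} applies. Your normalization step (replacing $v_2$ by $v_1$) is valid but unnecessary, since the paper just notes that if $v_2 \in X'$ then $Z$ is itself a path component of $G - X'$, so $X'$ can be used unchanged.
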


\begin{proof}
  From \Cref{obs:reduction}, if $(G, k)$ is a yes-instance, then $(G-Z, k)$ is also a yes-instance.
  Hence, the forward direction ($\Rightarrow$) follows.

  For the backward direction ($\Leftarrow$), let $X$ be a solution to $(G-Z, k)$. 
  Note that $G[Z]$ is a degree-2-tail.
  What is crucial here is that in the graph $G - (X \cup Z)$, $v_2$ either does not exist or is an isolated vertex {or} a pendant vertex.
  Subsequently, $G - X$ is constructed by attaching a pendant tail $Z$ at $v_2$.
  In case, ${v_2} \in X$, then ${v_2}$ does not exist in $G - (X \cup Z)$.
  Then $Z$ is a connected component in $G - X$ that is a path which is a tree.
  In the second case, if ${v_2}$ is isolated in $G - (X \cup Z)$, then $Z \cup \{v_2\}$ forms a connected component in $G - X$ that is a path which is a tree.
  In the third case, {$v_2$ is a pendant vertex in $G-X$.}
  As $G - (X \cup Z)$ is a {\pitgraph}, and a degree-2-tail $Z$ is attached to $v$, it follows from \Cref{lemma:degree-2-tail-attach-to-pendant} that $G - X$ is a {\pitgraph}.
  Hence, $X$ is a solution to $(G, k)$.
\end{proof}

\begin{reductionrule}{\label{rule:degree-2-overbridge}}
  Let $ P = (v_1, v_2, \dots, v_{\ell}) $ is a {degree-2-path} that is not a degree-2-tail in $ G $ such that $ \ell \geq 5$. 
  Then, define $Z = \{v_3, v_4, \dots, v_{\ell-2}\} $ and consider $G'$ be the graph obtained from $G$ as follows:
 \begin{itemize}
   \item Remove all vertices in $Z $;
   \item Add the edge $ v_2v_{\ell-1} $.
 \end{itemize}
 Then, the instance $ (G, k) $ is a yes-instance for the PITVD problem if and only if the instance $ (G', k) $ is a yes-instance.
\end{reductionrule}

\begin{figure}[ht!]
    \centering
    \label{fig:degree-2-overbridge}
    \includegraphics[scale=0.6]{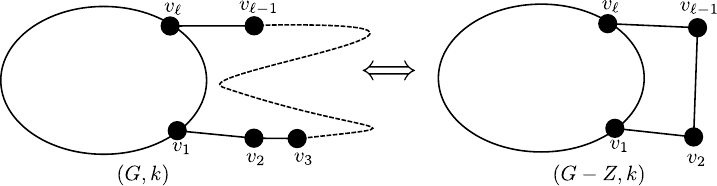}
    \caption{Schematic Diagram of \Cref{rule:degree-2-overbridge}}
\end{figure}
The safeness of the above reduction rule crucially relies on the next lemma.

\begin{lemma}
  \label{lem:degree-2-overbridge}
  Let $G'$ be the graph obtained from \Cref{rule:degree-2-overbridge} and $Z$ be the same vertex subset defined in \Cref{rule:degree-2-overbridge}.
  Then, the following holds:
  \begin{enumerate}[(i)]
    \item\label{overbridge-Z_disjoint} Every claw, net, tent, and double-edge of $G$ is disjoint from $Z$.
    \item\label{overbridge-edge_disjoint} Every claw, net, tent, and double-edge of $G'$ is disjoint from edge $v_2v_{\ell-1}$.
    \item\label{overbridge-same-obstruction} The set of claws, nets, tents, and double-edges of $G$ and $G'$ are the same.
  \end{enumerate}
\end{lemma}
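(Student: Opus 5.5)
The plan is a local degree argument. Every vertex of a claw, net or tent has either degree at least $3$ inside the obstruction, or degree $2$ inside it with a neighbour of degree at least $3$ inside it. I will check this directly on the three shapes.
\begin{itemize}
\item Claw: centre has degree $3$; leaves have degree $1$, adjacent to the centre.
\item Net: triangle vertices have degree $3$; pendants have degree $1$, adjacent to a triangle vertex.
\item Tent: degree-$4$ vertices, and degree-$2$ vertices whose neighbours both have degree $4$.
\end{itemize}
A double-edge is just a pair of vertices joined by two parallel edges.

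For part~(\ref{overbridge-Z_disjoint}), I use that every vertex of $Z=\{v_3,\dots,v_{\ell-2}\}$ is an internal vertex of the degree-$2$-path $P$. So each such vertex has degree exactly $2$ in $G$, and its neighbours $v_2,\dots,v_{\ell-1}$ also have degree exactly $2$ in $G$.
\begin{itemize}
\item A vertex of $Z$ cannot be a claw centre, a net triangle vertex, or a tent degree-$4$ vertex, since these need degree at least $3$.
\item If it appears as a low-degree vertex of one of these obstructions, its obstruction-neighbour must have degree at least $3$ in $G$. This is impossible, because its only neighbours are internal path vertices of degree $2$.
\end{itemize}
Here I use $\ell\ge 5$, so that $v_2$ and $v_{\ell-1}$ are internal vertices of degree $2$ (a multigraph double-edge counts towards degree, so a degree-$2$ vertex incident to a double-edge has a single neighbour).

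Double-edges are handled separately. A double-edge at a vertex of $Z$ would make its degree already $2$ from one neighbour. Then the path structure $v_{i-1}v_iv_{i+1}$ with distinct neighbours would be violated, so no double-edge meets $Z$.

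Part~(\ref{overbridge-edge_disjoint}) is the same argument in $G'$. In $G'$ the vertices $v_2$ and $v_{\ell-1}$ still have degree exactly $2$: $v_2$ is adjacent to $v_1$ and $v_{\ell-1}$, and $v_{\ell-1}$ to $v_2$ and $v_\ell$. Any obstruction using the edge $v_2v_{\ell-1}$ contains both endpoints, and one of them must then be high-degree or adjacent to a high-degree vertex within the obstruction.
\begin{itemize}
\item Both have degree $2$, so neither is high-degree.
\item Within the obstruction each must be low-degree and adjacent to a high-degree vertex. But $v_2$ has $v_{\ell-1}$ as an obstruction-neighbour, which is low-degree.
\end{itemize}
The delicate point is the tent: a degree-$2$ tent vertex has both neighbours of degree $4$ in the tent, so using $v_{\ell-1}$ kills it. For the claw and net, an edge joining two vertices both of degree at most $2$ in the obstruction does not exist, except a claw leaf to the centre, which needs degree $3$. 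The double-edge case needs care: $v_2v_{\ell-1}$ could be parallel to an existing edge only if $v_2v_{\ell-1}\in E(G)$, which is impossible since $P$ is a path with $\ell\ge5$. This non-creation is the step I expect to require the most care; a slightly degenerate case (for instance $v_1=v_\ell$, when $P$ closes a cycle) should be checked.

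For part~(\ref{overbridge-same-obstruction}), observe that $G-Z = G'-\{v_2v_{\ell-1}\}$. By (\ref{overbridge-Z_disjoint}), every claw, net, tent and double-edge of $G$ lies in $G-Z$, and it is induced there because induced subgraphs are preserved. By (\ref{overbridge-edge_disjoint}), every such structure of $G'$ avoids the edge $v_2v_{\ell-1}$. It also cannot contain both $v_2$ and $v_{\ell-1}$ without that edge, since as an induced subgraph it would include the edge. So it lives in $G'-\{v_2v_{\ell-1}\}$ restricted to its vertices, which is the same as in $G$. Hence the two families coincide.
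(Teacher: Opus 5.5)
Your degree arguments for (i) and (ii) are the same as the paper's. All of $N_G[Z]$ has degree $2$ in $G$, the vertices $v_2,v_{\ell-1}$ have degree $2$ in $G'$, and every edge of a claw, net or tent has an endpoint of degree at least $3$ in the obstruction. Your opening sentence should say ``degree at most $2$'' rather than ``degree $2$'', but your case list is right.

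The gap is in (iii), in the direction from $G$ to $G'$. It is not the double-edge step you flagged. You show that an obstruction $O$ of $G$ is induced in $G-Z$. But $G-Z=G'-v_2v_{\ell-1}$, so if $v_2,v_{\ell-1}\in O$, then $G'[O]$ is $G[O]$ plus the edge $v_2v_{\ell-1}$, and it is no longer the same structure. You exclude this configuration only for obstructions of $G'$, via (ii). The paper's proof skips the same point (``Therefore, $O$ is contained in $G'$'').

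For claws, tents and double-edges the missing step is easy. Since $O\cap Z=\emptyset$, the only possible $O$-neighbour of $v_2$ is $v_1$, and that of $v_{\ell-1}$ is $v_\ell$. So neither vertex lies in a tent, and in a claw both would be leaves, with distinct centres $v_1\neq v_\ell$.

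For nets the step cannot be supplied, because (iii) is false for nets. Take $\ell=5$, the path $v_1v_2v_3v_4v_5$, the edge $v_1v_5$, a vertex $w$ adjacent to $v_1$ and $v_5$, and a pendant $w'$ at $w$. Then $P$ is a degree-2-overbridge with $Z=\{v_3\}$. The set $\{v_1,v_5,w,v_2,v_4,w'\}$ induces a net in $G$, but not in $G'$, where $v_2v_4$ is an edge.

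What your argument (plus the pendant observation above) does prove is the following:
\begin{enumerate}
\item[(a)] Claws, tents and double-edges of $G$ and $G'$ coincide.
\item[(b)] Every net of $G'$ is a net of $G$.
\item[(c)] A net of $G$ that is not a net of $G'$ contains $v_1,v_2,v_{\ell-1},v_\ell$ with $v_1v_\ell\in E(G)$, so these four vertices induce a $C_4$ in $G'$.
\end{enumerate}
This weaker statement still suffices for the backward direction of the safeness proof of Reduction Rule~\ref{rule:degree-2-overbridge}. A solution for $G'$ must hit that $C_4$, and every vertex of the $C_4$ lies in the net.
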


\begin{proof}
  We prove the items in the given order.
  \begin{enumerate}[(i)]
    \item Observe that every vertex of $Z$ has degree exactly $2$ in $G$ and the vertices of $N_G(Z)=\{v_2, v_{\ell -1}\}$ also have degree $2$ in $G$.
    In short, every vertex of $N_G[Z]$ has degree exactly 2 in $G$.
    First we consider the case that there is a double edge $xy$ of $G$ that intersects $Z$.
    But that is not possible because no vertex of $Z$ is incident to any double-edge.
    Hence, every double-edge of $G$ is disjoint from $Z$.
    
    Next, we consider the case that there is a claw $J$ in $G$ that intersects $Z$.
    Then, $J$ has an edge $xy$ with two endpoints $x, y$ being from $N_G[Z]$.
    But for a claw, it must be that $x$ has degree 3 and $y$ is pendant (or the converse).
    It contradicts that every vertex of $N_G[Z]$ is of degree exactly 2 in $G$.
    Hence, a claw of $G$ cannot intersect $N_G[Z]$, implying that $J$ is disjoint from $Z$.
        
    Now, we consider the case that there is a net $N$ in $G$ that intersects $Z$.
    Then, $N$ has an edge $xy$ with two endpoints $x, y$ being from $N_G[Z]$.
    If both $x$ and $y$ are of degree at least 3 in the net, then both $x$ and $y$ must have degree at least 3 in $G$, leading to a contradiction that $x$ and $y$ have degree exactly 2 in $G$.
    {If $x$ is a pendant in net, then $y$ must be a vertex of degree at least 3 in $G$, leading to a contradiction that $y$ is a vertex with degree exactly 2 in $G$.
    Similarly, if $y$ is a pendant, then $x$ must be of degree at least 3 in $G$ that leads to a contradiction that $y$ is of degree exactly 2 in $G$.}
    Hence, $N$ cannot be isomorphic to a net that intersect $N_G[Z]$, implying that $N$ {is disjoint from} $Z$.
    
    Finally, we consider the case when a tent $T$ that intersects $Z$.
    Then, $T$ has an edge $xy$ with two endpoints $x, y$ being from $N_G[Z]$.
    Note that when $xy$ is an edge of a tent in $G$, then it must be that either both $x$ and $y$ must have degree at least 4 in $G$; or one of them has degree 2 and the other has degree at least 4 in $G$.
    But, that contradicts the fact that both $x$ and $y$ have degree exactly 2 in $G$.
    Hence, $T$ cannot be a tent in $G$ that intersect $N_G[Z]$, implying that $T$ is disjoint from $Z$.
    
    Therefore, our claim follows.    
    \item For the sake of contradiction, suppose that there is a double-edge, or a claw, or a net, or a tent in $G'$ that contains the edge $v_2v_{\ell-1}$.
    By construction of $G'$, note that both $v_2$ and $v_{\ell - 1}$ have degree exactly 2 in $G'$.
     
    First, note that if a double-edge of $G'$ has to contain the edge $v_2 v_{\ell - 1}$, then there must be two edges between $v_2$ and $v_{\ell - 1}$, but that is not possible.
    Hence, the double-edges of $G'$ cannot contain $v_2 v_{\ell - 1}$. 
    
    Next, we consider the case that there is a claw in $G'$ that contains the edge $v_2v_{\ell-1}$.
    Then, either $v_2$ or $v_{\ell - 1}$ is the center of this claw.
    {Let $v_2$ be the center. Then, $v_2$ must have degree at least 3 leading to a contradiction that the degree of $v_2$ is exactly 2 in $G'$.
    By similar arguments, we can argue that $v_{\ell - 1}$ cannot be the center of the claw.}

    Now, we consider the case that there is a net that contains the edge $v_2 v_{\ell - 1}$.
    Every vertex of a net is either a pendant or a vertex of degree three.
    If both $v_2$ and $v_{\ell -1}$ are of degree at least 3 in the net, then both $v_2$ and $v_{\ell - 1}$ must have degree at least 3 in $G'$, leading to a contradiction that $v_2$ and $v_{\ell -1}$ have degree exactly 2 in $G'$.
    {Without loss of generality, let $v_2$ be a pendant in net. Then, $v_{\ell - 1}$ must be a vertex of degree at least 3 in $G'$, leading to a contradiction that $v_{\ell - 1}$ is a vertex with degree exactly 2.}
    Hence, $v_2 v_{\ell - 1}$ cannot be part of a net.
    
    Finally, we consider the case when $v_2 v_{\ell - 1}$ is an edge of a tent.
    Note that if $v_2 v_{\ell - 1}$ is an edge of a tent, then it must be that either both $v_2$ and $v_{\ell-1}$ must have degree at least 4 in $G'$; or one of them has degree 2 and the other has degree at least 4 in $G'$.
    But, that contradicts the fact that both $v_2$ and $v_{\ell - 1}$ have degree exactly 2 in $G'$.
    Hence, there cannot exist any tent in $G'$ containing $v_2 v_{\ell - 1}$. 
    \item We use the correctness of items-(\ref{overbridge-Z_disjoint}) and (\ref{overbridge-edge_disjoint}) to prove this item.
    It follows from the item-(\ref{overbridge-Z_disjoint}) that all claws, nets, and tents of $G$ are disjoint from $Z$.
    From the item-(\ref{overbridge-edge_disjoint}), it follows that the set of all claws, nets, and tents of $G'$ are disjoint from the edge $v_2v_{\ell-1}$.
    
    Now, we consider a vertex subset $O$ that induces a double-edge, or a claw, or a net, or a tent in $G$.
    Due to item-(\ref{overbridge-Z_disjoint}), $O$ is disjoint from $Z$.
    Hence, $O$ cannot contain any of the vertices in $Z$.
    In particular, $O$ cannot contain the edges $v_2 v_3$ and $v_{\ell - 2} v_{\ell -1}$.
    Therefore, $O$ is contained in $G'$.
    
    Similarly, due to item-(\ref{overbridge-edge_disjoint}), if $O$ is a vertex subset that induces a double-edge, or a claw, or a tent, or a net of $G'$, then $O$ cannot contain the edge $v_2 v_{\ell - 1}$.
    Then, the edges of $G[O]$ are disjoint from the edges of $\{v_2 v_3, v_3 v_4,\ldots,v_{\ell - 2} v_{\ell -1}\}$.
    It means $O$ is disjoint from $Z$.
    
    This completes the justification that the set of claws, nets, and tents of $G$ and $G'$ are the same.
  \end{enumerate}
This completes the proof of the lemma.
\end{proof}

\begin{lemma}
  \Cref{rule:degree-2-overbridge} is safe.
\end{lemma}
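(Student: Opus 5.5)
We prove both directions by transferring solutions between $G$ and $G'$. The key facts are \Cref{lem:degree-2-overbridge}, which says $G$ and $G'$ have the same claws, nets, tents and double-edges, and \Cref{lem:subdiving-edges-pig}, which controls what happens when an edge is subdivided. Write $P=(v_1,\dots,v_\ell)$ and $Z=\{v_3,\dots,v_{\ell-2}\}$. The contraction replaces the path $v_2,v_3,\dots,v_{\ell-1}$ by the single edge $v_2v_{\ell-1}$. Since $\ell\ge 5$, the path $v_1,v_2,v_{\ell-1},v_\ell$ remains in $G'$.

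\emph{Forward direction.} Let $X$ be a solution for $(G,k)$. If $X\cap Z\neq\emptyset$, replace that vertex of $X$ by $v_2$; this keeps $|X|\le k$ and still destroys every cycle through $P$. We then argue that $X'=X\setminus Z$ (or the modified set) is a solution for $G'$, by checking each component $C'$ of $G'-X'$.

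\begin{enumerate}[(a)]
  \item If $C'$ avoids the edge $v_2v_{\ell-1}$, then $C'$ is an induced subgraph of a component of $G-X$, so it is a proper interval graph or a tree.
  \item If $C'$ uses $v_2v_{\ell-1}$, let $C$ be the corresponding component of $G-X$; $C'$ arises from $C$ by contracting a path of degree-2 vertices.
  \begin{itemize}
    \item If $C$ is a tree, the contraction keeps it a tree.
    \item If $C$ is a proper interval graph, it contains no induced path on $5$ or more vertices made of degree-2 vertices lying on a cycle. Hence $C$ is a proper interval graph whose long degree-2 path is a pendant-like induced path, and contracting it preserves the absence of claws, nets, tents and holes. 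For holes, note that a hole of $C'$ would lift to a longer hole of $C$. Claws, nets and tents are handled by \Cref{lem:degree-2-overbridge}(\ref{overbridge-same-obstruction}).
  \end{itemize}
  \item We must also rule out a component of $G'-X'$ containing both a claw and a triangle. Claws and triangles do not use the edge $v_2v_{\ell-1}$ (the triangle case follows from the degree-2 endpoints). So any such pair already lies in $G$, and the pair is connected in $G$ by lifting the connecting path.
\end{enumerate}

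\emph{Backward direction.} Let $X'$ be a solution for $(G',k)$. We claim $X'$ is a solution for $G$, arguing on the component $C'$ of $G'-X'$ containing $v_2v_{\ell-1}$ (if $v_2$ or $v_{\ell-1}$ lies in $X'$, then $Z$ only adds a pendant path or a separate path component, which is handled as in \Cref{lemma:degree-2-tail-attach-to-pendant}).

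\begin{itemize}
  \item If $C'$ is a tree, subdividing an edge keeps it a tree.
  \item If $C'$ is a proper interval graph, consider whether $v_2v_{\ell-1}$ lies on a cycle. It cannot lie on a triangle, since $v_2$ has degree $2$ with neighbours $v_1,v_{\ell-1}$, and $v_1v_{\ell-1}$ would force a triangle $v_1v_2v_{\ell-1}$, which would make $v_{\ell-1}$ adjacent to $v_1$. We handle that case separately: $\ell=5$ gives a $C_5$ hole in $G$, which the hole side covers via lifting.
  \begin{itemize}
    \item If $v_2v_{\ell-1}$ lies on no cycle, then $v_1,v_2,v_{\ell-1},v_\ell$ is an induced path and \Cref{lem:subdiving-edges-pig} applies directly.
    \item If it lies on a cycle, then a chordless cycle of length at least $4$ through it exists in $C'$, contradicting that $C'$ is a proper interval graph. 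The triangle case is impossible, as shown above.
  \end{itemize}
  \item Finally, claw--triangle pairs in a component of $G-X'$ map to pairs in $G'-X'$, again by \Cref{lem:degree-2-overbridge}.
\end{itemize}

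The main obstacle is the cycle and triangle bookkeeping around the edge $v_2v_{\ell-1}$: showing no hole or triangle of $G'$ is created or destroyed inconsistently. In particular, the forward direction needs the exchange argument putting $v_2$ in place of a vertex of $Z$, together with the observation that a hole of $G'$ through $v_2v_{\ell-1}$ lifts to a hole of $G$. I would first verify this by tracking the cycle space: cycles of $G$ through $Z$ correspond bijectively to cycles of $G'$ through $v_2v_{\ell-1}$, preserving inducedness and lengthening by $\ell-4$.
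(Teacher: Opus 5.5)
Your overall route is the paper's. The forward direction is an exchange argument followed by lifting holes and claw--triangle connecting paths through the subdivided edge. The backward direction is a case analysis that uses \Cref{lemma:degree-2-tail-attach-to-pendant} when the new edge is absent and \Cref{lem:subdiving-edges-pig} when it is present. Your exchange, putting $v_2$ in place of $X\cap Z$, is correct and simpler than the paper's shift to $v_1$/$v_\ell$. Once $v_2\in X'$, $G'-X'=G-(X'\cup Z)$ is an induced subgraph of $G-X$. When $X'=X$ avoids $v_2,\dots,v_{\ell-1}$, your correspondence between cycles through $v_2v_{\ell-1}$ and cycles through $Z$ (inducedness preserved) finishes the argument.

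The backward direction, however, leaves a subcase unhandled, and a step fails there. You treat $v_2\in X'$ or $v_{\ell-1}\in X'$ with the tail lemma, and otherwise split on whether $C'$ is a tree or a proper interval graph. In the latter case, when $v_2v_{\ell-1}$ lies on no cycle, you assert that $v_1,v_2,v_{\ell-1},v_\ell$ is an induced path of $C'$ and apply \Cref{lem:subdiving-edges-pig}. This is false when $v_2,v_{\ell-1}\notin X'$ but exactly one of $v_1,v_\ell$ lies in $X'$ and $C'$ has a cycle elsewhere. Say $v_1\in X'$: the four-vertex path is not in $C'$, and the subdivision lemma does not apply, since it needs the subdivided edge to be the middle edge of an induced $P_4$. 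The fix is the one the paper uses in its Subcase~(b). Since $v_2$ is pendant in $C'$, $C'-v_2$ is still a proper interval graph in which $v_{\ell-1}$ is pendant. The corresponding component of $G-X'$ is obtained by attaching the tail $v_{\ell-2},\dots,v_3,v_2$ at $v_{\ell-1}$, so \Cref{lemma:degree-2-tail-attach-to-pendant} applies.

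Separately, your triangle remark is confused, though this only needs clarifying, not repairing. Since $N_{G'}(v_{\ell-1})=\{v_2,v_\ell\}$ and $v_\ell\neq v_1$, the pair $v_1v_{\ell-1}$ is never an edge, so the ``handle separately, $\ell=5$ gives a $C_5$'' case is vacuous. What can actually happen is $v_1v_\ell\in E(G)$, which makes $v_1v_2v_{\ell-1}v_\ell$ a $C_4$ in $G'$. That case is already excluded by your bullet saying that if the edge lies on a cycle, $C'$ contains a hole.
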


\begin{proof}
  Note that $P$ is a degree-2-path but not a degree-2-tail.
  Since the {degree-2-path} $P$ with at least 5 vertices has been converted to a degree-2-path of four vertices in $G'$, let $P'= (v_1, v_2, v_{\ell-1}, v_{\ell})$ be the {degree-2-path} of four vertices.

  For the forward direction ($\Rightarrow$) of the proof, let $X$ be a solution to $(G, k)$.
  We define $X'$ as follows:
  \begin{itemize}
    \item If $v_2 \in X$, set $X' = (X \setminus \{v_2\}) \cup \{v_{1}\}$.
    \item If $v_{\ell-1} \in X$, set $X' = (X \setminus \{v_{\ell-1}\}) \cup \{v_{\ell}\}$.
    \item If $v_2, v_{\ell-1}\in X$, set $X' = (X \setminus \{v_2, v_{\ell-1}\}) \cup \{v_{1}, v_{\ell}\}$.
    \item If $X\cap Z\neq\emptyset$, set $X' = (X \setminus Z) \cup \{v_{1}\}$.
    \item Else, set $X' = X$.
  \end{itemize}
  
  Clearly, $|X'|\le |X|$.
  Observe that $v_2, v_{\ell-1}\notin X'$ and $Z\cap X'=\emptyset$.
  In particular, $N[Z] \cap X' = \emptyset$.
  We claim that $X'$ is a solution to $(G', k)$.
  Targeting a contradiction, suppose that $G' - X'$ is not a {\pitgraph}.
  Then, $G'-X'$ contains a double-edge or some connected component of $G' - X'$ is neither a tree nor a proper interval graph.
  Due to item-(\ref{overbridge-same-obstruction}) of \Cref{lem:degree-2-overbridge}, the set of double-edges, nets, and tents of $G$ and $G'$ are the same.
  In particular, due to item-(\ref{overbridge-edge_disjoint}) of \Cref{lem:degree-2-overbridge}, the nets, and tents of $G'$ are disjoint from the edge $v_2 v_{\ell - 1}$.
  Hence, $X'$ intersects the set of double-edges, nets, and tents of $G'$. 

  Consider an induced cycle $C$ with at least 4 vertices in $G'-X'$.
  Note that $C$ was not present in $G - X$ but is present in $G' - X'$.
  Then, $C$ must contain $v_2$ or $v_{\ell - 1}$.
  As $C$ is chordless, $C$ must contain both $v_1$ and $v_{\ell}$ as well.
  In particular, being an induced cycle that contains $v_2$ implies that the same induced cycle must also contain $v_{\ell - 1}$.
  It implies that $v_1, v_{\ell} \notin X'$.
  But then $v_1, v_{\ell} \notin X$ as well.
  In particular, $(Z \cup \{v_2, v_{\ell-1}\}) \cap X = \emptyset$.
  Then, $C$ contains all $v_1, v_2, v_{\ell-1}, v_{\ell}$.
  Consider the cycle $C^*$ that is obtained from $C$ by subdividing the edge $v_2 v_{\ell - 1}$ by the vertices $v_3, v_4,\ldots, v_{\ell-2}$ (vertices from $Z$).
  This cycle $C^*$ is also an induced cycle and appears in $G - X$ since $X$ is disjoint from the entire degree-2-overbridge.
  This leads to a contradiction that $G - X$ is a {\pitgraph}.

  Finally, we consider the case when $G' - X'$ contains a connected component $D$ that has both a claw $J$ and a triangle $T$ as induced subgraphs.
  Consider a shortest path $Q$ between $J$ and $T$ in $D$.
  If $Q$ contains $v_2$ (or $v_{\ell - 1}$), then note that the claws of $G'$ do not contain the edge $v_2 v_{\ell-1}$.
  Similarly, $v_2$ and $v_{\ell - 1}$ are not part of any triangles of $G'$.
  Hence, $Q$ contains $v_1, v_2, v_{\ell - 1}, v_{\ell}$.
  By construction of $X'$ from $X$, $v_1, v_{\ell} \notin X'$ means that $v_2, v_{\ell - 1} \notin X$ and $Z$ is disjoint from $X$.
  Hence, $N_G[Z]\cap X = \emptyset$.
  Then, we subdivide the path $Q$ by inserting vertices of $Z$ into the edge $v_2 v_{\ell - 1}$ and obtain a path $\widehat{Q}$ in $G$.
  Also, $X \setminus P = X' \setminus P$.
  Therefore, $J \cup T$ is also disjoint from $X$.
  Hence, there is a connected component of $G - X$ that contains triangle $T$, claw $J$ and a path $\widehat{Q}$ between them.
  This contradicts that $G - X$ is a {\pitgraph}.
 
 	As we have argued for all exhaustive cases, $X'$ is a solution to $(G', k)$.

	For the backward direction ($\Leftarrow$), let $X'$ be a minimal solution of size at most $k$ to $(G', k)$. 
	We claim that $G - X'$ is a {\pitgraph}.
  From \Cref{lem:degree-2-overbridge}, item-(\ref{overbridge-same-obstruction}), the set of nets, tents, claws, and double-edges in $G$ and $G'$ are the same.
  Therefore, $X'$ intersects all the nets, tents, and double-edges of $G$.
  Let us examine how the graphs $G' - X'$ and $G - X'$ are different.
  The vertices of $Z$ are the only ones that are in $G - X'$ but not in $G' - X'$.
  
  {\bf Subcase-(a):} If $v_1, v_{\ell} \in X'$, then one of the following situtations arise.
  One situation is when $Z$ forms an additional connected component of $G - X'$ itself that is a path.
  Other components of $G - X'$ are unchanged.
  Hence, $G - X'$ is a {\pitgraph}.
  {If the vertices of $Z\cup \{v_2\}$, or $Z\cup \{v_{\ell-1}\}$, or $Z\cup \{v_2, v_{\ell - 1}\}$ forms a connected component of $G-X'$, then a single connected component of $G'-X'$ is converted into a path and the other components remains unchanged.}
  Hence, $G - X'$ is a {\pitgraph}.
  
  {\bf Subcase-(b):} If $v_1 \in {X'}$ but $v_{\ell} \notin X'$.
  Consider the connected component $C$ of $G - X'$ that contains $Z$.
  If $Z = C$, then the other connected components of $G - X'$ and $G' - X'$ are the same.
  This takes care of the case when $v_{\ell - 1} \in X'$.
  In that case, $C$ is a path.
  Hence, $G - X'$ is a {\pitgraph}.
  
  Consider when $v_{\ell - 1} \notin X'$.
  If $Z \subseteq C$ and $C$ contains $v_{\ell}$, then $C$ contains $v_{\ell -1}$.
  Consider the connected component $D$ of $G' - X'$ that contains $v_{\ell}$ and $v_{\ell - 1}$.
  If $v_2 \notin D$, then $v_{\ell-1}$ is a pendant vertex in $D$.
  In such a case, we obtain a component $C$ of $G - X'$ by attaching the degree-2-tail $Z \cup \{v_{\ell-1}\}$ into the pendant vertex $v_{\ell - 1}$.
  As $G' - X'$ is a {\pitgraph}, due to \Cref{lemma:degree-2-tail-attach-to-pendant}, attaching a pendant tail to $G' - X'$ creates $G - X'$ that is also a {\pitgraph}.
  In case $v_2 \in D$, then $v_2 v_{\ell - 1}$ is an edge and $v_2$ is a pendant vertex.
  We obtain the component $C$ by subdividing the edge $v_2 v_{\ell - 1}$ into the component $D$.
  This procedure is equivalent to first deleting the pendant vertex $v_2$, and then attaching a degree-2-tail $Z \cup \{v_2, v_{\ell - 1}\}$ into the vertex $v_{\ell - 1}$.
  Note that deleting $v_2$ creates a {\pitgraph} (due to hereditary property) and $v_{\ell - 1}$ is a pendant vertex.
  Subsequently, due to \Cref{lemma:degree-2-tail-attach-to-pendant}, attaching a degree-2-tail to a pendant vertex ensures that $G - X'$ is a {\pitgraph}.
  
  {\bf Subcase-(c):} If $v_1 \notin X'$ but $v_{\ell} \in {X'}$, then the case is symmetric to subcase-(b).
  
  {\bf Subcase-(d):} If $v_1, v_{\ell} \notin X'$, then we can have the following situations.
  In case, $v_2, v_{\ell - 1} \notin {X'}$, then $v_1, v_2, v_{\ell - 1}, v_{\ell}$ is part of a connected component $D'$ of $G' - X'$.
  If $D'$ is a tree, then subdividing the edge $v_2 v_{\ell - 1}$ of $D'$ by the vertices of $Z$ forms a tree $D$ that is a component in $G - X'$.
  Other components remain unchanged.
  So, $G - X'$ is a {\pitgraph}.
  If $D'$ is a proper interval graph, and it contains a degree-2-path $P'$, therefore there cannot exist any path between $v_1$ and $v_{\ell}$ in $D$ avoiding the vertices $v_2$ and $v_{\ell - 1}$.
  As otherwise, we will have an induced cycle of length at least 4 in $D'$.
  Therefore, {$P'$ is a unique path between $v_1$ and $v_{\ell}$ in $D'$ such that $v_2, v_{\ell-1}\in P'$.}
  We create $D$ by subdividing the edge $v_2 v_{\ell - 1}$ using the vertices of $Z$.
  Note that $D$ is a proper interval graph by \Cref{lem:subdiving-edges-pig}.
  
  In case, without loss of generality, $v_2 \notin X'$ but $v_{\ell - 1} \in X'$ (or $v_2 \in X'$ but $v_{\ell - 1} \notin X'$), then $v_2$ is a pendant vertex in $G' - X'$ or an isolated vertex in $G' - X'$.
  If $v_2$ is an isolated vertex in $G' - X'$, then $Z \cup \{v_2\}$ is a component in $G - X'$ that is a path.
  Other components of $G-X'$ remains unchanged.
  If $v_2$ is a pendant in $G' - X'$, then by \Cref{lemma:degree-2-tail-attach-to-pendant}, attaching a degree-2-tail $Z$ to the pendant vertex $v_2$ in $G' - X'$ creates $G - X'$ that is also a {\pitgraph}.
  Hence, $G' - X'$ is a {\pitgraph}.
  Therefore, $X'$ is a solution to $(G, k)$, completing the proof of the lemma.
\end{proof}

We borrow the idea of `pendant tree' defined in \cite{JacobMZ2024isaac}.
Formally, let $x$ be a cut vertex in $G$ and $C$ a tree component of $G-x$.
If $C\cup\{x\}$ induces a tree, then $C$ is a {\em pendant tree} attached to $x$.
If a pendant tree $C$ attached to cut vertex $x$ has a vertex of degree at least 3 in $G$, then our next reduction rule keep a claw in $C$ that is closest $x$.

\begin{reductionrule}
  \label{rul:pendant-tree}
  Let $C$ be a pendant tree attached to $x$.
  If there is a vertex in $C$ that has degree at least three in $G$, then we consider the set of vertices $H_C = \{v \in C \mid d_G(v) \geq 3\}$.
  Choose $v \in H_C$ to be a vertex that has the smallest distance from $x$ among all vertices of $H_C$ (since $C \cup \{x\}$ induces a tree, $v$ is unique).
  Choose a set $D$ of vertices that contains the unique path $P_{x, v}$ between $v$ and $x$ in $G[C \cup \{x\}]$, and two arbitrary neighbors of $v$ in $C$ not appearing in $P_{v, x}$.
  Then, delete all vertices of $C \setminus D$ from $G$.
  The new instance is $(G - (C \setminus D), k')$ with $k' = k$.
\end{reductionrule}

\begin{lemma}
 \label{lem:pendant-tree-safe}
 \Cref{rul:pendant-tree} is safe.
\end{lemma}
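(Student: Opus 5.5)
The forward direction ($\Rightarrow$) is immediate from \Cref{obs:reduction}, since $G - (C\setminus D)$ is an induced subgraph of $G$. For the backward direction ($\Leftarrow$), let $X'$ be a solution of $(G', k)$, where $G' = G - (C\setminus D)$. I will construct a solution $X$ for $(G,k)$ with $|X| \le |X'|$ and show that $G - X$ is a simple {\pitgraph}.

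The key structural fact is that $G[D]$ is a claw attached to $x$ by the path $P_{x,v}$. Every vertex of $P_{x,v}$ strictly between $x$ and $v$ has degree exactly $2$ in $G$, since $v$ is the closest vertex of $H_C$ to $x$ and $C\cup\{x\}$ induces a tree. Moreover, $C \cup \{x\}$ is a tree, so no cycle, net, tent, or double edge of $G$ meets $C$, and all such obstructions already live in $G'$. The only obstructions that can use vertices of $C\setminus D$ are claws, whose component must also contain a triangle.

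I first normalize $X'$. If $X' \cap D \neq \emptyset$, I replace $X'\cap D$ by $\{x\}$ (assuming $x\notin X'$; otherwise I just drop $X' \cap D$). This does not increase the size. I then argue that the result is still a solution of $G'$: removing $x$ separates $D$ into a tree component, and the rest of the graph only lost vertices compared to $G'-X'$. I set $X$ to be this normalized set, so that either $x \in X$ or $X \cap D = \emptyset$.

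Case $x \in X$: the set $C$ becomes a union of tree components of $G - X$, because $C$ is a tree separated from the rest by $x$. The remaining components are identical to those in $G' - X$. Hence $G - X$ is a {\pitgraph}.

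Case $X \cap (D \cup \{x\}) = \emptyset$: the component $K'$ of $G'-X$ containing $x$ also contains the whole of $D$, and hence the claw centered at $v$. Therefore $K'$ contains no triangle and, being a {\pitgraph} component, must be a tree. In $G - X$, the corresponding component $K$ is $K'$ with the tree $C\setminus D$ glued on. Since $C\cup\{x\}$ is a tree attached only at $x$, $K$ is still a tree, and the other components are unchanged.

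I expect the main obstacle to be the normalization step: checking rigorously that swapping $X'\cap D$ for $x$ preserves a solution. One has to argue that the only role vertices of $D$ can play in an obstruction is on a path between a claw and a triangle through $x$, or as the claw at $v$, and that deleting $x$ destroys all of these. Every other obstruction lies entirely outside $D\cup(C\setminus D)$, because $C \cup\{x\}$ is a tree hanging at the cut vertex $x$. Hence it is hit by $X' \setminus D$.
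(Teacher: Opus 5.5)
Your proof is correct. It takes a different route from the paper's, and the difference matters.

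\textbf{The paper's route and its gap.} The paper keeps the given solution $X$ of $G'$ unchanged and argues by contradiction. A claw--triangle component of $G-X$ must have its claw beyond $v$, so the retained claw $\{v,x_1,x_2,x_3\}$, the triangle and $P_{x,v}$ would already lie in $G-(\widehat{D}\cup X)$. That step tacitly assumes $X$ misses the two retained neighbours of $v$. Without this assumption, the claim that $G-X$ itself is a {\pitgraph} is false. Take a triangle $\{a,b,x\}$, an edge $xv$, and leaves $w_1,w_2,w_3$ at $v$, with $k=1$. The rule deletes $w_3$, and $X=\{w_1\}$ solves $(G',1)$, since $G'-w_1$ is a triangle with a two-edge tail. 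Yet $G-w_1$ has the claw $\{v,x,w_2,w_3\}$ in the component of the triangle.

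\textbf{Why your route handles this.} Your normalisation replaces $X'\cap(D\cap C)$ by $x$. It is valid because $G'$ minus the normalised set is the tree $G'[D\cap C]$ together with an induced subgraph of $G'-X'$. This step is precisely what covers the case above. Your direct description of the components of $G-X$ then replaces the paper's obstruction case analysis with a two-case structural argument. Either $C$ splits off as a tree, or the component of $x$ already contains the claw at $v$, so it is a tree and stays one after regrafting $C\setminus D$.

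\textbf{Minor points.} None of these affect correctness.
\begin{itemize}
\item You use $D$ for $D\cap C$. The rule's $D$ contains $x$, and under that reading ``drop $X'\cap D$'' would wrongly drop $x$.
\item The remark that no net meets $C$ is false: the neighbour of $x$ in $C$ can be a pendant of a net whose triangle contains $x$. That vertex lies in $D$, and the remark is never used.
\item $C\setminus D$ is grafted onto $D$, not at $x$. $K$ is still a tree, because $K'$ and $G[C\cup\{x\}]$ are trees meeting in the connected set $D\cup\{x\}$.
\item The obstruction-based justification in your last paragraph is unnecessary. The hereditary argument you already gave settles the normalisation.
\end{itemize}
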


\begin{proof}
  The forward direction ($\Rightarrow$) of this reduction rule is trivial due to \Cref{obs:reduction}.

  We give the backward direction ($\Leftarrow$) of the proof.
  For the ease of notation, let $\widehat{D} = C \setminus D$, and $X$ be a set of at most $k$ vertices such that $G - (\widehat{D} \cup X)$ is a {\pitgraph}.
  We argue that $G - X$ is also a {\pitgraph}.
  Suppose for the sake of contradiction that $G - X$ is a not a {\pitgraph}.
  First and most crucial observation is that $\widehat{D} \subseteq C$ such that $C$ is a pendant tree.
  Hence, no double-edge is incident to the vertices of $\widehat{D}$.
  As $G - (X \cup \widehat{D})$ has no double-edge, it follows that $G - X$ cannot have a double-edge.
  Additionally, the vertices of $\widehat{D}$ do not participate in any hole of $G$.
  As $G - (\widehat{D} \cup X)$ does not contain a hole, there cannot exist any hole in $G - X$ that contains the vertices of $\widehat{D}$.
  Similarly, the vertices of $\widehat{D}$ do not participate in any triangle of $G$.
  Hence, the vertices of $\widehat{D}$ does not participate in any triangle of $G - X$.
  Then, the only possible case that $G - X$ is not a {\pitgraph} is that there is a component $\widehat{C}$ of $G - X$ that contains a claw $J$ as well as a triangle $T$.
  We analyze the structure of the component $\widehat{C}$ along with the $J$ and $T$.

  First crucial observation is that the triangle $T$ appears in $G - (\widehat{D} \cup X)$ and $T$ cannot intersect any vertex of $C$.
  Consider a shortest path $P_{J, T}$ between $T$ and $J$ in $G - X$.
  Additionally, due to the structure that $G[C \cup \{x\}]$ is a tree, $x$ separates $C$ from $G - (C \cup \{x\})$, and any vertex of $C$ is a cut-vertex of the connected component of $G$ containing $C$, any path from $T$ to any vertex $w \in \widehat{D}$ contains the vertices of $P_{x, w}$ that is a unique path between $x$ and $w$ in $G[C \cup \{x\}]$.
  Due to the construction of $G'$, observe that the vertices $D$ contains a path $P_{x, v}$ from $x \in S$ to a vertex $v \in C$ such that $v \in H_C$.
  In addition, the choice of $D$ ensures that $v$ has three neighbors in $D$ and each of those three neighbors are in $C \cup \{x\}$.
  Let $x_1, x_2, x_3$ be the neighbors of $v$ in $D$ such that $x_1, x_2, x_3 \in C \cup \{x\}$.
  Then, $\{v, x_1, x_2, x_3\}$ induces a claw.
  In addition, the choice of $v \in H_C$ is such that no other vertex of $H_C$ is closer to $x$ in $G$.
  It implies that $v$ separates $x$ to any other vertex of $H_C$ in $G$ and $D \cup \{x\}$ induces a connected subgraph which is a tree.
  Hence, a shortest path $P_{J, T}$ from triangle $T$ to claw $J$ must pass through $P_{x, v}$.
  \begin{description}
    \item[Case-(i):] $J$ is the claw $\{v, x_1, x_2, x_3\}$.
    In this case, $J \subseteq D$.
    As $D$ is a connected subgraph and $D \subseteq \widehat{C}$, observe that the triangle $T$, the claw $J$, and the path $P_{J, T}$ completely avoids the vertices of $\widehat{D}$.
    Then, the vertices of $J \cup T$ and the path $P_{J, T}$ is in a connected component of $G - (\widehat{D} \cup X)$ that is neither a tree or nor a proper interval graph.
    This is a contradiction to the assumption that $X$ is a solution to the graph $G - \widehat{D}$.
    \item[Case-(ii):] $J$ is a claw that intersects $\widehat{D}$.
    But then $P_{J, T}$ passes through $P_{x, v}$.
    Then, $P_{J, T}$ also contains one of the neighbors $x_1$ or $x_2$ or $x_3$.
    Suppose that $P_{J, T}$ contains $x_1$.
    Then, we consider $J' = \{v, x_1, x_2, x_3\}$.
    Note that $J'$ also induces a claw and any path between $T$ and $J$ must intersect $J'$.
    Hence, $J'$ appears in shorter distance to $T$ than $J$ in terms of distance in $G$.
    But, then the triangle $T$, the claw $J'$, and the unique path $P_{x, v}$ is part of the graph $G - (\widehat{D} \cup X)$.
    Then, $G - (X \cup \widehat{D})$ has a connected component that contains both claw and a triangle.
    This contradicts that $X$ is a solution to $G - \widehat{D}$.
  \end{description}
  As both the cases above lead to a contradiction, it implies that $G - X$ is a {\pitgraph}.
  Hence, the reduction rule is safe.
\end{proof}

\begin{lemma}
  \label{lem:C-pendant-tree-property}
  Let $C$ be a pendant tree attached to $x$.
  If $G[C \cup \{x\}]$ is not a path and Reduction Rules \ref{rule:degree-2-overbridge}--\ref{rul:pendant-tree} are not applicable, then all but at most 5 vertices are deleted from $C$.
\end{lemma}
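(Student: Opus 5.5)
Since $G[C\cup\{x\}]$ is a tree that is not a path, some vertex of $C$ has degree at least $3$ in $G[C\cup\{x\}]$, and hence also in $G$ (the edges inside $C\cup\{x\}$ are edges of $G$). So the set $H_C$ of \Cref{rul:pendant-tree} is non-empty and the rule can be applied to $C$. I read the statement as describing the pendant tree after that application. The rule keeps only the set $D$: the path $P_{x,v}$ from $x$ to the closest branching vertex $v\in H_C$, plus two neighbours $y_1,y_2$ of $v$ off that path. Everything else of $C$ is deleted. The task is to show that the retained part of $C$, namely $D\setminus\{x\}$, has at most $5$ vertices. For this it suffices to bound the number of internal vertices of $P_{x,v}$ (those different from $x$ and $v$) by $2$: then $D\setminus\{x\}$ consists of at most $2$ internal path vertices, $v$, $y_1$ and $y_2$, which is at most $5$ vertices.

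To bound the path, I will use the non-applicability of \Cref{rule:degree-2-overbridge}. By the choice of $v$ as the vertex of $H_C$ closest to $x$, every internal vertex of $P_{x,v}$ lies in $C\setminus H_C$, so it has degree at most $2$ in $G$. It has two neighbours on the path, since $C\cup\{x\}$ induces a tree and the path is unique, so its degree is exactly $2$. Thus $P_{x,v}$ is a degree-2-path. It is not a degree-2-tail, because $d_G(v)\ge 3$. It also cannot end at a degree-1 vertex, because $x$ is a cut vertex and therefore has a neighbour outside $C$ in addition to its neighbour on the path.

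For the degree of $x$ there are two cases. If $d_G(x)\ge 3$, then $P_{x,v}$ is a degree-2-overbridge. If $d_G(x)=2$, extend the path beyond $x$ through degree-2 vertices until a vertex of degree at least $3$ is reached. This extended path is still not a degree-2-tail: if it ended at a degree-1 vertex, the component containing $C$ would be a tree and would already have been removed by \Cref{rul:isolated-component-reomval}. In both cases, non-applicability of \Cref{rule:degree-2-overbridge} forces the maximal degree-2-path containing $P_{x,v}$ to have at most $4$ vertices, so $P_{x,v}$ has at most $2$ internal vertices.

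After the deletion, $y_1$ and $y_2$ become leaves, so $C$ now induces a spider with centre $v$. No further reduction can shrink it, since every remaining vertex is needed to keep the claw at $v$. This gives the bound of at most $5$ vertices.

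The main obstacle is the degree bookkeeping. The degrees of $v$ and $x$ change once $C\setminus D$ is deleted, and it is not immediate whether \Cref{rule:degree-2-overbridge} should be checked before or after that deletion. I would carry out the counting in the graph obtained after applying \Cref{rul:pendant-tree}. In that graph $v$ still has degree exactly $3$ (its path neighbour together with $y_1$ and $y_2$), so the exhaustiveness argument above still applies.
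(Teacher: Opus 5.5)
Your proposal is correct and follows the paper's proof: you take the branching vertex $v$ nearest to $x$, note that only $P_{x,v}$ and two further neighbours of $v$ survive, and use non-applicability of \Cref{rule:degree-2-overbridge} to cap the internal vertices of $P_{x,v}$ at two. The case split on $d_G(x)$, and with it the appeal to \Cref{rul:isolated-component-reomval}, is unnecessary. A degree-2-path puts no condition on its endpoints, and neither $x$ nor $v$ has degree $1$, so whenever $P_{x,v}$ has at least $5$ vertices it already meets the premise of \Cref{rule:degree-2-overbridge}, which is how the paper argues.
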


\begin{proof}
  If $C$ is a pendant tree attached to $x$ such that $G[C \cup \{x\}]$ is not a path, then $C$ contains a vertex of degree at least 3 in $G$.
  Let $v \in C$ be the vertex of degree at least 3 that is nearest to $x$ and $P_{x, v}$ be the unique path between $x$ and $v$ in $G$.
  Hence, every vertex of $P_{x, v} \setminus \{x, v\}$ has degree exactly 2 in $G$.
  \Cref{rul:pendant-tree} chooses to keep the vertices of $P_{x, v}$ and 2 additional neighbors $w_1, w_2$ of $v$ in $C$ that are not in $P_{x, v}$.
  Subsequently, \Cref{rul:pendant-tree} deletes the other vertices from $C$.
  Note that after deletion of the other vertices, $w_1$ and $w_2$ become pendant vertices in $G$.
  Now, we focus on the number of vertices in $P_{x, v}$.
  If $P_{x, v}$ has exactly $r+2$ vertices including $x$ and $v$ (for some $r \geq 0$), then $P_{x, v}$ has exactly $r$ internal vertices (and exactly $r$ vertices).
  In particular, each internal vertex of $P_{x, v}$ has degree exactly 2 in $G$, because $v \in C$ is the nearest vertex to $x$ that has degree at least 3 in $G$.
  Hence, $P_{x, v}$ is a degree-2-path of $G$ with $r$ internal vertices.
  As \Cref{rule:degree-2-overbridge} is not applicable, $r \leq 2$.
  Subsequently, as \Cref{rul:pendant-tree} is not applicable, in addition to the vertices of $P_{x, v}$, two additional neighbors of $v$ are kept, and the other vertices are deleted from $C$.
  As $P_{x, v}$ contains at most 3 vertices that includes $v$ but excludes $x$, and two additional neighbors of $v$ in $C$ are kept, therefore, at most 5 vertices in total are kept in $C$ after application of \Cref{rul:pendant-tree}.
  Therefore, all but at most 5 vertices are deleted from $C$.
\end{proof}

Consider a cut vertex $x$.
Our next reduction rules ensures that it is sufficient to keep at most 3 pendant trees attached to $x$.
It also ensures us that a claw containing $x$ and the vertices of pendant trees attached to $x$ is preserved.

\begin{reductionrule}
  \label{rul:pendant-trees}
  Let $x$ be vertex such that there are at least $\ell \geq 4$ different pendant trees attached to $x$ and $Z=\cup_{j=4}^{\ell} C_j$.
  Then, the new instance is $(G - Z, k)$.
\end{reductionrule}

\begin{lemma}
  \label{lemma:rule-3-pendant-trees-safe}
  \Cref{rul:pendant-trees} is safe.
\end{lemma}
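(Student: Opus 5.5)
The forward direction ($\Rightarrow$) follows immediately from \Cref{obs:reduction}, since $G - Z$ is an induced subgraph of $G$. The backward direction is the substance. Let $X$ be a solution of size at most $k$ for $(G - Z, k)$. Let $C_1, C_2, C_3$ be the three retained pendant trees attached to $x$. I will show that $G - X$ is a simple {\pitgraph}. Suppose not, and take a minimal obstruction $O$ in $G - X$ that must meet $Z$.

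The first step uses the tree structure to rule out most obstruction types. Every vertex of $Z$ lies in some $C_j \cup \{x\}$ that induces a tree and is separated from the rest of $G$ by $x$. Hence no vertex of $Z$ is incident to a double-edge or lies on a cycle, so no hole, triangle or tent meets $Z$.

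A net meeting $Z$ is also impossible. Its triangle must avoid $Z$, and every vertex of $Z$ lies in a pendant tree. So the only way a net could reach $Z$ is through a pendant vertex of the net whose triangle neighbour is $x$. Since the net is induced, at most one of its pendant vertices hangs off $x$. I can then replace that pendant vertex by a vertex of $C_1$, $C_2$ or $C_3$ adjacent to $x$, chosen outside $X$. This produces a net in $G - (X \cup Z)$, a contradiction. The one point to check is that such a replacement vertex exists outside $X$, and I handle this in the case split below.

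This leaves a component of $G - X$ containing a claw $J$ and a triangle $T$, where $J$ or the connecting path meets $Z$. The triangle $T$ avoids $Z$. Any path from $T$ into $Z$ passes through $x$, so $x \notin X$ and $x$ is in the same component as $T$. I split into two cases.

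\textbf{Case 1: $x$ keeps at least one retained neighbour outside $X$.} If the path from $T$ to $J$ uses $x$ and $J$ meets $Z$, I will replace $J$ by a claw near $x$.
\begin{itemize}
\item If $J$ is centred at $x$, I replace its leaves in $Z$ with the roots of $C_1, C_2, C_3$. These roots are pairwise non-adjacent and non-adjacent to the other leaves, because the trees are pendant.
\item If $J$ lies inside some $C_j \subseteq Z$, I use a claw at $x$ built from the roots of the retained trees plus another neighbour. Alternatively, I use a degree-$3$ vertex inside $C_1$, which \Cref{rul:pendant-tree} preserved.
\end{itemize}

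\textbf{Case 2: $X$ destroys the retained trees near $x$.} Here I argue via a swap instead of a direct replacement. Since $|X|$ is small relative to the structure being destroyed, I exchange those deletions for $x$. That is, I take $X' = (X \setminus (C_1 \cup C_2 \cup C_3)) \cup \{x\}$, which has $|X'| \le |X|$ whenever $X$ hits the retained trees. With $x$ removed, every pendant tree becomes its own tree component. I then verify that $X'$ is a solution for $G$, using that $G - (X \cup Z)$ minus its pendant parts still has no obstruction.

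I expect the main obstacle to be this exhaustive case analysis: the retained trees may themselves be partially hit by $X$, and I must justify both the replacement claw in Case 1 and the swap argument in Case 2.
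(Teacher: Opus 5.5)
\textbf{The case split leaves a gap in the claw--triangle part.} You define Case 1 as ``$x$ keeps at least one retained neighbour outside $X$''. That is enough for your net argument, where one surviving root suffices. It is not enough for your claw replacements, which use the roots $r_1,r_2,r_3$ of all of $C_1,C_2,C_3$ (or at least two of them plus a degree-$3$ vertex of $C_1$ that need not exist). So when $X$ contains one or two of $r_1,r_2,r_3$, you are in Case 1, the replacement fails, and Case 2 does not apply.

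This is not just a presentation issue. Take $G$ with edges $xa, xb, ab$ and $xr_1,\dots,xr_6$, with $C_j=\{r_j\}$. Set $Z=\{r_4,r_5,r_6\}$, $X=\{r_2,r_3\}$ and $k=2$.
\begin{itemize}
\item $G-(X\cup Z)$ is a triangle with one pendant vertex, so $X$ solves $(G-Z,k)$.
\item $G-X$ contains the claw $\{x,r_4,r_5,r_6\}$ and the triangle $xab$ in one component.
\item You are in your Case 1 because $r_1\notin X$.
\item $G-(X\cup Z)$ contains no claw at all, so no replacement argument can yield a contradiction. $X$ itself must be changed.
\end{itemize}

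\textbf{The fix.} Split on whether $X\cap(C_1\cup C_2\cup C_3)=\emptyset$.
\begin{itemize}
\item If it is empty, all three roots survive, and $\{x,r_1,r_2,r_3\}$ is an induced claw in $G-(X\cup Z)$. It lies in the same component as $T$, because a shortest path from $T$ to $x$ avoids $Z$ (since $N(Z)=\{x\}$). This is an immediate contradiction, in the net case as well, since the net's triangle contains $x$. You then need no case analysis on the shape of $J$ and no appeal to \Cref{rul:pendant-tree}.
\item Otherwise $X$ meets a retained tree. Your swap $X'=(X\setminus(C_1\cup C_2\cup C_3))\cup\{x\}$ has $|X'|\le|X|$ and works as you sketch: with $x$ deleted, each pendant tree is a tree component, and the rest of $G-X'$ is an induced subgraph of $G-(X\cup Z)$.
\end{itemize}

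This is exactly the paper's route. It uses the claw $\{x,y_1,y_2,y_3\}$ with $y_j\in C_j$ to force $X$ to hit a retained tree, and then performs this swap.

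Alternatively, if you take $X$ to be a minimum solution of $(G-Z,k)$, then $|X\cap(C_1\cup C_2\cup C_3)|\le 1$. Otherwise the swap would give a strictly smaller solution of $(G-Z,k)$. Hence two roots survive, and together with the neighbour of $x$ towards $T$ they give a claw at $x$ in $G-(X\cup Z)$.
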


\begin{proof}
  If $X$ is a solution to $(G, k)$, then due to \Cref{obs:reduction}, $X$ is a solution to $(G - Z, k)$.
  Therefore, the forward direction ($\Rightarrow$) follows.

  Now, to prove the backward direction ($\Leftarrow$), assume that $X$ is an optimal solution to $(G - Z, k)$ of size at most $k$.
  We prove that $(G, k)$ has a set of at most $k$ vertices whose deletion results in a {\pitgraph}.
  Suppose for the sake of contradiction that $X$ is not a solution to $(G, k)$.
  Then, there exists an obstruction $O$ in $G - X$ which intersect $Z$.
  Observe that the vertices of $O \cap Z$ cannot participate in any cycle.
  Hence, $O$ cannot become a tent or an induced cycle of length at least 4 or a double-edge.
  It means that $O$ is a net, or a claw $J$ along with a triangle $T$ appearing in the same connected component of $G - X$.
  \begin{description}
  	\item[Case-(i):] $O$ is a net.
  	Then, $O$ must contain a pendant vertex the neighbor of which participates in a triangle.
  	It implies that $O$ contains $x$ that participates in a triangle.
  	Then, $O$ contains a vertex $y \in Z$ such that the edge $xy$ separates $Z$ from $G - Z$.
  	In such a case, choose $w \in C_1$ such that $xw \in E(G)$.
  	Observe that $(O \setminus \{y\}) \cup \{w\}$ is a net in $G - Z$.
  	As $X$ is a solution to $(G - Z, k)$ and $x \notin X$, it must be that $w \in X$.
  	We construct $\widehat{X} = (X \setminus \{w\}) \cup \{x\}$.
  	
  	Observe that any component of $(G - (Z \cup \widehat{X})$ is $C_1$ or an induced subgraph of another component $D$ of $G - (X \cup Z)$.
  	Since $G - (X \cup Z)$ is a {\pitgraph}, and it is hereditary, it follows that  $(G - (Z \cup \widehat{X})$ is a {\pitgraph}.
  	Since $|\widehat{X}| = |X|$, therefore $\widehat{X}$ is a solution to $(G - Z, k)$.
  	
  	Now, observe that $G - \widehat{X}$ is a disjoint union of $G - (Z \cup \widehat{X})$ and the connected components in $G[Z]$.
  	Hence, $G - \widehat{X}$ is a {\pitgraph}.
  	Therefore, there exists a solution of size $k$ to $(G, k)$.  	
  	\item[Case-(ii):] There is a connected component $D$ of $G - X$ that contains a triangle $T$, and a claw $J$.
  	Since $G - (\widehat{X} \cup Z)$ did not have any component with both claw and a triangle, observe that $D$ has been created because of the vertices of $Z$ being added.
  	Furthermore, note that $x$ separates the vertices of $Z$ from the vertices of $G - (X \cup {Z})$ and $G[Z]$ is a forest.
  	Hence, $D$ contains $x$.
  	As $G[Z]$ is a forest and the vertices of $Z$ does not intersect in any triangle, it means that there are vertices in $Z$ that intersects the claw $J$ in $G[D]$.
  	Observe that every path from $J$ to $T$ passes through $x$.
  	But, then we consider a claw $J'$ in $G$ with center $x$ and its three neighbors $y_1 \in C_1, y_2 \in C_2$, and $y_3 \in C_3$.
  	Observe that $C_1, C_2, C_3$ are in $G - Z$ also, hence this claw $J'$ is also present in $G - Z$.
  	Since the triangle $T$ is disjoint from $X$, a path between $T$ and $x$ is present in $G - (X \cup Z)$, and $X$ is a solution to $(G - Z, k)$, it must be that $y_1 \in X$ or $y_2 \in X$ or $y_3 \in X$.
  	We construct $\widehat{X} = (X \setminus (C_1 \cup C_2 \cup C_3)) \cup \{x\}$ and claim that $\widehat{X}$ is a set of at most $k$ vertices such that $G - \widehat{X}$ is a {\pitgraph}.
  	
  	As $x \in \widehat{X}$, every pendant tree $C_i$ is a connected component of $G - \widehat{X}$.
  	Subsequently, let us compare the graphs $G - (X \cup Z)$ and $G - \widehat{X}$ other than the components $C_1,\ldots,C_{\ell}$.
  	Note that the connected component of $G - (X \cup Z)$ that contains $x$ has been broken into multiple components when we consider the graph $G - \widehat{X}$.
  	That particular component of $G - (X \cup Z)$ containing $x$ is a proper interval or a tree.
  	Due to hereditary property, after being broken also, it remains a proper interval graph or a tree.
  	But the other components of $G - (X \cup Z)$ that do not intersect $C_1 \cup C_2 \cup C_3$ are the same as the components of $G - \widehat{X}$ that are disjoint from $C_1,\ldots,C_{\ell}$.
  	Since every component of $G - \widehat{X}$ is a proper interval graph or a tree, $G - \widehat{X}$ is a {\pitgraph}.
  	
  	We consider $|\widehat{X}|$. 
  	Note that $X \cap (C_1 \cup C_2 \cup C_3) \neq \emptyset$.
  	Hence, at least one vertex will be removed from $X$ and exactly one vertex is added when constructing $\widehat{X}$.
  	Therefore, $|\widehat{X}| \leq |X| \leq k$.
  	This completes the proof that $G$ has a set of at most $k$ vertices the deletion of which results in a {\pitgraph}.
  \end{description}
As the cases are mutually exhaustive, this completes the proof of the lemma.
\end{proof}

\subsection{Bounding the Tree Components of \texorpdfstring{$G - S$}{G - S}}
\label{sec:bounding-tree-components}
This section is devoted to provide an upper bound on the number of vertices in $G\left[V_{2}\right]$.
Note that $G\left[V_{2}\right]$ is the collection the tree components of $G - S$.
Let $F_1 = N_G(S) \cap V_2$, and $F_2 = V_2 \setminus F_1$.
First, we design a reduction rule that helps us to bound the number of vertices in $F_2$.
We partition the vertices of $F_2$ as follows.
Consider two distinct vertices $x, y \in F_1$ such that there is a (unique) path between $x$ and $y$ in $G[V_2]$.
We say that a vertex $u \in F_2$ is {\em $(x, y)$-connecting} if $u$ appears in the unique path between $x$ and $y$ in the graph $G[V_2]$.
We use $F_3$ to denote the vertices that are $(x, y)$-connecting for any two vertices of $F_1$, and focus on the vertices of $F_3$ and $F_2 \setminus F_3$ separately.

\begin{figure}[ht]
  \centering
  \includegraphics[scale=0.65]{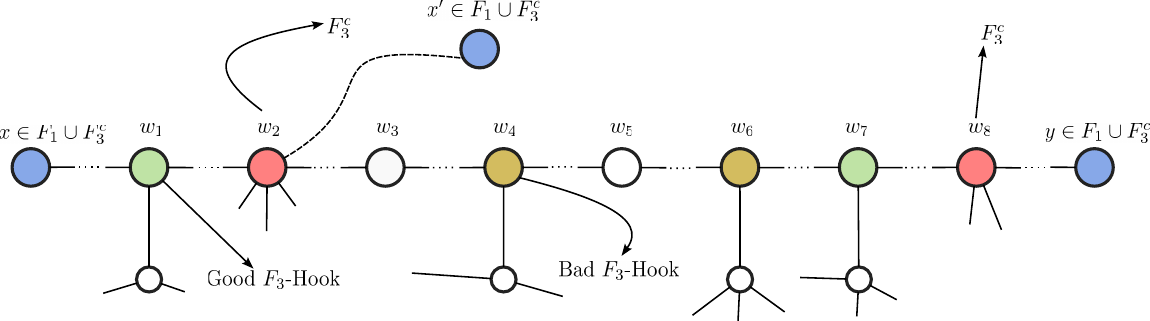}
  \caption{An illustration of good (green) and bad (yellow) $F_3$-hooks. White vertices: those on the $(x,y)$-path are in $F_3$; others are $w_i$'s hanger. Red vertices are in $F_3^c$.}
  \label{fig:hook}
\end{figure}

Now, we consider the graph $G[F_3 \cup F_1]$.
As every vertex of $F_3$ is $(x, y)$-connecting for $x, y \in F_1$, if a vertex $w$ is from $F_3$, then $w$ must have at least two neighbors in $G[F_1 \cup F_3]$.
Hence, the number of leaves in $G[F_3 \cup F_1]$ is at most $|F_1|$.
If a vertex $w \in F_3$ has at least three neighbors in $G[F_3 \cup F_1]$, then we say that $w$ is {\em $F_3$-critical}.
We use $F^c_3$ to denote the set of all $F_3$-critical vertices.
It is not hard to observe that $|F_3^c| \leq |F_1|$.
Additionally, it is possible that a vertex $w \in F_3$ that has degree exactly 2 in the graph $G[F_1 \cup F_3]$, but there is a pendant tree $C$ attached to $w$ such that $C \subseteq G[V_2]$.
If $w \in F_3$ has degree exactly 2 in $G[F_1 \cup F_3]$ and there is a pendant tree $C$ attached to $w$ such that $C \subseteq G[V_2]$, then {we say that $w$ is an {\em $F_3$-hook} and $C$ is a {\em $w$-hanger}.}
Observe that the set of all $F_3$-hooks is precisely a subset of $F_3 \setminus F_3^c$.
Consider an $F_3$-{hook} $w$ that appears in the unique path $P_{x, y}$ such that $x, y \in F_1 \cup F_3^c$.
We say that $w$ is {\em good} if
\begin{itemize}
	\item $w$ is closest to $x$ among all possible $F_3$-{hooks} that appear in $P_{x, y}$; or
	\item $w$ is  closest to $y$ among all possible $F_3$-{hooks} that appear in $P_{x, y}$.
\end{itemize} 

If an $F_3$-{hook} $w$ is not good, then we say that $w$ is a {\em bad} $F_3$-{hook}.
We refer to \Cref{fig:hook} for an illustration.
Observe that if $w$ is a bad $F_3$-hook, there exist $x, y \in F_1 \cup F_3^c$ such that $w$ appears in the unique path $P_{x, y}$ between $x$ and $y$ in $G[V_2]$.
In addition, there are two good $F_3$-hooks $v$ and $u$ that also appear in $P_{x, y}$.
Hence, there is a claw incident at $v$ (respectively, incident at $u$) that appears closer to $x$ (respectively, closer to $y$) than the claw incident to $w$.
Our next reduction rule ensures that the claw incident to $v$ and the claw incident to $u$ are preserved.
This is crucial to the intuition that our next reduction rule is safe.

\begin{reductionrule}
  \label{rule:remove-F3-hangers}
  Let ${w} \in F_3$ be a bad $F_3$-{hook} and $C$ be a $w$-hanger.
  Then, we delete $C$ from $G$.
  The new instance is $(G', k)$.
\end{reductionrule}

\begin{lemma}
  \label{lem:remove-F3-hangers-safe}
  \Cref{rule:remove-F3-hangers} is safe.
\end{lemma}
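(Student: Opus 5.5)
The forward direction ($\Rightarrow$) is immediate from \Cref{obs:reduction}, since $G' = G - C$. For the backward direction ($\Leftarrow$), let $X$ be a solution to $(G', k)$. I will show that either $X$ itself is a solution to $(G,k)$, or it can be modified into a solution of size at most $k$.

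First I examine which obstructions of $G - X$ can meet $C$. Since $C$ is a pendant tree attached to $w$, no vertex of $C$ lies on a cycle, a triangle, or a double-edge of $G$. So $C$ cannot meet a hole, a tent, or a double-edge. A net meeting $C$ would need $w$ to lie on a triangle. But $w \in V_2$ lies in a tree component of $G-S$ and has degree exactly $2$ in $G[F_1 \cup F_3]$, so all its neighbours outside the hangers are path neighbours, and $w$ lies on no triangle of $G$. Hence the only way $G - X$ can fail to be a {\pitgraph} is a connected component $D$ of $G-X$ containing a triangle $T$ and a claw $J$ with $J \cap C \neq \emptyset$; here $w \in D$, and every path from $T$ to $J$ passes through $w$. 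As in the proof of \Cref{lem:pendant-tree-safe}, I may replace $J$ by the claw centred at $w$: its centre $w$, its two path-neighbours on $P_{x,y}$, and its neighbour in $C$. This claw is nearer to $T$, so I can assume $J$ is this claw. The path from $T$ to $w$ enters $P_{x,y}$ through one of its two sides, say through the $x$-side; the path must leave $V_2$ through $F_1$ toward $S$ because $T \notin V_2$.

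Next I use the good hooks. Let $v$ be the good $F_3$-hook on $P_{x,y}$ closest to $x$, lying between $x$ and $w$, and let $u$ be the good hook closest to $y$. Choose $C_v$ to be a $v$-hanger. Then $v$, its two neighbours on $P_{x,y}$, and a neighbour of $v$ in $C_v$ induce a claw $J_v$ in $G'$. If $X$ avoids $J_v$ and the subpath of $P_{x,y}$ from $v$ to $w$, then in $G'-X$ the triangle $T$ reaches $w$ through the $x$-side, hence passes through $v$. This puts $T$ and $J_v$ in the same component of $G'-X$, a contradiction. Therefore $X$ must intersect $J_v$, $C_v$, or the $v$–$w$ subpath of $P_{x,y}$. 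Symmetrically, if $T$ is reached through the $y$-side, the same holds with $u$.

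The main obstacle is the case where $X$ breaks $J_v$ only inside $C_v$, or only at the neighbour of $v$ on the $w$-side, so that $T$ still reaches $w$ and $J$ is intact. My plan is a swap argument. Let $v'$ be the neighbour of $v$ on $P_{x,y}$ towards $x$ (possibly $x$ itself). I set $\widehat{X} = (X \setminus (C_v \cup V(P_{v,w}) \setminus\{v\})) \cup \{v'\}$, or more simply replace one vertex of $X \cap (C_v \cup J_v)$ by $v'$ (or by $v$). Then $|\widehat{X}| \le |X|$. I would argue that every component created in $G - \widehat{X}$ is either a piece of $V_2$ cut off from the triangles, hence a tree, or an induced subgraph of a component of $G'-X$, and I would use the hereditary property as in \Cref{lemma:rule-3-pendant-trees-safe}. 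The delicate point is checking that cutting at $v'$ separates everything between $v$ and the $y$-side hooks from $T$ through the $x$-side, while the $y$-side is still guarded by $u$. This is exactly where the two good hooks per path are needed, and possibly one must apply the argument on both sides simultaneously when the component reaches both $x$ and $y$.
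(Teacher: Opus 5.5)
Your plan matches the paper's: reduce to a claw--triangle pair meeting $C$, take the nearest good hook $v$ (resp.\ $u$) on the side of $w$ through which a triangle is reached, and trade a hanger-neighbour of $v$ for $v$. But the backward direction is not actually carried out, and your ``main obstacle'' rests on a misreading of what $X$ is forced to contain.

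\textbf{What $X$ must contain.} The missing observation is that $D\setminus C$ is a connected component of $G'-X$ containing $T$, so it is a proper interval graph, hence claw-free and chordal. The $T$--$w$ path in $G-X$ enters along $P_{x,y}$ from the $x$-side, and internal segment vertices have only segment-neighbours and pendant hangers. So this path runs through $v$ and both of its path-neighbours, and none of them lies in $X$. Claw-freeness of $D\setminus C$ then forces $X$ to contain the neighbour $v_1$ of $v$ in every $v$-hanger. Of your two ``obstacle'' cases, the first ($X$ breaks $J_v$ only inside $C_v$) is the only case. The second ($X$ hits the $w$-side neighbour of $v$ while $T$ still reaches $w$) cannot occur. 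You need this to get $|\widehat X|\le |X|$. Your weaker disjunction (``$X$ meets $J_v$, $C_v$ or $P_{v,w}$'') does not exclude $X\cap J_v=\{v\}$, and in that case your first choice of $\widehat X$ is larger than $X$.

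\textbf{The separation step.} You leave open as ``delicate'' that, after deleting $v$, no route from $w$ back to a triangle survives on the $x$-side. The paper obtains this in \Cref{claim:v-separates-w-from-x}. Since $v\in F_2$ has no neighbour in $S$ and $G[V_2]$ is a forest, $v$ lies on no triangle. Since $D\setminus C$ is chordal, $v$ then lies on no cycle of $D\setminus C$. Alternatively, use directly that vertices strictly inside the segment have only segment-neighbours and pendant hangers.

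\textbf{Finishing the argument.} Swap $v_1\mapsto v$, and also $u_1\mapsto u$ if a triangle is reached through the $y$-side as well; the swaps use distinct vertices, so the size is preserved. The component of $w$ in $G-\widehat X$ is then a triangle-free connected subgraph of $D$. The only induced cycles of $D$ are triangles, because $C$ is a pendant tree and $D\setminus C$ is chordal. So this component is a tree. Every other component is a subtree of a hanger or an induced subgraph of a component of $G'-X$.

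\textbf{Minor corrections.} Drop the claim that $J$ may be replaced by the claw centred at $w$. The far path-neighbour of $w$ may lie in $X$, for instance when $J$ is centred at a vertex of $C$. You never use this replacement anyway. Also, $w$ lies on no triangle because $w\in F_3\subseteq F_2$ has no neighbour in $S$, not because of its degree in $G[F_1\cup F_3]$.
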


\begin{proof}
  Let $C$ be a $w$-hanger such that $w \in F_3$ is a bad $F_3$-hook.
  Clearly, from \Cref{obs:reduction}, if $(G, k)$ is a yes-instance, then $(G- C, k)$ or $(G', k)$ is also a yes-instance.
  Hence, the forward direction ($\Rightarrow$) follows.

  Now, for the backward direction ($\Leftarrow$), we assume that $X$ be a solution of size $k$ such that $G' - X$ is {\pitgraph}.
  Observe that $G - (C \cup X)$ is the same graph as $G' - X$.
  If $G - X$ is a {\pitgraph}, then we are done.
  {Otherwise, when $G - X$ is not a {\pitgraph}, we modify $X$ to construct an alternate solution $X'$ of size at most $|X|$.}

  Consider the case when $G - X$ is not a {\pitgraph}.
  Then, there is a connected component $D$ of $G - X$ that is neither a tree nor a proper interval graph.
  Since $G' - X$ is a {\pitgraph} but $G - X$ is not, the component $D$ intersects $C$.
  As $C$ is a pendant tree attached to $w \in D$, therefore it follows that $D$ is the only connected component that is neither a proper interval graph nor a tree.
  As $w \in F_2$ and $w$ is $(x, y)$-connecting for $x, y \in F_1$, it must be that $w$ is not part of a triangle.
  Hence, the vertices of $C$ cannot participate in any net or a tent.
  In addition, the vertices of $C$ cannot participate in any induced cycle of length at least 4.
  Therefore, $D$ must have a triangle as well as a claw such that the claw intersects the vertices of $C$.
  Consider a {\ctpair} $(J, T)$ that is {closest} in the connected component $D$.
  Note that $J$ must intersect $C$ and a path from $J$ to $T$ passes through a vertex of $F_1$.
  Let $x \in F_1$ be a vertex such that there is a path from the claw $J$ to the triangle $T$ via $w$ and the vertex $x \in F_1$.

  Additionally, observe that $C$ is a pendant tree attached to $w \in D$.
  Then, $D \setminus C$ is also connected and $D \setminus C$ is a connected component of $G' - X$.
  Since $D \setminus C$ has a triangle, it must be that $D \setminus C$ is a proper interval graph.

  Additionally, $(D \setminus C) \cap V_2$ contains a vertex $x \in F_1$ and a bad $F_3$-hook $w$.
  Then, there is a good $F_3$-hook $v \in F_2$ such that $v$ appears in a path between $x$ and $w$ in $D \setminus C$.
  We prove the following claim that is very crucial to our proof.

  \begin{claim}
  \label{claim:v-separates-w-from-x}
  Let $x \in F_1 \cap D$ be a vertex such that there is a path from the claw $J$ to a triangle of $D$ passing through $x$ in the connected component $D$.
  Then, there exists a good $F_3$-hook $v \in D \cap {F_3}$ such that removal of $v$ makes $D$ disconnected.
  In particular, $v$ disconnects $w$ from $x$. 	
  \end{claim}

  \begin{claimproof}
    Let $x \in F_1 \cap D$ be a vertex such that there is a path from claw $J$ to a triangle of $D$ passing through ${x}$ in the connected component $D$.
    Observe that claw $J$ intersects $C$ such that $C$ is attached to a bad $F_3$-hook $w$.
    In addition, observe that $w$ separates $C$ from $D \setminus (C \cup \{w\})$ and there is a path from $J$ to a triangle $\widehat{T}$ in $D$ via the vertex $x$.
    Let $P(\widehat{T}, J)$ denotes the path.
    Then, $P(\widehat{T}, J)$ uses the unique path $P_{x,w}$ from $w$ to $x$ in $G[V_2]$.
    As $w$ is a bad $F_3$-hook and $w$ is $(x, y)$-connecting {for some $y\in F_1$}, there exists a good $F_3$-hook $v$ such that $v$ appears closer to $x$ than $w$ in the path $P_{x, w}$.

    Now, we consider the graph $D \setminus C$.
    We claim that removal of $v$ makes $D$ disconnected.
    Suppose that this is not the case.
    Then, even after removal of $v$, there is another path between every pair of vertices in $D \setminus \{v\}$.
    In particular, there is a path between $x$ and $w$ in $D \setminus \{v\}$.
    Such a path cannot pass through the vertices of $C$, as $C$ is a pendant tree.
    Observe that $D \setminus C$ is connected (because removal of a pendant tree does not disconnect the graph). 
    Additionally, $D \setminus C$ has a triangle.
    Hence, $D \setminus C$ is a proper interval graph.
    The path $P_{x, w}$ is an induced subgraph of $D \setminus C$ and contains a good $F_3$-hook $v$ that is an internal vertex.
    But $w$ and $v$ have no neighbor in $S$.
    If another path from $w$ to $x$ exists avoiding $v$, then such a path must pass through some vertex of $S$ and, it must pass through a vertex $y \in F_1$.
    Since $w$ and $v$ are $(x, y)$-connecting, it implies that there is an $x-y$ induced path in $D\setminus S$ which contains $w, v$ both.
    Then, it {creates} an induced cycle of length at least 4 in $D \setminus C$.
    This leads to a contradiction that $D \setminus C$ is a proper interval graph.
    Hence, removal of $v$ makes $D \setminus C$ disconnected.
    Since $C$ is attached to only one vertex $w \in D \setminus C$, therefore, $D \setminus \{v\}$ induces a disconnected subgraph.

    Now, let us justify that $v$ separates $x \in F_1$ from $w$ in $D$.
    Observe that the graph $D \setminus C$ is a proper interval graph.
    The vertex $v$ appears in a path $P_{x, w}$ between $x$ and $w$ in $D \setminus C$.
    In particular, that path $P_{x, w}$ is a unique path between $x$ and $w$ in $G[V_2]$.
    Since $x$ and $w$ are not adjacent, therefore any path between $x$ and $w$ must pass through $v$.
    Therefore, removal of $v$ makes $D$ disconnected.
  \end{claimproof}

  We follow a modification procedure to prove that there exists a set $X'$ of at most $k$ vertices in $G$ such that $G - X'$ is a {\pitgraph}.
  In the first step, we initialize $X' := X$.
  Hence, initially $G - X$ and $G - X'$ are the same graphs.

  Then, any path from any triangle of $D$ to the claw $J$ passes through some vertices of $F_1$.
  In addition, $J$ contains $w$ itself, or any path from $J$ to a triangle in $D$ must also pass through $w$ and some vertex of $F_1$.
  Consider any vertex $x \in F_1$ such that there exists a path from a triangle of $D$ to claw $J$ via $x$.
  Let $P^x_J$ be the path from $x$ to the claw $J$ that is a subpath from a triangle of $D$ to the $J$.
  Observe that $P^x_J$ contains $w$.
  As $w$ is a bad $F_3$-hook, there exists a good $F_3$-hook $v \in F_2$ that appears in $P^x_J$ such that the distance of $v$ from $x$ in $G[V_2]$ is smaller than the distance of $w$ from $x$ in $G[V_2]$.
  But $v$ is a good $F_3$-hook. 
  Hence, $v$ has at least three neighbors in $G$.
  Two of those 3 neighbors of $v$ appear in $P^x_J$ and at least one other neighbor $v_1$ of $v$ appears in a pendant tree attached to $v$.
  Let $v_2$ and $v_3$ the other two neighbors of $v$ that appear in the path $P^x_J$.
  Since $v_2, v_3, v \notin X$ and there is a path from $(v, v_1, v_2, v_3)$ to a triangle of $D$, it must be that $v_1 \in X$.
  We set $X' := (X' \setminus \{v_1\}) \cup \{v\}$.
  Observe that $|X'| \leq |X|$.
  Additionally, the vertices of $C \cup \{w\}$ are not in $X'$.{}

  Next, we claim that there is no path from $w$ to $x$ in $G - X'$.
  In case there is a path $P_1$ from $w$ to $x$ in $G - X'$, then $P_1$ passes through some other vertices of $(F_1 \cup S) \setminus \{x\}$.
  The other path from $x$ to $w$ goes via $v$ in the graph $G' - X$.
  This creates a hole in leading to a contradiction that $G' - X$ is a {\pitgraph}.
  Hence, our modification process ensures that in $G - X'$, there is no path from $x$ to $w$. 

  Next, we consider the graph $G - X'$.
  If $G - X'$ is not a {\pitgraph}, then we focus on the component $D'$ that contains a claw formed by the vertices of $C$ and contains a triangle.
  We repeat this process for every $y \in F_1$ such that there is a path from the claw $J$ intersecting $C$ to a triangle in $D'$.
  Observe that this step of replacement does not decrease the size of the modified set.
  In addition, this modification process terminates by iterating over every possible vertices of $F_1$.

  It remains to prove that $G - X'$ is a {\pitgraph}.
  Observe that every step of replacement involves choosing to {include} a good $F_3$-hook.
  Due to \Cref{claim:v-separates-w-from-x}, if a good $F_3$-hook $v$ is chosen, then $v$ separates $w$ from $x \in F_1$ in $D$.
  Since $w$ is separated from every $x \in F_1 \cap D$, therefore, the components of $D$ that do not contain $w$ are proper interval graphs.
  In addition, the component containing $w$ forms a tree since $C$ is a $w$-hanger attached to a bad $F_3$-hook $w$.
  Other components of $G - X'$ are the same as the components of $G - X$.
  Therefore, $G - X'$ is a {\pitgraph}.
\end{proof}

Our above reduction rule ensures us that a helps us to bound the number of leaves in $G[V_2]$ that are non-neighbors of $S$.

\paragraph*{Bounding the Neighbors of $S$ in $V_2$.}
Now, we focus on designing the reduction rules that bound $N_G(S) \cap V_2$.
Our first reduction rule is based on the idea that if there is a $v$-flower of order $4k + 3$, then any optimal solution of size at most $k$ must contain $v$.

\begin{reductionrule}
  \label{rul:flower}
  If there exists a vertex $ v \in S $ such that the induced subgraph $G[\{v\} \cup V_2]$ contains a \emph{$v$-flower} of order $4k+3$, then every solution of size at most $k$ must include $ v $, and the instance reduces to $ (G - v,\, k - 1) $.
\end{reductionrule}

\begin{lemma}
  \label{lem:flower}
  \Cref{rul:flower} is safe.
\end{lemma}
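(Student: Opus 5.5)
The plan is to show that every solution $X$ with $|X|\le k$ must contain $v$. Deleting $v$ then gives an equivalent instance $(G-v,k-1)$.

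The backward direction is immediate. If $X'$ is a solution of $(G-v,k-1)$, then $X'\cup\{v\}$ is a solution of $(G,k)$ of size at most $k$, because $G-(X'\cup\{v\})=(G-v)-X'$.

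For the forward direction, suppose for contradiction that $X$ is a solution of size at most $k$ with $v\notin X$. Let $C_1,\dots,C_{4k+3}$ be the petals of the $v$-flower in $G[\{v\}\cup V_2]$. The petals pairwise meet only in $v$, and $X$ has at most $k$ vertices. Hence at least $3k+3\ge 3$ petals avoid $X$ entirely, and in particular at least three petals survive in $G-X$. Each surviving petal lies, together with $v$, in a single connected component $D$ of $G-X$. This $D$ contains a cycle, so it cannot be a tree. Therefore $D$ must be a simple proper interval graph, and the rest of the argument shows this is impossible.

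The key structural fact is that $V_2$ induces a forest, so every petal $C_i$ passes through $v$ and uses two distinct neighbours of $v$ joined by a path inside the tree components $G[V_2]$. A petal could also be a double edge (a cycle of length $2$). In that case $G-X$ is not simple, and we are done immediately. So assume all surviving petals have length at least $3$.

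\begin{itemize}
\item If some surviving petal has length at least $4$, take a shortest cycle through its vertices using chords. Since $C_i-v$ is an induced path of the forest $G[V_2]$, the only possible chords are edges from $v$ to the path. A shortest such cycle through $v$ is therefore either an induced hole of length at least $4$ or a triangle.
\item So we may assume every surviving petal yields a triangle $v a_i b_i$, with the pairs $\{a_i,b_i\}$ pairwise disjoint. Three surviving petals give three triangles that share only $v$. Then $v,a_1,a_2,a_3$ induces a claw: the $a_i$ lie in different petals, and an edge between them would give a short cycle meeting two petals outside $v$. More carefully, $a_i a_j\in E$ with all vertices in the forest $V_2$ would be an edge of $G[V_2]$ joining two petals, and we then reroute to get a hole instead. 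In either case $D$ contains a claw together with a triangle, contradicting that $D$ is a proper interval graph.
\end{itemize}

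The main obstacle is this last case analysis: guaranteeing an independent triple of neighbours of $v$ across different petals, or else extracting a hole. The cleanest route is as follows. Any edge of $G[V_2]$ joining vertices of two different petals, together with the petal paths, yields a cycle in the forest $G[V_2]\cup$ (paths avoiding $v$). This is impossible unless the cycle uses $v$, and that in turn produces a long induced cycle through $v$. The generous order $4k+3$ leaves slack to discard petals that interact badly with one another: surviving petals are abundant (at least $3k+3$), so greedily we can pick three whose $v$-neighbours are pairwise non-adjacent. If every remaining petal is long, we instead directly find a hole. Either way $G-X$ is not a {\pitgraph}, a contradiction, so $v\in X$.
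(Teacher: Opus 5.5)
\textbf{Verdict: there is a gap, but only in the claw step, and it has a one-line fix.} The rest of your argument is sound. Since $v\notin X$ and the petals meet only in $v$, at least $3k+3$ petals survive intact inside the component $D$ of $G-X$ containing $v$. Your shortest-cycle-through-$v$ argument then correctly turns each surviving petal into a double edge, a hole, or a triangle $va_ib_i$. This even covers non-induced long petals, which the paper glosses over.

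\textbf{Where it fails.} The gap is the step you yourself flag: producing three pairwise non-adjacent neighbours of $v$. The justifications you give there are false.
\begin{itemize}
\item An edge $a_1a_2$ between two petals is perfectly possible. Take $b_1a_1a_2b_2$ to be a path in $G[V_2]$ and $a_3b_3$ a separate edge; this is a forest, and the three triangles still form a flower.
\item In this example no cycle avoiding $v$ arises. The only new cycle through $v$ is the triangle $va_1a_2$.
\item There is no hole to ``reroute'' to. The vertex $v$ is adjacent to all of $a_1,b_1,a_2,b_2$, so $\{v,a_1,b_1,a_2,b_2\}$ induces no hole at all.
\item Your claim that an edge joining two petals yields a cycle in the forest, and hence a long induced cycle through $v$, is therefore wrong.
\item The ``greedy'' choice of three petals with pairwise non-adjacent $v$-neighbours is asserted without any argument.
\end{itemize}
So as written, $\{v,a_1,a_2,a_3\}$ need not be a claw, and nothing you say tells you which representatives to take.

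\textbf{The missing idea.} What you need is bipartiteness of the forest, which is exactly what the paper uses (phrased via K\H{o}nig's theorem). Let $W=\{a_1,b_1,a_2,b_2,a_3,b_3\}\subseteq V_2$.
\begin{itemize}
\item $G[W]$ is a forest, so it has a proper $2$-colouring.
\item Each $a_ib_i$ is an edge, so each colour class contains exactly one vertex $c_i\in\{a_i,b_i\}$ from every pair. Hence it is an independent set of size $3$.
\item Since $v$ is adjacent to every vertex of $W$, the set $\{v,c_1,c_2,c_3\}$ induces a claw in $D$.
\item Together with the triangle $va_1b_1$, this shows $D$ is neither a tree nor a proper interval graph.
\end{itemize}
With this replacement, any three surviving triangle-petals work and no greedy selection is needed, so your proof goes through.

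\textbf{Comparison with the paper.} The paper instead groups the at least $3k+3$ triangle-petals into $k+1$ triples and argues that $X$ must hit each of the resulting $k+1$ claw--triangle pairs, which are disjoint outside $v$. This is equivalent to your pigeonhole argument on surviving petals.
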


\begin{proof}
  Clearly, if $X$ is a solution to $(G - v, k-1)$ instance then $X \cup \{v\}$ is a solution to $(G, k)$ instance.
  Therefore, the backward direction ($\Leftarrow$) follows.

  Now, to prove the forward direction ($\Rightarrow$), we prove that if $X$ is a solution to $(G, k)$ instance then we claim that $X$ must include $v$.
  Suppose for the sake of contradiction that $v \notin X$. Consider the set of cycles $C_1, C_2, \dots, C_{4k+3}$ in the $v$-flower.
  Suppose at least $k+1$ of these cycles are double-edges or induced cycles of length at least 4.
  Then, $X$ must include at least one vertex from each of these cycles since $v \in X$, which implies that $|X| \geq k+1$.
  This contradicts the assumption that $X$ is a solution of size at most $k$.
  
  Hence, at most $k$ cycles can be induced cycles of length at least 4 or double-edges.
  Therefore, at least $3k+3$ cycles must be of length 3, i.e., they are triangles.
  We arrange them in $(k+1)$ pairwise-disjoint triplets arbitrarily and focus on one such triplet, say $ (C_i, C_j, C_{\ell}) $.
  Suppose that for all $p \in \{i, j , \ell\}$, $C_p = \{v, a_{p}, b_{p}\}$.
  Since $G\left[V_{2}\right]$ is a forest and hence, $G[\{a_i, b_i, a_j, b_j, a_{\ell}, b_{\ell}\}]$ is a bipartite graph.
  In particular, the edges $\{a_i b_i, a_j b_j, a_{\ell} b_{\ell}\}$ form a maximum matching of $G[\{a_i, b_i, a_j, b_j, a_{\ell}, b_{\ell}\}]$.
  
  Since $G[\{a_i, b_i, a_j, b_j, a_{\ell}, b_{\ell}\}]$ has a maximum matching of size 3, due to K{\H{o}}nig's Theorem \cite{konig1931graphs} the minimum vertex cover size of $G[\{a_i, b_i, a_j, b_j, a_{\ell}, b_{\ell}\}]$ is exactly 3.
  Therefore, maximum independent set size of $G[\{a_i, b_i, a_j, b_j, a_{\ell}, b_{\ell}\}]$ is 3.
  Without loss of generality, if $a_i, a_j, a_{\ell}$ is an independent set of size of 3 in $G[\{a_i, b_i, a_j, b_j, a_{\ell}, b_{\ell}\}]$, then $(v, a_i, a_j, a_{\ell})$ forms a claw.
	Therefore, from each triplet of these triangles $(C_i, C_j, C_{\ell})$, there is a claw $(v, a_i, a_j, a_{\ell})$ with center $v \in S$ and three other vertices from $C_i \cup C_j \cup C_{\ell} \subseteq V_2$.
	Also, there is a triangle $C_i$ that contains $v$.
	Observe that $((v, a_i, a_j, a_{\ell}), C_i)$ is a closest {\ctpair} in $G[V_2 \cup \{v\}]$.
	{As $G[V_2 \cup \{v\}]$ is a subgraph of $G$, $((v, a_i, a_j, a_{\ell}), C_i)$ is a closest {\ctpair} in $G$ itself.
  Therefore, there are $k+1$ {{\ctpair}s} $(J_1, T_1), (J_2, T_2),\ldots,(J_{k+1}, T_{k+1})$ in $G$ each of whom are closest {\ctpair} in $G$.}
  Furthermore, for every $i \neq j$, $(J_i \cup T_i) \cap (J_j \cup T_j) = \{v\}$.
  Therefore, if $v\notin X$ then $X$ must contain at least one vertex from each of these $k+1$ claw triangle pairs that are disjoint apart from $v$, which implies that $|X| \geq k+1$.
  This contradicts the assumption that $X$ is a solution of size at most $k$.
  Hence, $v$ must be part of any optimal solution of size at most $k$.
\end{proof}

Now, we move on to analyze the structure of the graph when $G[\{v\} \cup V_2]$ has does not have any $v$-flower of order $4k+3$.
We prove the following lemma.

\begin{lemma}
  \label{lem:flower2}
  If there does not exist a $v$-flower of order $4k+3$ in $G[\{v\} \cup V_2]$ for every vertex $ v \in S $, then a vertex subset $Z_v\subseteq V_2$ with $|Z_v|\le 8k+4$ can be obtained in polynomial-time such that $Z_v$ intersects every net and every cycle in $G[\{v\}\cup V_2]$ that passes through $v$.
  {Furthermore, $Z_v$ intersects every pairwise-intersecting {\ctpair} of $G[V_2 \cup \{v\}]$.}
\end{lemma}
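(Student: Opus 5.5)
The plan is to apply \Cref{prop:flower} to the graph $H_v := G[\{v\} \cup V_2]$ with the vertex $v$ and parameter $4k+2$.
Since by assumption $H_v$ has no $v$-flower of order $4k+3$, the proposition returns, in polynomial time, a set $Z_v \subseteq V(H_v)\setminus\{v\} = V_2$ with $|Z_v| \le 2(4k+2) = 8k+4$ that meets every cycle of $H_v$ through $v$. This gives the size bound and the cycle part of the statement at once. Double-edges between $v$ and $V_2$ count as cycles of length 2 in the definition, so they are hit as well.

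For nets, the key structural fact is that $G[V_2]$ is a forest. Hence every cycle of $H_v$, and in particular every triangle, must pass through $v$. A net contains a triangle, so any net $N$ in $H_v$ has its triangle passing through $v$. That triangle is a cycle through $v$, so it contains a vertex of $Z_v$, and therefore $N \cap Z_v \neq \emptyset$.

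For a pairwise-intersecting {\ctpair} $(J,T)$ in $H_v$, the argument is the same: the triangle $T$ is a cycle of $H_v$ and thus passes through $v$. By the flower/hitting-set guarantee, $T \setminus \{v\}$ contains a vertex of $Z_v$, so $(J,T)$ is intersected by $Z_v$. The pairwise-intersecting condition is not even needed here; it is simply the only case where the triangle is guaranteed to live in $H_v$ together with the claw.

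The main obstacle is essentially bookkeeping rather than depth:
\begin{itemize}
\item checking that \Cref{prop:flower} applies to multigraphs, so that double-edges are treated as cycles;
\item making sure the returned set avoids $v$, so that $Z_v \subseteq V_2$;
\item confirming that every triangle of $H_v$ really contains $v$. This holds because $V_2$ induces a forest and is simple, since double-edges in $G - S$ are excluded as $G - S$ is a simple {\pitgraph}.
\end{itemize}
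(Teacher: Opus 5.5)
Your proposal is correct and follows the paper's proof essentially step for step. You apply \Cref{prop:flower} to $G[\{v\}\cup V_2]$ with parameter $4k+2$ to get $Z_v\subseteq V_2$ with $|Z_v|\le 8k+4$ hitting every cycle (including double-edges) through $v$. You then use that $G[V_2]$ is a simple forest, so the triangle of any net or {\ctpair} in $G[\{v\}\cup V_2]$ has the form $\{v,a,b\}$ and is therefore hit by $Z_v$. Your remark that the pairwise-intersecting hypothesis is not actually used by this argument is also accurate.
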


\begin{proof}
  Since {\Cref{rul:flower}} is not applicable, graph $G[\{v\} \cup V_2]$ has no $v$-flower of order $4k + 3$.
  It follows from the \Cref{prop:flower} that we obtain a vertex subset $Z_v \subseteq V_2$ of at most $2(4k + 2)$ vertices in polynomial-time that intersects all cycles of $G[\{v\} \cup V_2]$ that pass through $v$. 
  Therefore, $Z_v$ trivially intersects {all double-edges} and all induced cycles of $G[\{v\} \cup V_2]$ containing at least $4$ vertices.

  Consider a net $N$ in $G[\{v\} \cup V_2]$ such that $v\in N$.
  {As $N$ has a triangle and $G\left[V_{2}\right]$ has no triangle, it must be that $v$ is part of the triangle of $N$}, as otherwise it violates the fact that $G\left[V_{2}\right]$ {is a forest}.
  Hence, $N$ must have a triangle in the form of $T = \{v, a, b\}$ where $a, b \in V_2$.
  Since $Z_v$ intersects triangle $T$, it must satisfy that $Z_v \cap N  \neq \emptyset$.
  
  Now, we consider {\ctpair} $(J, T)$  where $J$ is a claw, $T$ is a triangle in $G[\{v\} \cup V_2]$, and $J \cap T \neq \emptyset$.
  By similar reasoning that each pairwise intersecting {\ctpair} of $G[V_2 \cup \{v\}]$ have a triangle of the form of $T = \{v, a, b\}$ where $a, b \in V_2$ which intersects $Z_v$.
  Therefore, $Z_v$ intersects every pairwise intersecting {\ctpair} in $G[\{v\} \cup V_2]$.

  This completes the proof that $Z_v$ intersects every cycle, every closest {\ctpair} and every net in $G[\{v\} \cup V_2]$ that passes through $v$.
\end{proof}

\paragraph*{Construction of Auxiliary Bipartite Graph.} If none of the above rules are applicable, we exploit the structural properties of the graph using the above-mentioned lemmas and construct an auxiliary bipartite graph that we use in some reduction rules later.
Let $\mathcal{C}$ denotes the set of connected components of $G[V_2 \setminus Z_v]$ that are adjacent to $v$. 
In other words, if $D \in \mathcal{C}$, then $D$ is a connected component of $G[V_2 \setminus Z_v]$ such that $v$ is adjacent to $D$.

\begin{definition}
  {\label{def:auxiliary}}
  Given $v \in S$, let $Z_v$ denote the set of at most $8k + 4$ vertices obtained by \Cref{lem:flower2} to the graph $G[\{v\} \cup Z_v]$.
  {Consider the graph $G[V_2 \setminus Z_v]$ that is a forest.}
  We define an auxiliary bipartite graph $\mathcal{Z}_v = (Z_v \cup(S \setminus \{v\}), \mathcal{C})$ where $Z_v \cup (S \setminus\{v\})$ is on one side, and $\mathcal{C}$ on the other side. The set $\mathcal{C}$ contains a vertex for each connected component $C$ of $G[V_2 \setminus Z_v ]$ that has a vertex adjacent to $v$. We add an edge between $z \in Z_v \cup (S \setminus\{v\})$, and connected component $C \in \mathcal{C}$ if $z$ is adjacent to a vertex in component $C \in \mathcal{C}$.
\end{definition}

\begin{observation}
  \label{obs:auxiliary}
  Let $\mathcal{Z}_v = (Z_v\cup (S\setminus \{v\}), \mathcal{C})$ be the auxiliary bipartite graph defined in \Cref{def:auxiliary}.
  If $v$ has degree more than $7(|S| + |Z_v|) +5$ in $G[\{v\}\cup V_2]$, then $\mathcal{C}$ has more than $5(|S| + |Z_v|) + 5$ components.
\end{observation}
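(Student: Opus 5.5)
The plan is a counting argument that charges the neighbors of $v$ in $V_2$ to the components of $\mathcal{C}$. Every neighbor $u$ of $v$ in $V_2$ either lies in $Z_v$ or lies in some component $C \in \mathcal{C}$, since by definition a component of $G[V_2 \setminus Z_v]$ containing a neighbor of $v$ belongs to $\mathcal{C}$. Hence
\[
d_{G[\{v\}\cup V_2]}(v) \le |Z_v| + \sum_{C \in \mathcal{C}} |N(v)\cap C|,
\]
where double-edges are counted once for the degree bound; if multiplicity is counted, add at most $k$ by \Cref{rule:buss-rule-double-edge} and absorb it into the additive constant.

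The key step is to bound $|N(v)\cap C|$ for each $C\in\mathcal{C}$. $Z_v$ intersects every cycle of $G[\{v\}\cup V_2]$ through $v$ (\Cref{lem:flower2}), and $C$ is a connected subtree of $G[V_2 \setminus Z_v]$. So $v$ has at most one neighbor in $C$: two neighbors $a,b\in C$ together with the $a$--$b$ path in $C$ would give a cycle through $v$ avoiding $Z_v$. A double-edge $va$ would be a cycle through $v$ of length $2$, so it is excluded as well.

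It follows that $d(v) \le |Z_v| + |\mathcal{C}|$. If $d(v) > 7(|S|+|Z_v|)+5$, then
\[
|\mathcal{C}| > 7(|S|+|Z_v|)+5-|Z_v| \ge 5(|S|+|Z_v|)+5,
\]
which is the claim, with room to spare. I expect the only delicate point to be the one-neighbor-per-component step: cycles of length $2$ (double-edges) and triangles formed by $v$ with an edge of $C$ must also count as cycles hit by $Z_v$. This is guaranteed by \Cref{prop:flower} as applied in \Cref{lem:flower2}, since that proposition concerns all cycles, including those of length $2$ and $3$. No other ingredient is needed.
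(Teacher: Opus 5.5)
Your proof is correct and follows the paper's own route: discount the edges from $v$ into $Z_v$, then use that $Z_v$ meets every cycle of $G[\{v\}\cup V_2]$ through $v$ (including $2$-cycles and triangles) to conclude that $v$ sends exactly one simple edge into each component of $\mathcal{C}$. One fix to the multiplicity remark: ``at most $k$, absorbed into the additive constant'' does not work, since $k$ is not a constant and need not fit in your slack $2|S|+|Z_v|$; instead, your key step already forces every double-edge at $v$ to end in $Z_v$, so by \Cref{rule:multiplicity-reduction} the correction is at most $|Z_v|$, which the slack covers (and your arithmetic is then cleaner than the paper's).
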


\begin{proof}
  Note that \Cref{rule:multiplicity-reduction} is not applicable, there can be at most $2|Z_v|$ edges incident to $v$ such that the other endpoint is in $Z_v$.
  Hence, there are at least $7(|S| + |Z_v|) + 5 -  2|Z_v|$ edges incident to $v$ such that the other endpoint is in $V_2 \setminus Z_v$.
  Hence, there are at least $5(|S| + |Z_v|) + 5$ edges that are incident to $v$ and the other endpoint is in $Z_v \setminus V_2$. 
  Since $|Z_v| \le 8k + 4$, and \Cref{rule:multiplicity-reduction} is not applicable, for every $w \in Z_v \cup (S \setminus\{v\})$, there can be two edges between $v$ and $w$.
  Hence, it follows that at most {$2(|S| + |Z_v| - 1)$} edges are incident to $v$ with other endpoint being in $Z_v \cup (S \setminus \{v\})$. 
  Therefore, $v$ is adjacent to at least {$7(|S| + |Z_v|)+5 - 2(|S| + |Z_v| - 1)= 5(|S| + |Z_v|) + 5$} vertices other than $Z_v$ in $G[\{v\}\cup V_2]$.
  It follows from \Cref{lem:flower2} that $Z_v$ intersects all cycles  and all nets of $G[V_2 \cup \{v\}]$ that pass through $v$.
  In addition, $Z_v$ intersects all (claw, triangle)-pairs of $G[\{v\}\cup V_2]$ that contain $v$ and are pairwise intersecting.
  Hence, there cannot exist any cycle $C^*$ containing vertices only from $v$ and some connected component $C \in \mathcal{C}$. 
  Therefore, $v$ is adjacent to at most one vertex in each connected component of $\mathcal{C}$.
  Observe that $v$ is incident to at least $5(|S| + |Z_v|) + 5$ edges with other endpoint being in $V_2 \setminus Z_v$, and for each such edge, the other endpoint appears in a connected component $\mathcal{C}$ of $G[V_2 \setminus Z_v]$.
  Due to \Cref{def:auxiliary}, $v$ is adjacent to one vertex in every connected component of $G[V_2 \setminus Z_v]$.
  Hence, the number of connected components in $\mathcal{C}$ is at least {$5(|S| + |Z_v|) + 5$}.
\end{proof}

\paragraph{Applying the New Expansion Lemma.} 
If there is a vertex $v \in S$ such that there are at least {$7(|S| + |Z_v|) +5$} edges incident to $v$ with the other endpoints being in $V_2$, then \Cref{obs:auxiliary} implies that $|\mathcal{C}|> 5(|S|+ |Z_v|)$.
Suppose that we apply new expansion lemma (\Cref{lem:new_q-expansion} with q = 5) on $\mathcal{Z}_v$, we obtain $A  \subseteq Z_v \cup (S \setminus \{v\})$, $B \subseteq \mathcal{C}$ with a 5-expansion $M$ of $A$ into $B$. 
Then, it satisfies that {$|\mathcal{C}\setminus B|\le 5|((S\setminus \{v\}) \cup Z_v) \setminus A|$} and $N_{\mathcal{Z}_v}(B) \subseteq A$.
As $|\mathcal{C} \setminus B| \le 5|((S\setminus \{v\}) \cup Z_v) \setminus A|$ and $|\mathcal{C}|> 5(|S|+ |Z_v|)$, it must be that $|B|> 5|A|$.
Then, there must be a component $C^*\in B$ such that $C^*$ is not  saturated by $M$. 
Let $\widehat{B}\subseteq B$ denote the components of $B$ that are saturated by $M$. 
As some component of $B$ is not in $\widehat{B}$, it must be that $\widehat{B} \subsetneq B$.
Our next lemma not only proves a more concrete relation between $\widehat{B}$ and $B$, but also proves that $A \neq \emptyset$.

\begin{lemma}
  \label{lemma:expansion-apply}
  Let $v \in S$ be a vertex with degree at least $7(|S| + |Z_v|) +5$ in $G[\{v\}\cup V_2]$ and $\mathcal{Z}_v$ be the auxiliary bipartite graph as illustrated in \Cref{def:auxiliary}. 
  We invoke the algorithm provided by \Cref{lem:new_q-expansion} (i.e. new $q$-expansion lemma with $q = 5$) to compute sets $A \subseteq Z_v \cup (S \setminus\{v\})$ and $B\subseteq \mathcal{C}$ such that A has a $5$-expansion $\widehat{M}$ into $B$ in $\mathcal{Z}_v$ and $N_{\mathcal{Z}_v}(B) \subseteq A$. 
  Then, the following statements hold true:
  \begin{enumerate}[(i)] 
  	\item\label{Z-v-5-comp-extra} there exist at least 5 connected components in $B$ that are not saturated by $\widehat{M}$,
  	\item\label{Z-v-C-neighbor} for any component $C \in B$, $N_G(C) \subseteq A \cup \{v\}$, and
  	\item\label{Z-v-A-not-empty} $A \neq \emptyset$.
  \end{enumerate} 
\end{lemma}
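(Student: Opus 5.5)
The plan is to derive all three items from the counting guarantee in item~(\ref{new-exp-3}) of \Cref{lem:new_q-expansion}, together with the neighborhood condition $N_{\mathcal{Z}_v}(B)\subseteq A$ and the definition of $\mathcal{C}$. Throughout, write $W = Z_v \cup (S\setminus\{v\})$ for the left side of $\mathcal{Z}_v$. By \Cref{obs:auxiliary}, the degree hypothesis on $v$ gives $|\mathcal{C}| \ge 5(|S|+|Z_v|)+5$. Since $|W| \le |S|+|Z_v|-1$, this yields $|\mathcal{C}| \ge 5|W| + 10$; in any case $|\mathcal{C}| \ge 5|W|+5$.

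For item~(\ref{Z-v-5-comp-extra}), I would use $|\mathcal{C}\setminus B| \le 5|W\setminus A|$. This gives
\[
|B| \;\ge\; |\mathcal{C}| - 5|W| + 5|A| \;\ge\; 5|A| + 5 .
\]
Exactly $5|A|$ components of $B$ are saturated by $\widehat{M}$, so at least $5$ components of $B$ are unsaturated.

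For item~(\ref{Z-v-C-neighbor}), fix $C\in B$; it is a connected component of $G[V_2\setminus Z_v]$. Its neighbours in $G$ can lie only in $V_2$, $S$, or $V_1$. There are no edges between $V_1$ and $V_2$, since these are different components of $G-S$. Inside $V_2$, any neighbour of $C$ outside $C$ must lie in $Z_v$, because $C$ is a whole component of $G[V_2\setminus Z_v]$. So $N_G(C)\subseteq Z_v \cup S = W\cup\{v\}$. Every $z\in W$ adjacent to $C$ is, by \Cref{def:auxiliary}, a neighbour of $C$ in $\mathcal{Z}_v$, hence lies in $N_{\mathcal{Z}_v}(B)\subseteq A$. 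Therefore $N_G(C)\subseteq A\cup\{v\}$.

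Item~(\ref{Z-v-A-not-empty}) is the only step that is not a direct read-off. \Cref{lem:new_q-expansion} already asserts $\widehat{A}$ is non-empty, but I would also give a structural reason. Suppose $A=\emptyset$. By item~(\ref{Z-v-C-neighbor}), every $C\in B$ then has $N_G(C)\subseteq\{v\}$, and $B$ is non-empty by item~(\ref{Z-v-5-comp-extra}). Take such a $C$ and argue as in the proof of \Cref{obs:auxiliary}: since $Z_v$ hits every cycle through $v$ in $G[\{v\}\cup V_2]$, $v$ has at most one neighbour in $C$, and that edge is simple. Hence $v$ is a cut vertex with $G[C\cup\{v\}]$ a tree, so $C$ is a pendant tree attached to $v$. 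The same holds for every $C\in B$, so $v$ has at least $5$ pendant trees attached, and \Cref{rul:pendant-trees} would apply. This contradicts the standing assumption that the earlier reduction rules are exhausted. The main thing to check carefully is that this exhaustiveness assumption is legitimately in force at this point. If it is not, I would fall back on the non-emptiness guaranteed by \Cref{lem:new_q-expansion} itself.
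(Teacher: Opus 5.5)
Your proposal is correct and follows the paper's proof essentially step for step. Item (i) comes from $|\mathcal{C}\setminus B|\le 5|W\setminus A|$ against the lower bound on $|\mathcal{C}|$ from \Cref{obs:auxiliary}. Item (ii) comes from $N_{\mathcal{Z}_v}(B)\subseteq A$; your version, which explicitly rules out neighbours in $V_1$ and in $V_2\setminus Z_v$, is more careful than the paper's. Item (iii) shows that $A=\emptyset$ would make every component of $B$ a pendant tree at $v$ and trigger \Cref{rul:pendant-trees}. This relies on exactly the exhaustiveness assumption the paper uses.

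Drop the fallback to the non-emptiness clause of \Cref{lem:new_q-expansion}. The original new expansion lemma does not guarantee $\widehat{A}\neq\emptyset$, because isolated vertices of $B$ can force $\widehat{A}=\emptyset$. Here those isolated vertices are precisely the components adjacent only to $v$, so the structural argument is the one doing the work.
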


\begin{proof}
  Consider a vertex $v \in S$ be a vertex with degree at least {$7(|S| + |Z_v|) +5$} in $G[\{v\}\cup V_2]$ and let $\mathcal{Z}_v$ be the auxiliary bipartite graph as illustrated in \Cref{def:auxiliary}.
  It follows from \Cref{obs:auxiliary} that $\mathcal{C}$ has more than $5(|S| + |Z_v|) + 5$ components in $\cCCC$.
  Due to item-(\ref{new-exp-3}) of \Cref{lem:new_q-expansion}, $|\mathcal{C} \setminus B| \leq 5|(S \cup Z_v) \setminus (A \cup \{v\})|$.
  Then, $|B| > 5|A|$.
  As $|\mathcal{C}| > 5(|S| + |Z_v|) + 5$ and $|\mathcal{C} \setminus B| \leq 5(|S| + |Z_v| - |A| - 1)$, it must be that $|B| > 5|A| + 5$.
  Subsequently, note that $\widehat{M}$ is a 5-expansion of $A$ onto $B$.
  Then, exactly $5|A|$ components of $B$ are saturated by $\widehat{M}$.
  Therefore, at least 5 components of $B$ are not saturated by $\widehat{M}$.
  This completes the proof of item-(\ref{Z-v-5-comp-extra}).

  Consider any component $C \in B$.
  Note that $N_{\mathcal{Z}_v}(B) \subseteq A$.
  Hence, the component $C$ can have neighbors only in the vertices of $A$ and the vertex $v \in S$.
  Therefore, $N_G(C) \subseteq A \cup \{v\}$, completing the proof of item-(\ref{Z-v-C-neighbor}).

  Now, we prove item-(\ref{Z-v-A-not-empty}).
  Suppose for the sake of contradiction that $A = \emptyset$.
  Then, due to item-(\ref{Z-v-C-neighbor}), any $C \in B$ can neighbor only in $A \cup \{v\}$.
  But $A = \emptyset$.
  Then, any $C \in B$ is not saturated by $\widehat{M}$.
  It implies that $C$ is a component of $G[V_2 \setminus Z_v]$ that is a tree and $N_G(C) = \{v\}$.
  But $Z_v$ intersects all cycles of $G[V_2 \cup \{v\}]$ passing through $v$.
  Hence, observe that $C \cup \{v\}$ cannot contain any cycle.
  It means that $v$ has exactly one neighbor $y$ in $C$ and $vy$ is not a double-edge.
  It implies that $C$ is a pendant tree in $G$.
  But due to item-(\ref{Z-v-5-comp-extra}), there are 5 such components in $B$ that are not saturated by $\widehat{M}$ and each such component is a pendant tree.
  Then, it implies that \Cref{rul:pendant-trees} is applicable which is a contradiction.
  Therefore, $A \neq \emptyset$, completing the proof of item-(\ref{Z-v-A-not-empty}).
\end{proof}

Now, we describe our next reduction rule that helps us to bound the degree of any $v \in S$ in $G[V_2 \cup \{v\}]$.

\begin{reductionrule}
  \label{rule:expansion}
  Let $v \in S$ be a vertex with degree at least $7(|S| + |Z_v|) +5$ in $G[\{v\}\cup V_2]$ and let $\mathcal{Z}$ be the auxiliary bipartite graph as illustrated in \Cref{def:auxiliary}. 
  We invoke the algorithm provided by \Cref{lem:new_q-expansion} (new $q$-expansion lemma with $q = 5$) to compute sets $A \subseteq Z_v \cup (S \setminus\{v\})$ and $B\subseteq \mathcal{C}$ such that A has a $5$-expansion $\widehat{M}$ into $B$ in $\mathcal{Z}$ and $N_{\mathcal{Z}}(B) \subseteq A$. 
  Let $\widehat{B}\subseteq B$ denote the vertices of $B$ that are saturated by $\widehat{M}$.
  Remove the edges between $v$ and every connected component of $\widehat{B}$ in $G$ and create a double edge between $v$ and every vertex in $A$ to obtain the graph $G'$. 
  The new instance is $(G', k)$. 
\end{reductionrule}

The above reduction rule's safeness crucially depends on the next lemma.

\begin{lemma}
  \label{lemma:exp-lemma-rule-property}
  Let $X$ be an optimal solution of size at most $k$ to $(G, k)$ and $A, B, \widehat{B}$ denote the vertex subsets obtained from \Cref{rule:expansion}.
  Then, $v \in X$ or $A \subseteq X$.
\end{lemma}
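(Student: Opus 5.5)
We argue by contradiction. Suppose $v \notin X$ and some $a \in A$ is also not in $X$. The idea is to use the $5$-expansion $\widehat{M}$ to find, inside $G - X$, a claw together with a triangle (or a hole / double-edge) in one connected component, contradicting that $G - X$ is a {\pitgraph}.

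\textbf{Step 1: surviving components.} By the definition of a $5$-expansion, $a$ is matched by $\widehat{M}$ to five distinct components $C_1,\dots,C_5 \in \widehat{B}$, each adjacent to $a$ in $G$. By \Cref{def:auxiliary} each $C_i$ is also adjacent to $v$. Since $|X| \le k$, we cannot afford five disjoint hitting vertices in general, so we look more carefully at $X$. Consider any $C_i$ that is disjoint from $X$. Then $C_i$ is connected and contains a neighbor of $v$ and a neighbor of $a$. Hence $G - X$ contains a path $v \to C_i \to a$ whose interior lies in $C_i$.

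If two components $C_i, C_j$ survive (are disjoint from $X$), the two internally disjoint $v$--$a$ paths through $C_i$ and $C_j$ form a cycle through $v$ and $a$. We then take an induced (chordless) cycle in the subgraph on $\{v,a\}\cup C_i\cup C_j$. Its interior vertices lie in two different trees, so the only possible chord is $va$. Two cases arise:
\begin{itemize}
\item If the induced cycle has length at least $4$, it is a hole in $G - X$, which is a contradiction.
\item Otherwise, the cycle reduces to triangles or a double-edge through $va$. With a third surviving component $C_\ell$, the vertex $v$ (or $a$) has three pairwise nonadjacent neighbors, one in each of the three distinct components. This gives a claw, and the triangle $v,a,w$ lies in the same connected component, so we obtain a {\ctpair}, again a contradiction.
\end{itemize}
This suggests the key quantity is how many of $C_1,\dots,C_5$ meet $X$.

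\textbf{Step 2: exchange argument.} If at least two of $C_1,\dots,C_5$ meet $X$, we do not obtain a direct contradiction. Instead, we use optimality of $X$ and build a competing solution
\[
X' = \bigl(X \setminus \textstyle\bigcup_{C\in B} C\bigr) \cup A \cup \{v\} \quad\text{or}\quad X' = \bigl(X \setminus \textstyle\bigcup_{C\in B} C\bigr) \cup A .
\]
We then show that $X'$ is a solution by using $N_G(C) \subseteq A\cup\{v\}$ from item-(\ref{Z-v-C-neighbor}) of \Cref{lemma:expansion-apply}. Every component of $B$ becomes a separate tree component, or a component attached only at $v$ as pendant trees, since $Z_v$ kills all cycles through $v$ inside $V_2$. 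The size bound would come from the expansion: every $a' \in A\setminus X$ "owns" five private components, and each must be hit or else Step 1 yields a contradiction. So $X$ spends at least as many vertices inside $\bigcup_{C\in \widehat{B}} C$ as $|A\setminus X|$ (plus one, to account for $v$). This would give $|X'| \le |X|$, and with $|X'|<|X|$ in the strict case it would contradict optimality; alternatively we choose $X$ maximizing $|X\cap(A\cup\{v\})|$ to rule out ties.

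\textbf{Main obstacle.} The delicate step is the counting. For a given $a\notin X$, Step 1 only forces all but at most two (or one) of its five components to meet $X$, depending on the triangle/claw case analysis. We must turn that into "$X$ hits at least $|A\setminus X|+1$ components" when $v\notin X$. I would first try to prove that, with $v,a\notin X$, at most one of $C_1,\dots,C_5$ can avoid $X$, using the claw at $v$ and a triangle. If that fails, the fallback is to attempt $q=5$ slack: at most two survive, so at least three are hit per $a$. This amply pays for adding both $a$ and $v$ in the exchange, and that generous margin is presumably why $q=5$ was chosen.
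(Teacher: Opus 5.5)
Your proposal is correct and follows essentially the same route as the paper: for each $a\in A\setminus X$, at most two of its five private components in $\widehat{B}$ can avoid $X$. Two survivors with $va\notin E(G)$ give a hole, and three survivors give a hole through $v,a$, a double-edge, or a claw at $v$ together with a triangle $v,a,w$. So $X$ spends at least $3|A\setminus X|\ge |A\setminus X|+2$ vertices there. Replacing $X\cap\bigcup_{C\in B}C$ by $A\cup\{v\}$ then gives a strictly smaller solution, because $N_G(C)\subseteq A\cup\{v\}$ turns every $C\in B$ into an isolated tree and the rest of $G-X'$ is an induced subgraph of $G-X$. Your ``fallback'' is therefore the actual argument; the ``at most one survivor'' bound cannot be forced, since two survivors closing triangles $v,a,w_1$ and $v,a,w_2$ only give a diamond. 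When extracting the induced cycle, route through the neighbor of $a$ in each tree closest to $v$'s unique neighbor there, since $a$ may have several neighbors in one component.
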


\begin{proof}
  Consider a set $X$ of size at most $k$ such that $G - X$ is a {\pitgraph}.
  As \Cref{lem:new_q-expansion} (New $q$-Expansion Lemma) has been applied with $q = 5$ in \Cref{rule:expansion}, we have the obtained sets that are $A \subseteq Z_v \cup (S \setminus \{v\})$ and $B \subseteq \cCCC$ such that $N_{\mathcal{Z}}(B) \subseteq A$.
  Due to item-(\ref{Z-v-A-not-empty}) of \Cref{lemma:expansion-apply}, $A \neq \emptyset$ and due to item-(\ref{Z-v-C-neighbor}) of \Cref{lemma:expansion-apply}, for every connected component $C \in B$, it holds that $N_G(C) \subseteq A \cup \{v\}$.
  By definition of $\cCCC$, every connected component $C \in \cCCC$ has a vertex that is adjacent to $v$.
  Since $Z_v$ intersects every cycle of $G[V_2 \cup \{v\}]$ passing through $v$, the connected components of $\cCCC$ are disjoint from $Z_v$.
  In addition, there is exactly one edge with one endpoint $v$ and other endpoint in a component of $\cCCC$.

  Suppose for the sake of contradiction that $v \notin X$ and $A \not\subseteq X$.
  We initialize $\widehat{X} := X$ and modify $\widehat{X}$ as follows.
  Then, there exists $u \in A$ such that $u \notin X$.
  For every $u \in A \setminus X$, we proceed with a modification step as follows.

  \noindent Since there is a 5-expansion of $A$ into $B$, there are 5 connected components $C^u_1,C^u_2,\dots,C^u_5$ $\in \widehat{B}$ such that for every $i \in [5]$, $(u, C^u_i) \in \widehat{M}$.
  Observe that for every $i \in [5]$, $v$ is adjacent to exactly one vertex of $C_i$ and $v$ is adjacent to one vertex in $C_i$.
  As we assume that $u, v \notin X$, there is no double-edge between $u$ and $v$.
  \begin{description}
    \item[Case-(i):] $u$ and $v$ are adjacent and there are at least 3 components from $C^u_1,C^u_2,C^u_3,C^u_4,C^u_5$ such that there is an induced cycle of length at least 4 containing only $u, v$ and the vertices of one component $C^u_j$.
    Without loss of generality, we assume that $C^u_1, C^u_2$,  and $C^u_3$ are 3 such components of $B$.
    In particular, there is an induced cycle of length at least 4 exists that contains $u, v$ and the other vertices from $C^u_1$.
    Similarly, there is an induced cycle of length at least 4 exists that contains $u, v$ and the other vertices from $C^u_2$.
    In addition, there is an induced cycle of length at least 4 exists that contains $u, v$ and the vertices from $C^u_3$.
    Since $u, v \notin X$, it must be that $X$ contains at least one vertex from each of $C^u_1, C^u_2$ and $C^u_3$.
    We set $\widehat{X} := (\widehat{X} \cup \{u\}) \setminus (C^u_1 \cup C^u_2 \cup C^u_3 \cup C^u_4 \cup C^u_5)$.
    Observe that the size of $\widehat{X}$ reduces by at least 2.
    \item[Case-(ii):] $u$ and $v$ are adjacent and there exactly 2 components from $C^u_1,C^u_2,C^u_3,C^u_4,,C^u_5$ such that there is an induced cycle of length at least 4 containing only $u, v$ and the vertices of one component $C^u_j$.
    Without loss of generality assume that $C^u_1$ and $C^u_2$ are the 2 such components.
    Then, there is one induced cycle of length at least 4 that contains $u, v$ and other vertices from $C^u_1$ and one induced cycle of length 4 containing $u, v$ and the other vertices from $C^u_2$.
    In such a case, as $u$ and $v$ are adjacent, there must be a triangle that contains $u, v$ and the other vertex from $C^u_3$.
    Similarly, there is a triangle that contains $u, v$ and the other vertex from $C^u_4$.
    Additionally, there is a triangle containing $u, v$ and the other vertex from $C^u_5$.
    As $u, v \notin X$, note that one vertex from each of $C^u_1$ and $C^u_2$ is essential in $X$.
    Additionally, if no vertex from $C^u_3 \cup C^u_4 \cup C^u_5$ is chosen, then there is a claw $J'$ with center $u$ and 3 vertices from $C^u_3 \cup C^u_4 \cup C^u_5$ and a triangle $T'$ containing $u, v$ and the other vertex from $C^u_5$ that pairwise intersect each other.
    Hence, at least one vertex is required from $C^u_3 \cup C^u_4 \cup C^u_5$ to be added to $X$.
    So, at least three vertices from $C^u_1 \cup C^u_2 \cup C^u_3 \cup C^u_4 \cup C^u_5$ are present in $X$.
    We set $\widehat{X} := (\widehat{X} \cup \{u\}) \setminus (C^u_1 \cup C^u_2 \cup C^u_3 \cup C^u_4 \cup C^u_5)$.
    Observe that this procedure reduces the size of $\widehat{X}$ by 2.
    \item[Case-(iii):] $u$ and $v$ are adjacent and there is exactly 1 component from $C^u_1, C^u_2, C^u_3, C^u_4, C^u_5$ such that there is an induced cycle of length at least 4 containing only $u, v$ and the vertices of one component $C^u_j$.
    Without loss of generality, assume that $C^u_1$ is the component such that there is an induced cycle of length at least 4 containing $u, v$ and the other vertices from $C^u_1$ only.
    Then, as $uv$ is an edge, it must be that there is a triangle that contains $u, v$ and the other vertex from $C^u_2$.
    Due to similar reason, there is a triangle that contains $u, v$ and the other vertex from $C^u_3$, there is a triangle that contains $u, v$ and the other vertex from $C^u_4$ and there is a triangle that contains $u, v$ and the other vertex from $C^u_5$.
    As $u, v \notin X$, one vertex from $C^u_1$ must be chosen since a hole is created by $u, v$ and the vertices of $C^u_1$ only.
    Since there are 4 triangles that pairwise intersect at $u$ and $v$, and the other vertex from these triangles are from $C^u_2, C^u_3, C^u_4, C^u_5$, there must be at least two vertices from $C^u_2 \cup C^u_3 \cup C^u_4 \cup C^u_5$ in $X$.
    It implies that at least 3 vertices from $C^u_1 \cup C^u_2 \cup C^u_3 \cup C^u_4 \cup C^u_5$ are present in $X$.
    We set $\widehat{X} := (\widehat{X} \cup \{u\}) \setminus (C^u_1 \cup C^u_2 \cup C^u_3 \cup C^u_4 \cup C^u_5)$.
    Similar to the previous case also, this procedure reduces the size of $\widehat{X}$ by 2.
    \item[Case-(iv):] $u$ and $v$ are adjacent and there does not exist any induced cycle of length at least 4 that contains $u, v$ and the vertices of only one component from $C^u_1,C^u_2,C^u_3,C^u_4,,C^u_5$.
    Then, for every component from $C^u_1,C^u_2,C^u_3,C^u_4,,C^u_5$, there is a triangle containing only $u, v$ and the other vertex from $C^u_j$.
    Observe that these 5 triangles pairwise intersect at $u$ and $v$ and the other vertices from these 5 triangles form independent set.
    If $u, v \notin X$, then at least 3 vertices from $C^u_1 \cup C^u_2 \cup C^u_3 \cup C^u_4 \cup C^u_5$ are essential in $X$.
    We set $\widehat{X} := (\widehat{X} \cup \{u\}) \setminus (C^u_1 \cup C^u_2 \cup C^u_3 \cup C^u_4 \cup C^u_5)$.
    Similar to the previous cases, we observe that this procedure reduces the size of $\widehat{X}$ by 2.
    \item[Case-(v):] $u$ and $v$ are not adjacent.
    Then, both $u$ and $v$ are adjacent to some vertex from every component of $C^u_1, C^u_2, C^u_3, C^u_4, C^u_5$.
    Then, observe that for every pair of components $C^u_i$ and $C^u_j$, the vertices $u, v$ and the vertices of $C^u_j \cup C^u_j$ form induced cycle of length at least 4.
    Hence, at least 4 vertices from $C^u_1 \cup C^u_2 \cup C^u_3 \cup C^u_4 \cup C^u_5$ are essential in $X$.
    We set $\widehat{X} := (\widehat{X} \cup \{u\}) \setminus (C^u_1 \cup C^u_2 \cup C^u_3 \cup C^u_4 \cup C^u_5)$.
    In this case, note that the size of $\widehat{X}$ reduces by 3.
  \end{description}

  Additionally, note that due to the 5-expansion property, for two vertices $u, w \in A \setminus X$, there are 5 components $C^u_1, C^u_2, C^u_3, C^u_4, C^u_5$ adjacent to $u$ and there are 5 components $C^w_1, C^w_2, C^w_3, C^w_4, C^w_5$ adjacent to $v$ that are pairwise disjoint.
  Hence, the modification procedure (described above) involves two disjoint vertex subsets for two distinct vertices $u, w \in A$.
  Therefore, the size of $\widehat{X}$ strictly reduces by at least 2.
  Finally, we set $\widehat{X} = \widehat{X} \cup \{v\}$.
  Hence, $|\widehat{X}| < |X|$ and by construction, $A \cup \{v\} \subseteq \widehat{X}$.

  Now, we argue that $G - \widehat{X}$ is a {\pitgraph}.
  Consider a double-edge of $G$ between $x$ and $y$.
  Observe that any double-edge must be incident to a vertex in $S$.
  If $x$ or $y$ is in $A \cup \{v\}$, then $\widehat{X}$ intersects the double-edge between $x$ and $y$.
  For other double-edges not incident to any vertex of $A \cup \{v\}$, such double-edge cannot be incident to any connected component of $B$ also.
  Therefore, such double-edges are incident to the vertices that are outside the components of $B$.
  Such double-edges are already intersected by $X$, and those vertices from $X$ are unaltered.
  Hence, those double-edges are intersected by $\widehat{X}$.

  Now, we consider the holes in $G$.
  If a hole is incident to the vertices appearing in the components of $B$, then such holes pass through $v$ and some vertices of $A$.
  As $A \cup \{v\} \subseteq \widehat{X}$, such holes are intersected by $\widehat{X}$.
  Consider the other holes that are not intersected by the components of $B$.
  Observe that such holes are already intersected by $X$ and those vertices have remained unaltered by this modification process.
  Hence, such vertices are present in $\widehat{X}$.
  Therefore, those holes are intersected by $\widehat{X}$.

  Consider the nets in $G$.
  If a net $N$ of $G$ does not intersect the vertices in the components of $B$, then $N$ is already intersected by $X$.
  Those vertices appearing in $N \cap X$ are also in $\widehat{X}$.
  Hence, $N$ is intersected by $\widehat{X}$.
  Consider a net $N$ that intersects the vertices appearing in the connected components of $B$.
  Then $N$ has a pendant vertex $x$ its neighbor $y$ that participates in a triangle.
  Then, $y$ must have a neighbor in $S$.
  There are two subcases, $y \in C$ or $y \notin C$.
  If $y \in C$ for some component $C \in B$, then the triangle of $N$ must intersect the vertices of $A \cup \{v\}$, since $N_{G}(C) \subseteq A \cup \{v\}$.
  Since $A \cup \{v\} \subseteq \widehat{X}$, it follows that $N$ is intersected by $\widehat{X}$.
  If $y \notin C$, then $y \in N_G(C)$.
  It implies that $y \in A \cup \{v\}$ since $N_{\mathcal{Z}}(B) \subseteq A$ and $N_G(C) \subseteq A \cup \{v\}$.
  Since $A \cup \{v\} \subseteq \widehat{X}$, it follows that $N$ is intersected by $\widehat{X}$.

  Consider a tent $N$ in $G$.
  Then, every vertex of $N$ participates in a triangle.
  If $N$ does not intersect any component $C \in B$, then $N$ is already intersected by $X$.
  In such a case, those vertices from $N \cap X$ have remains in $\widehat{X}$.
  Then, $N$ is intersected by $\widehat{X}$.
  If $N$ is intersected by the vertices of a connected component $C \in B$, then the vertex of $C \cap N$ intersects in a triangle in $N$.
  In such a case, the triangle of $N$ that is intersected by $C$ must pass through a vertex in $S$.
  Since $N_G(C) \subseteq A \cup \{v\}$,
  the triangle of $N$ intersected by $C$ is also intersected by $A \cup \{v\}$.
  Since $A \cup \{v\} \subseteq \widehat{X}$, $N$ is intersected by $\widehat{X}$.

  Now, we prove that there does not exist claw and a triangle in any connected component of $G - \widehat{X}$.
  Consider any component $\widehat{D}$ of $G - \widehat{X}$.
  If $\widehat{D}$ intersects the connected components appearing in $B$, then $\widehat{D}$ is a tree.
  Hence, $\widehat{D}$ cannot contain a triangle.
  Otherwise, $\widehat{D}$ does not intersect connected components appearing in $B$.
  If $\widehat{D}$ is a component in $G - X$ itself, then it does not intersect with components in $B$.
  Then $\widehat{D}$ is a proper interval graph or a tree.
  If $\widehat{D}$ is not a connected component of $G - X$, a few subcases can occur.
  The first subcase is that $\widehat{D}$ intersects the components in $B$.
  Then, $N_G(\widehat{D}) \subseteq A \cup \{v\}$.
  Since $A \cup \{v\} \subseteq \widehat{X}$, observe that $\widehat{D}$ is a tree.
  The second subcase is that $\widehat{D}$ does not intersect the vertices of any component appearing in $B$.
  Then, $\widehat{D}$ is constructed by deleting the vertices of $A \cup \{v\}$ from a connected component $D'$ of $G - X$.
  But note that $D'$ is a proper interval graph or a tree.
  But deleting vertices from a proper interval graph or a tree preserves the respective property for the connected components.
  Hence, $\widehat{D}$ is a proper interval graph or a tree.
  As the cases are mutually exhaustive, there cannot exist any claw and a triangle in $\widehat{D}$.

  We have completed our arguments that $|\widehat{X}| < |X|$, $A \cup \{v\} \subseteq \widehat{X}$, and $G - \widehat{X}$ is a {\pitgraph}.
  But this contradicts the optimality of $X$. 
  Hence, the statement is true.
\end{proof}

\begin{lemma}
  \label{lem:expansion-safe}
  \Cref{rule:expansion} is safe.
\end{lemma}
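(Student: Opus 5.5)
We prove both directions separately, relying on \Cref{lemma:exp-lemma-rule-property}. Recall that $G'$ is obtained from $G$ by deleting the unique edge between $v$ and each component $C \in \widehat{B}$, and by adding a double-edge between $v$ and every $u \in A$. The parameter is unchanged.

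\textbf{Forward direction.} Let $X$ be an optimal solution of $(G,k)$; if a solution of size at most $k$ exists, an optimal one also has size at most $k$. By \Cref{lemma:exp-lemma-rule-property}, either $v \in X$ or $A \subseteq X$. In both cases every new double-edge $vu$ with $u \in A$ is hit by $X$.

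We then show that $G' - X$ is a simple {\pitgraph}. If $v \in X$, then $G' - X$ is a subgraph of $G - X$ obtained only by removing edges incident to $v$ and adding edges incident to $v$. Since $v$ is deleted, $G' - X = G - X$.

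If $v \notin X$ and $A \subseteq X$, then $G' - X$ is obtained from $G - X$ by deleting the edges between $v$ and the components of $\widehat{B}$. By \Cref{lemma:expansion-apply}(\ref{Z-v-C-neighbor}), every $C \in \widehat{B}$ has $N_G(C) \subseteq A \cup \{v\}$. Hence in $G - X$ each remaining part of such a $C$ hangs from $v$ through a single edge, and in $G' - X$ it becomes a separate component that is a subforest of a tree, hence a {\pitgraph}. The component of $G - X$ containing $v$ loses only pendant subtrees. That component was a tree or a proper interval graph, and its vertex-deleted remainder is an induced subgraph of it, so it keeps that property by heredity.

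\textbf{Backward direction.} Let $X'$ be a solution of $(G',k)$. If $v \notin X'$, then $A \subseteq X'$, since otherwise a double-edge $vu$ survives. So again $v \in X'$ or $A \subseteq X'$.

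If $v \in X'$, then $G - X' = G' - X'$ and we are done.

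If $A \subseteq X'$ and $v \notin X'$, then $G - X'$ is obtained from $G' - X'$ by re-adding, for each $C \in \widehat{B}$, a single edge from $v$ to one vertex of $C \setminus X'$. In $G' - X'$ each such $C \setminus X'$ is a union of tree components with no other neighbours. Reattaching each piece by exactly one edge to $v$ turns it into a pendant forest at $v$. Holes, double-edges and triangles therefore cannot use these vertices, and neither can nets or tents, since these need triangles through the attachment.

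The one danger is a (claw, triangle)-pair in the component $D$ of $v$: $D$ contains a triangle, and the new pendant trees create a claw at $v$ or inside some $C$. This is the main obstacle. My plan is to use the unsaturated components from \Cref{lemma:expansion-apply}(\ref{Z-v-5-comp-extra}). At least five components of $B \setminus \widehat{B}$ keep their edge to $v$ in $G'$ and have neighbours only in $A \cup \{v\}$. In $G' - X'$ they are therefore already pendant trees at $v$, unless $X'$ hits them.

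I will argue as follows. If $D$ in $G' - X'$ contains a triangle and $v$, then $v$ already has at least three pendant neighbours from these unsaturated components, and these form a claw in $G' - X'$ together with the triangle, contradicting that $X'$ is a solution of $(G',k)$. If instead $X'$ hits those components, I will modify $X'$ by swapping the deleted vertices from these components for $v$. This keeps the size and makes $v \in X'$, which reduces to the first case.

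A claw formed entirely inside some $C \in \widehat{B}$ with a path to a triangle through $v$ is handled the same way. Here the claw at $v$ built from three unsaturated components is closer to the triangle, exactly as in the proof of \Cref{lem:pendant-tree-safe}.
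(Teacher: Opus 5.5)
Your proof is correct and follows essentially the paper's route. Both directions first reduce to $v\in X$ or $A\subseteq X$, via \Cref{lemma:exp-lemma-rule-property} in the forward direction and via the new double-edges in the backward direction. In the backward case $A\subseteq X'$, $v\notin X'$ with a triangle in $v$'s component, you use the at least five unsaturated components of $B\setminus\widehat{B}$ exactly as the paper does: either they give a claw at $v$ already in $G'-X'$, or $X'$ contains at least three vertices there, which can be traded for $v$. Your forward direction is in fact more careful than the paper's assertion that $G-X$ and $G'-X$ have the same components.

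One small inaccuracy: nets are not excluded outright, since the re-attached edge from $v$ into some $C\in\widehat{B}$ can serve as a pendant edge of a net whose triangle passes through $v$. This only happens when $D$ contains a triangle, so your swap to include $v$ already absorbs it.
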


\begin{proof}
  First we give the forward direction ($\Rightarrow$) of the proof.
  Let $X$ be an optimal solution of size at most $k$ in $G$.
  Due to \Cref{lemma:exp-lemma-rule-property}, $A \subseteq X$ or $v \subseteq X$.
  We claim that $G' - X$ is a {\pitgraph}.
  Observe that all double-edges of $G'$ incident to $v$ are being covered by $X$.
  Any other double-edge not incident to $v$ is intersected by $X$.
  Additionally, the connected components of $G - X$ and $G' - X$ remain the same.
  Therefore, $G' - X$ is a {\pitgraph}.

  Now, we give the backward direction ($\Leftarrow$) of the proof.
  Let $X'$ be an optimal solution of size at most $k$ to $G'$.
  Then, $G' - X'$ is a {\pitgraph}.
  By construction of $G'$, there is a double-edge between $v$ and every vertex of $A$.
  Then, it must be that $A \subseteq X'$ or $v \in X'$.
  We focus on the edges of $G$ that are not in $G'$.
  Observe that every such edge of $G$ that is not an edge of $G'$ is incident to $v$.
  Such edges have one endpoint $v$ and the other endpoint in a component $C \in \widehat{B}$ and those are not double-edges.

  Suppose that $v \in X'$.
  Then, $G' - X'$ has no edge which is an edge in $G - X'$ and $G - X$ has no edge which is an edge of $G' - X'$.
  Therefore, $G' - X'$ is a {\pitgraph}.
  Now, we consider the case that $A \subseteq X'$ but $v \notin X'$.
  Observe that $v \notin X'$ and $G - X'$ has a (possibly nonempty) set of edges that are not edges in $G' - X'$.
  All such edges are incident to $v$.
  Due to item-(\ref{Z-v-5-comp-extra}) of \Cref{lemma:expansion-apply}, it follows that there are at least 5 components in $B$ that are not saturated by $\widehat{M}$.
  Therefore, there are at least 5 components in $B$ that are in $B \setminus \widehat{B}$.
  But our \Cref{rule:expansion} deletes the edges between $v$ and the components appearing in $\widehat{B}$.
  But \Cref{rule:expansion} does not delete the edges between $v$ and the components appearing in $B \setminus \widehat{B}$.
  Then, consider the component $\widehat{D}$ of $G' - X'$ that contains $v$ and contains an edge between $v$ and a component $C$ of $B \setminus \widehat{B}$.
  If $\widehat{D}$ contains a triangle but no claw, then $\widehat{D}$ must contain at most one vertex that is neighbor to some component $C$ of $B \setminus \widehat{B}$.
  But due to \Cref{lemma:exp-lemma-rule-property}, there are at least 5 components in $B \setminus \widehat{B}$.
  Then, $X$ contains at least one vertex from all but one component $C$ of $B \setminus \widehat{B}$.
  In particular, $X$ contains 4 vertices from the components appearing in $B \setminus \widehat{B}$.
  In such a case, we construct $X$ by removing the vertices of components in $B \setminus \widehat{B}$ from $X'$ and adding the vertex $v$.
  This creates a solution $X$ such that $|X| \leq |X'|$.
  It is yet to argue that $G - X$ is a {\pitgraph}.
  Observe that $A \cup \{v\} \subseteq X$, and the components of $G - X$ intersecting the components in $B$ are exactly the components of $B$ itself.
  Other components not intersecting the components in $B$ are the ones that are created from the components of $G' - X'$ that are proper interval graphs or trees.
  Therefore, $G - X$ is a {\pitgraph}.

  We consider the other case when $\widehat{D}$ is a component of $G' - X'$, $\widehat{D}$ contains $v$ and $\widehat{D}$ has no triangle.
  Then, $\widehat{D}$ is a tree.
  In $G - X'$, $v$ becomes adjacent to a collection of vertices appearing in connected components of $\widehat{B}$.
  Since $A \subseteq X'$ and all components of $B$ have neighbors only in $A \cup \{v\}$, it must be that adding the edges between $v$ and the components of $\widehat{B}$ extends the tree $\widehat{D}$ into a tree with potentially larger number of vertices.
  The other components that do not contain $v$ remain unaffected.
  Therefore, $G - X'$ remains a {\pitgraph}.
  Therefore, $G$ has a solution of size at most $k$.
\end{proof}

Now, we are ready to bound the number of vertices in $V_2$.

\begin{lemma}
  \label{lemma:upper-bound-V2}
  If none of the above reduction rules are applicable, then the number of vertices in $G[V_2]$ is $\cOO(|S|^2)$.
\end{lemma}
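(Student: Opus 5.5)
The bound on $V_2$ splits into three parts: first bound $F_1 = N_G(S)\cap V_2$ by degree counting, then bound the part of the forest $G[V_2]$ that these vertices span, and finally bound everything hanging off that part.

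\textbf{Step 1: bounding $F_1$.} Since \Cref{rule:expansion} is not applicable, every $v \in S$ has fewer than $7(|S|+|Z_v|)+5$ edges into $V_2$. By \Cref{lem:flower2}, $|Z_v| \le 8k+4$, and $k \le |S|$ because $\widehat{S}\subseteq S$ is nonempty. So every $v\in S$ has $\cOO(|S|)$ neighbors in $V_2$. Summing over $S$ gives $|F_1| \le \sum_{v\in S} d_{V_2}(v) = \cOO(|S|^2)$.

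\textbf{Step 2: the skeleton $G[F_1\cup F_3]$.} Every tree component of $G-S$ has a neighbor in $S$ (otherwise \Cref{rul:isolated-component-reomval} applies), so every component of $G[V_2]$ meets $F_1$. Consider the forest $G[F_1\cup F_3]$ and count its vertex types.
\begin{itemize}
\item Its leaves lie in $F_1$, so there are at most $|F_1|$ of them.
\item Its vertices of degree at least $3$ number at most the number of leaves; this covers $F_3^c$, so $|F_3^c|\le |F_1|$.
\item The remaining vertices of $F_3$ have degree exactly $2$ in the skeleton.
\end{itemize}
Decompose the skeleton into maximal paths whose internal vertices lie in $F_3\setminus F_3^c$ and whose endpoints lie in $F_1\cup F_3^c$. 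There are $\cOO(|F_1|)$ such paths.

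On each such path, \Cref{rule:remove-F3-hangers} leaves at most two $F_3$-hooks, namely the good ones. Every other internal vertex has degree exactly $2$ in $G$: it has no neighbor in $S$, no hanger, and degree $2$ in the skeleton. Hence consecutive hook-free stretches are degree-2-paths, and non-applicability of \Cref{rule:degree-2-overbridge} caps each stretch at $\cOO(1)$ vertices. Each skeleton path therefore has $\cOO(1)$ internal vertices, and $|F_3| = \cOO(|F_1|)$.

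\textbf{Step 3: the rest of $F_2$.} Every vertex of $F_2\setminus F_3$ lies in a pendant tree attached either to a vertex of $F_1\cup F_3$ or, when its component contains only one vertex of $F_1$, to that vertex's region. I would argue this structurally: a vertex of $F_2\setminus F_3$ is separated from $S$ by a single vertex of $F_1\cup F_3$, so it belongs to a pendant tree $C$ attached to some $x\in F_1\cup F_3$. The size of these pendant trees is then controlled by two rules:
\begin{itemize}
\item by \Cref{rul:pendant-trees}, each $x$ carries at most $3$ pendant trees;
\item by \Cref{lem:C-pendant-tree-property} together with \Cref{rule:degree-2-tail}, each pendant tree has at most $5$ vertices (or at most $2$ if it is a path).
\end{itemize}
Summing, $|F_2\setminus F_3| \le 15\,|F_1\cup F_3| = \cOO(|F_1|)$. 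Altogether $|V_2| = \cOO(|F_1|) = \cOO(|S|^2)$.

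\textbf{Main obstacle.} The hard step is making Step 2's path-counting rigorous. First, a vertex of $F_3$ with a hanger that is not a hook in the sense of the definition (for example, a critical vertex) must still have its hangers counted; this is handled in Step 3. Second, and more delicate, I must verify that between hooks the internal vertices really have degree $2$ in $G$ and not merely in the skeleton. That holds because a non-hook $F_3$ vertex of skeleton-degree $2$ has no hanger and no $S$-neighbor. I would also check separately the boundary case of a degree-2-path both of whose ends are in $F_1$, using \Cref{rule:degree-2-overbridge} there as well.
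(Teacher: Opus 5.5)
Your proposal is correct and follows essentially the same route as the paper. You bound $F_1$ by non-applicability of \Cref{rule:expansion}, bound the forest $G[F_1\cup F_3]$ via its leaves and branch vertices, allow at most two good hooks per branch-to-branch path with degree-2 stretches capped by \Cref{rule:degree-2-overbridge}, and charge the rest to at most three pendant trees of at most five vertices each; you are also more explicit than the paper about why non-hook internal vertices have degree exactly $2$ in $G$ and about path-shaped pendant trees via \Cref{rule:degree-2-tail}. One small slip: $k\le |S|$ does not follow from $\widehat{S}$ being nonempty, so instead observe that if $|S|\le k$ then $S$ itself is a solution (so one may assume $|S|>k$), or simply carry the bound $|F_1|=\cOO(|S|^2+k|S|)$, which suffices for the final kernel size.
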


\begin{proof}
  As we have partitioned the vertices $G[V_2]$ in $F_1$ and $F_2$, we will simply focus on giving the bound on these two partition of $G[V_2]$. 
  Since \Cref{rule:expansion} is not applicable, it implies that every $v\in S$ is adjacent to at most $7(|S| + |Z_v|) +5-1$ vertices in $G[V_2]$.
  Hence, $|F_1|$ is $\cOO(|S|^2)$.
  The subgraph $G[F_1 \cup F_3]$ is a forest and the set of leaves in $G[F_1 \cup F_3]$ is a subset of $F_1$.
  Hence, the number of leaves in $G[F_1 \cup F_3]$ is at most $|F_1|$.
  In addition, every $F_3$-critical vertex is a vertex of degree at least 3 in $G[F_1 \cup F_3]$.
  Hence, $|F_3^c|$ is at most the number of leaves in $G[F_1 \cup F_3]$, implying that $|F_3^c| \leq |F_1|$.
  
  Let $x, y \in F_1 \cup F_3^c$ and $P_{x, y}$ denote the unique path between $x$ and $y$ in $G[F_1 \cup F_3]$.
  Since $G[F_1 \cup F_3]$ is a forest, the number of such unique paths is at most $|F_1| + |F_3^c| - 1 \leq 2|F_1| - 1$.
  Each of these paths can have most $2$ good $F_3$-hooks.
  Hence, the number of good $F_3$-hooks is at most $4|F_1| - 2$.
  As \Cref{rule:remove-F3-hangers} is not applicable, there does not exist any bad $F_3$-hook.
  Hence, the number of vertices that are good $F_3$-hooks or from $F_1 \cup F_3^c$ is at most $|F_1 \cup F_3^c| + 4|F_1| - 2 \leq 6|F_1| - 3$.
  
  We consider the vertices of $F_2 \setminus F_3$.
  Every vertex of $F_2 \setminus F_3$ is part of a pendant tree attached to a vertex that is a good $F_3$-hook or a vertex from $F_1 \cup F_3^c$.
  As \Cref{rul:pendant-trees} is not applicable, there can be at most 3 pendant trees attached to every possible vertex that is a good $F_3$-hook or a vertex from $F_1 \cup F_3^c$.
  Therefore, the number of pendant trees attached to the good $F_3$-hooks or to the vertices of $F_1 \cup F_3^c$ is at most $18|F_1| - 9$.
  Since \Cref{rul:pendant-tree} is not applicable, every such pendant tree attached to the good $F_3$-hooks or the vertices of $F_1 \cup F_3^c$ has at most 5 vertices.
  Hence, the number of vertices from $F_2 \setminus F_3$ is at most $90|F_1| - 45$.
  
  Now, we consider the number of vertices in $F_3 \setminus F_3^c$ that are not $F_3$-hooks.
  As \Cref{rule:remove-F3-hangers} is not applicable, there is no bad $F_3$-hook.
  Hence, all the vertices other than $F_3$-hooks are the ones that are part of degree-2-paths in $G$.
  The number of such paths between two vertices $x$ and $y$ that are good $F_3$-hooks or the vertices of $F_1 \cup F_3^c$ is at most $6|F_1| - 3$.
  As \Cref{rule:degree-2-overbridge} is not applicable, every degree-2-path can have at most 2 internal vertices.
  Hence, the number of vertices in $F_3 \setminus F_3^c$ that are not good $F_3$-hooks is at most $12|F_1| - 6$.
  
  Hence, the total number of vertices in $G[V_2]$ is $\cOO(|F_1|)$.
  As $|F_1|$ is $\cOO(|S|^2)$, the number of vertices in $G[V_2]$ is $\cOO(|S|^2)$.
\end{proof}

\subsection{Bounding the Proper Interval Graph Components of \texorpdfstring{$G - S$}{G - S}}
\label{sec:bounding-pig-components2}
As $G[V_{1}]$ is a collection of {\pig s}, there exists a proper interval ordering $\mathcal{V}$.
Let us fix such an ordering $\mathcal{V} = \langle v_{1}, v_2, \dots, v_{|V_1|}\rangle$ of $G[V_{1}]$.
Using this ordering, we can derive a clique partition $\mathcal{K} = \{K_1, K_2, \dots, K_t\}$ of $G[V_{1}]$ as described in \Cref{sec:prelims}.
In the following, we provide a series of reduction rules that aim to provide the upper bound on $|V_1|$ in 3 crucial steps.
First, we provide one reduction rule that provides an upper bound on the number of connected components in $G[V_1]$.
Second, we provide the next reduction rule that provides an upper bound on the number of cliques as per $\cKK$ in every connected component of $G[V_1]$.
Finally, we provide the last two reduction rules that provides an upper bound on the number of vertices in every clique of $\cKK$. 
Our following observation ensures us that if a proper interval graph is acyclic, then it must be a path.

\begin{observation}
  \label{obs:triangle-free-pig}
  If a connected component $C$ of $G - S$ is acyclic and a {\pig}, then $C$ is a path.
\end{observation}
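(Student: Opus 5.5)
The plan is to first show that $C$ has maximum degree at most $2$, and then to use connectivity and acyclicity to conclude that $C$ is a path. Since $C$ is acyclic, it is a tree (it is connected, being a connected component). So it suffices to show that $C$ has no vertex of degree at least $3$: a connected acyclic graph with maximum degree at most $2$ is a path.

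Suppose, for contradiction, that some vertex $u \in C$ has three distinct neighbours $a, b, c$ in $C$. Because $C$ is acyclic, it has no triangle. Hence $a, b, c$ are pairwise non-adjacent, and $\{u, a, b, c\}$ induces a claw in $C$.

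It remains to show that a proper interval graph contains no induced claw. I would argue this directly from a unit interval representation, which the paper notes we may assume.
\begin{itemize}
\item Let $I_u = [p, p+1]$. Each of $I_a, I_b, I_c$ is a unit interval meeting $I_u$, so its left endpoint lies in $[p-1, p+1]$.
\item Consider the three left endpoints. By pigeonhole, two of them lie in the same half $[p-1, p]$ or $[p, p+1]$, and their difference is then at most $1$.
\item Two unit intervals whose left endpoints differ by at most $1$ intersect, so the corresponding two leaves of the claw are adjacent. This contradicts the independence of $\{a, b, c\}$.
\end{itemize}
As an alternative, one could invoke the forbidden-subgraph characterisation of proper interval graphs as claw-free interval graphs.

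Therefore every vertex of $C$ has degree at most $2$ in $C$. As a connected acyclic graph, $C$ is then a path.

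The only step needing care is the claw-freeness of proper interval graphs. I expect it to go through with the unit-interval pigeonhole argument above.
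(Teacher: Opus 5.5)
Your proposal is correct and follows the paper's proof: a vertex of degree at least $3$ together with three neighbours that are pairwise non-adjacent (by acyclicity) gives an induced claw, which contradicts $C$ being a proper interval graph. The only difference is that you prove claw-freeness of proper interval graphs directly with the unit-interval pigeonhole argument, which is valid, whereas the paper takes this fact as known.
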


\begin{proof}
  Suppose that the premise of the statement is true.
  Let $C$ be a connected component of $G - S$ that is a proper interval graph and has no cycle.
  Then, $C$ cannot have any triangle, implying that $C$ is triangle-free.
  First crucial observation is that $C$ has no triangle and no hole (induced cycle of length at least 4).
  Hence, $C$ is acyclic.
  Suppose for the sake of contradiction that $C$ is not a path.
  Since $C$ is connected and not a path, therefore, there exists a vertex $v$ in $C$ such that degree of $v$ is at least $3$.
  Let $u_1, u_2, u_3$ be three distinct neighbors of $v$ in $C$.
  Since $C$ is acyclic, there cannot exist any edge between any two vertices in $\{u_1, u_2, u_3\}$.
  Therefore, the induced subgraph on the vertex set $\{v, u_1, u_2, u_3\}$ is a claw which contradicts the fact that $C$ is a {\pig}.
  Hence, $C$ must be a path.  
\end{proof}

Due to \Cref{obs:triangle-free-pig}, any acyclic component of $G-S$ lies in $G[V_2]$.
Thus, we assume every component of $G[V_1]$ contains a cycle, hence at least one triangle.
Consequently, it is natural to assume that the cliques in $\cKK$ restricted to such a component $C$ appear consecutively.

\subparagraph*{Bounding the Number of Components in $G[V_1]$.}
We focus on the first phase in which we provide an upper bound on the number of connected components of $G[V_1]$.
In order to state the corresponding reduction rule, we first define an auxiliary bipartite graph $\cBB$ with bipartition $(A, \cCCC)$ as follows.

\begin{definition}[Construction of Auxilliary Bipartite Graph]
  \label{def:auxiliary-bipartite}
  Let $A$ denote the vertex subset $S$ and ${\cCCC}$ denote the set of connected components in $G[V_1]$.
  For each vertex $v \in A$ and each component $D \in \cCCC$, we add an edge between $v$ and $D$ in $\cBB$ if and only if $v$ has at least one neighbor in $D$ in $G$.
\end{definition}

Our next reduction rule applies $q$-Expansion Lemma and provides an upper bound on $|\cCCC|$. 

\begin{reductionrule}
  \label{rul:proper-interval-bounding}
  We construct the auxiliary bipartite graph $\cBB = (A, \cCCC)$ as described in the \Cref{def:auxiliary-bipartite}.
  If $|\cCCC| \geq 3|A|$, then we invoke \Cref{lem:q-expansion} ($q$-Expansion Lemma with $q = 3$) and obtain nonempty sets $\widehat{S}\subseteq A$ and $\widehat{\cCCC} \subseteq {\cCCC}$ such that
  \begin{enumerate}[(i)]
    \item There is a $3$-expansion $M$ of $\widehat{S}$ into $\widehat{\cCCC}$ in $\cBB$, and
    \item $N_{\cBB}(\widehat{\mathcal{C}})\subseteq \widehat{S}$.
  \end{enumerate}

We remove all the vertices in $\widehat{\mathcal{S}}$ from $G$ to obtain the new instance $(G - \widehat{\mathcal{S}}, k - |\widehat{\mathcal{S}}|)$.
\end{reductionrule}

\begin{lemma}
 \Cref{rul:proper-interval-bounding} is safe.
\end{lemma}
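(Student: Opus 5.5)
The plan is the standard expansion-lemma argument: show that some optimal solution contains all of $\widehat{S}$. The backward direction ($\Leftarrow$) is immediate. If $X'$ is a solution to $(G-\widehat{S}, k-|\widehat{S}|)$, then $X' \cup \widehat{S}$ is a solution to $(G,k)$ of size at most $k$, because $G - (X'\cup \widehat{S}) = (G-\widehat{S}) - X'$.

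For the forward direction ($\Rightarrow$), let $X$ be a solution to $(G,k)$. I will construct $\widehat{X} := (X \setminus \bigcup_{D \in \widehat{\cCCC}} D) \cup \widehat{S}$ and prove two things: $|\widehat{X}| \le |X|$, and $G - \widehat{X}$ is a simple {\pitgraph}. Then $\widehat{X}\setminus \widehat{S}$ is a solution to $(G-\widehat{S}, k-|\widehat{S}|)$, using \Cref{obs:reduction}.

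\emph{Size bound.} Fix $u \in \widehat{S} \setminus X$ and let $D^u_1, D^u_2, D^u_3 \in \widehat{\cCCC}$ be its partners in the $3$-expansion $M$. Each $D^u_i$ is a connected proper interval graph containing a triangle (by \Cref{obs:triangle-free-pig} and the definition of $V_1$), and $u$ has a neighbour $a_i \in D^u_i$. Suppose $X$ misses $D^u_1 \cup D^u_2 \cup D^u_3$. Then the vertices $u, a_1, a_2, a_3$ lie in one component of $G - X$. They form a claw centred at $u$, since the $a_i$ lie in distinct components of $G - S$ and so are pairwise non-adjacent. That component also contains a triangle of $D^u_1$, which is a contradiction. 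Hence $X$ intersects $D^u_1 \cup D^u_2 \cup D^u_3$. The expansion triples of distinct vertices of $\widehat{S}$ are disjoint, so we obtain an injection from $\widehat{S} \setminus X$ into $X \cap \bigcup_{D\in\widehat{\cCCC}} D$, which gives $|\widehat{X}| \le |X|$. I expect this step to be easy; note it does not even need the number $3$ fully, only that the claw sees three components.

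\emph{Validity of $G - \widehat{X}$.} Since $N_{\cBB}(\widehat{\cCCC}) \subseteq \widehat{S}$, every $D \in \widehat{\cCCC}$ satisfies $N_G(D) \subseteq \widehat{S} \subseteq \widehat{X}$. So each $D \in \widehat{\cCCC}$ is a whole connected component of $G - \widehat{X}$, and it is a simple proper interval graph: double-edges are only incident to $S$, and $D$ is an induced subgraph of $G - S$. Every other component of $G - \widehat{X}$ avoids $\bigcup_{D \in \widehat{\cCCC}} D$. Outside that union, $\widehat{X} \supseteq X$, so such a component is an induced subgraph of a component of $G - X$. It is therefore a proper interval graph or a tree by heredity, and it has no double-edge.

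The only delicate point I anticipate is the claim that $D \in \widehat{\cCCC}$ are exactly components of $G - \widehat{X}$. This needs the bipartite graph $\cBB$ to use all of $A = S$, so that every neighbour of $D$ outside $D$ lies in $S$; the definition of $V_1$ as a union of components of $G - S$ gives this. Finally, the existence of $\widehat{S}$, $\widehat{\cCCC}$ requires $|\cCCC| \ge 3|A|$ and no isolated vertex in $\cCCC$. The latter holds because \Cref{rul:isolated-component-reomval} has been applied exhaustively, so every component has a neighbour in $S$.
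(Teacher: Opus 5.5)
Your proof is correct and uses the same exchange argument as the paper. For $u \in \widehat{S}\setminus X$, a claw centred at $u$ with leaves in its three partner components, plus a triangle in one of them, forces $X$ to spend a vertex inside those components, and the solution is then moved onto $\widehat{S}$. Your version is also more careful than the paper's, which has two weak points:
\begin{itemize}
\item it asserts that $X$ must meet $V(D_1)$ specifically, when it need only meet $D_1 \cup D_2 \cup D_3$;
\item it never checks that the swapped set is still a solution.
\end{itemize}
Your global replacement $(X \setminus \bigcup_{D \in \widehat{\cCCC}} D) \cup \widehat{S}$, the injection for the size bound, and the explicit use of $N_{\cBB}(\widehat{\cCCC}) \subseteq \widehat{S}$ for validity close both gaps.
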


\begin{proof}
  Clearly, if $X$ is solution to $(G - \widehat{S}, k - |\widehat{S}|)$ instance then $X \cup \widehat{S}$ is a solution to $(G, k)$ instance.
  Therefore, the backward direction ($\Leftarrow$) follows.

  Now, let $X$ be a solution to $(G, k)$ instance.
  We claim that $\widehat{S}\subseteq X$.
  Suppose for the sake of contradiction that there exists a vertex $v\in \widehat{S}$ such that $v\notin X$.
  Since there is a $3$-expansion $M$ of $\widehat{S}$ into $\widehat{\mathcal{C}}$ in $\mathcal{B}$, therefore, there are at least $3$ distinct proper interval components $D_1, D_2, D_3 \in \widehat{\cCCC}$ such that $v \in \widehat{S}$ has at least one neighbor in each of these components $D_1, D_2, D_3$.
  Due to \Cref{obs:triangle-free-pig} and our construction of $\mathcal{B}$, each component $D\in \mathcal{C}$ contains at least one triangle in $G$.
  Let $u_1$ be a neighbor of $v$ in $D_1$.
  Similarly, let $u_2$ and $u_3$ be the neighbors of $v$ in $D_2$ and $D_3$ respectively.
  Let $T$ be a triangle in $D_1$.
  Consider the induced subgraph $G[\{v, u_1, u_2, u_3\}]$.
  Note that $\{u_1, u_2, u_3\}$ is an independent set.
  Hence, $G[\{v, u_1, u_2, u_3\}]$ forms a claw.
  Therefore, there is a {\ctpair} $(J, T)$ in $G[\{v, u_1, u_2, u_3\}\cup V(D_1)]$ in $G$.
  Then, $X$ must contain a vertex from $V(D_1)$ since the claw $\{v, u_1, u_2, u_3\}$ and $T$ have to be separated by $X$.
  It implies that $X \cap V(D_1) \neq \emptyset$.
  We set $X := (X \setminus V(D_1)) \cup \{v\}$.
  
  We repeat this process for every $v \in \widehat{S} \setminus X$.
  Observe that this procedure does not increase the number of vertices in $X$.
  Hence, our claim follows that all vertices in $\widehat{S}$ must be included in the solution $X$.
  Therefore, the forward direction ($\Rightarrow$) follows, completing the proof of the lemma. 
\end{proof}

\subparagraph*{Bounding the Number of Cliques of $\cKK$ in $G[V_1]$.}
Now, we move our focus for the second step in which we provide an upper bound on the number of cliques of $\cKK$ in $G[V_1]$.
{The following reduction rule is based on the idea that if $v \in S$ is adjacent to at least one vertex of $6k + 5$ distinct cliques in a component of $G[V_1]$, then there are at least $k+1$ obstructions in $G[V_1 \cup \{v\}]$ that pairwise intersect only at $v$.
Hence, $v$ must be chosen in any solution of size at most $k$.}

\begin{reductionrule}
  \label{rul:pig-high-nbd}
  Let $v \in S$ and $C$ be a connected component of $G[V_1]$.
  Additionally, let $\cKK_C$ denote the set of cliques from $\cKK$ that appear in $C$.
  If $v$ is adjacent to a vertex of at least $6k + 5$ distinct cliques of $\cKK_C$, then the new instance is $(G - v, k - 1)$.
\end{reductionrule}

\begin{lemma}
 \Cref{rul:pig-high-nbd} is safe.
\end{lemma}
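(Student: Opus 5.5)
The backward direction ($\Leftarrow$) is immediate: if $X$ is a solution to $(G - v, k-1)$, then $X \cup \{v\}$ is a solution to $(G, k)$. For the forward direction, I will show that every solution $X$ of size at most $k$ contains $v$. I will do this by exhibiting $k+1$ obstructions in $G[V_1 \cup \{v\}]$ that pairwise intersect only in $v$. Suppose $v \notin X$.

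Let $K_{i_1}, K_{i_2}, \dots, K_{i_{6k+5}}$ be cliques of $\cKK_C$ containing neighbors $w_1, \dots, w_{6k+5}$ of $v$, with $i_1 < i_2 < \dots < i_{6k+5}$. By \Cref{prop:pig-clique-nbd}, vertices in cliques whose indices differ by at least $2$ are nonadjacent. So I first thin the sequence to every other clique: $w_1, w_3, w_5, \dots$. This gives at least $3k+3$ pairwise nonadjacent neighbors of $v$ in $C$, lying in cliques with indices at least $2$ apart. I then group them into $k+1$ consecutive blocks of three, say $(w^j_1, w^j_2, w^j_3)$ with the clique indices increasing. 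Each triple is independent, so $\{v, w^j_1, w^j_2, w^j_3\}$ induces a claw $J_j$. The path of $C$ between $w^j_1$ and $w^j_3$ through the intermediate cliques $K_{i}$, for $i$ in the index range of block $j$, stays inside that range. For distinct blocks these vertex sets are disjoint, because the index ranges are disjoint; I reserve the extra slack from $6k+5$ for a gap clique between blocks if needed.

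Next I need a triangle, or another obstruction, for each block, disjoint across blocks. Consider the induced path $P_j$ in $C$ from $w^j_1$ to $w^j_2$ using only cliques between them. Together with $v$ it closes a cycle $v, w^j_1, \dots, w^j_2, v$. If $v$ has no neighbor on the interior of $P_j$, this is a hole of length at least $4$, since $w^j_1 w^j_2 \notin E$; it lies inside block $j$ plus $v$, and $X$ must hit it outside $v$. Otherwise $v$ has a neighbor $u$ on $P_j$, and I take $u$ adjacent to $w^j_1$ along $P_j$ or, more generally, two consecutive neighbors of $v$ on $P_j$. Either they are adjacent, giving a triangle together with $v$, or they give a shorter hole. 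In the triangle case, I pair the triangle $T_j$ with the claw $J_j$ through $v$; the pair is pairwise intersecting at $v$, and it is confined to block $j$ together with $v$. Hence each block contains an obstruction whose vertices outside $v$ lie in the block's cliques. These are $k+1$ obstructions meeting only at $v$, so $X$ needs at least $k+1$ vertices, a contradiction. This is exactly the counting used in the safeness of \Cref{rul:flower}.

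The main obstacle is making the triangle and the claw coexist after $X$ is removed. A (claw, triangle)-pair is only an obstruction while it stays in a single component. Because they intersect at $v$, and $v \notin X$, $X$ must delete a vertex of $J_j \setminus \{v\}$ or of $T_j \setminus \{v\}$; this keeps the hitting within block $j$. I also need care with the constant: the $6k+5$ must cover the spacing by $2$, the triples, and the separation of blocks. If the counting is tight, I will instead take the claw using $w^j_1, w^j_3$ and a neighbor from a far clique, and verify the block disjointness directly.
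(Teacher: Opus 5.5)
Your proof is correct and is essentially the paper's argument in packing form. The paper splits the $v$-neighbor cliques into disjoint blocks of five consecutive ones and uses $|X|\le k$ to find a block (together with the cliques between) avoided by $X$; inside that block it builds a claw centered at $v$ on alternate cliques plus a triangle or hole through $v$, which is the contrapositive of your $k+1$ obstructions meeting only at $v$. Your worry about the constant is unfounded: the blocks $(w_{6j-5},w_{6j-3},w_{6j-1})$, $j\in[k+1]$, use exactly $w_1,\dots,w_{6k+5}$ and have pairwise disjoint index ranges. Your consecutive-neighbor argument along an induced path within the block is also a cleaner route to the triangle-or-hole than the paper's case split, since it automatically handles neighbors of $v$ in the interior of the connecting path.
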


\begin{proof}
  Clearly, if $X$ is a solution to $(G - v, k-1)$ instance then $X \cup \{v\}$ is a solution to $(G, k)$ instance.
  Therefore, the backward direction ($\Leftarrow$) follows.
  
  Now, to prove the forward direction ($\Rightarrow$), we prove that if $X$ is a solution of size at most $k$ to $(G, k)$ instance then we claim that $X$ must include $v$.
  Suppose for the sake of contradiction that $v \notin X$.
  As \Cref{rule:buss-rule-double-edge} is not applicable, there are at most $k$ of these $6k + 5$ cliques from $\cKK_C$ have one vertex adjacent to $v$ with a double-edge.
  Then, there are $5k + 5$ cliques from $\cKK_C$ such that each of these $5k + 5$ cliques have a vertex that is adjacent to $v$ that is not a double-edge.
  Let $C^v_1, C^v_2,\ldots,C^v_{5k+5}$ denote order the $5k + 5$ cliques of $\cKK_C$ arranged in the proper-interval ordering such that each of these $C^v_i$ has a vertex adjacent to $v$.
  Since $|X| \leq k$, out of these $5k + 5$ cliques of $\cKK_C$, there are 5 cliques $C^v_{i}, C^v_{i+1},C^v_{i+2},C^v_{i+3},C^v_{i+4}$ in $\cKK_C$ appearing from left to right such that
  \begin{itemize}
  	\item each of $C^v_{i}, C^v_{i+1},C^v_{i+2},C^v_{i+3},C^v_{i+4}$ have a vertex adjacent to $v$,
  	\item $X \cap (C^v_{i} \cup  C^v_{i+1} \cup C^v_{i+2} \cup C^v_{i+3} \cup C^v_{i+4}) = \emptyset$, and
  	\item $X$ does not contain any vertex from any clique of $\cKK_C$ that appear between $C^v_i$ and $C^v_{i+4}$.
  \end{itemize}  
  
  Observe that any edge with one endpoint $v$ and other endpoint in these 5 cliques is not a double-edge.
  Then the following cases arise.
  \begin{description}
  	\item[Case-(i):] 
	  If $v$ is adjacent to 2 vertices of $C^v_{i+j}$ for some $j \in \{0, 1, 2, 3, 4\}$, then there is a triangle containing $v$ and two vertices in $C^v_{i+j}$.
	  In addition, consider $v$ and its 3 neighbors, one appearing in $C^v_i$, one appearing in $C^v_{i+2}$ and one appearing in $C^v_{i+4}$.
	  Note that no vertex of $C^v_{i}$ is adjacent to any vertex of $C^v_{i+2} \cup C^v_{i+4}$ and no vertex of $C^v_{i+2}$ is adjacent to any vertex of $C^v_{i+4}$.
	  Hence, there is a claw with center $v$.
	  Hence, there is a claw and a triangle that pairwise intersect at $v$ appearing in a same connected component
	  This contradicts that $G - X$ is a {\pitgraph}.
	  \item[Case-(ii):] If $v$ is adjacent to only one vertex in every $C^v_{i}, C^v_{i+1},C^v_{i+2},C^v_{i+3},C^v_{i+4}$.
	  Let $x_0, x_1, x_2, x_3, x_4$ be the neighbors of $v$ in $C^v_{i}, C^v_{i+1},C^v_{i+2},C^v_{i+3},C^v_{i+4}$, respectively.
	  Since $X$ does not contain any vertex that appears in the cliques of $\cKK_C$ between $C^v_i$ and $C^v_{i+4}$, and they appear in the same connected component, therefore, there is a path from $x_0$ to $x_4$ that passes through $x_1, x_2$ and $x_3$.
	  In addition, no vertex of $C^v_{i}$ is adjacent to any vertex of $C^v_{i+j}$ for any $j \geq 2$.
	  If $x_0 x_1 \in E(G)$, then there is a triangle $(v, x_0, x_1)$ and a claw with center $v$ and the other vertices $x_0, x_2, x_4$.
	  Then, we have a claw and a triangle that are pairwise intersecting in $G - X$. 
	  This contradicts that $G - X$ is a {\pitgraph}.
	  Similarly, if $x_1 x_2 \in E(G)$ or $x_2 x_3 \in E(G)$ or $x_3 x_4 \in E(G)$, then by similar argument, we can prove that there is a claw and a triangle in $G - X$ that are pairwise intersecting with each other.
	  It also implies that $G - X$ is not a {\pitgraph}.
	 
	  Hence, we assume that $\{x_0, x_1, x_2, x_3, x_4\}$ forms an independent set.
	  But, there is a path between $x_0$ and $x_{1}$ in $G - X$.
	  The length of a shortest path is at least 2 in $G - X$.
    Then, the vertices $v, x_0, x_1$ and a shortest path between $x_0$ and $x_1$ in $G - X$ forms a hole (induced cycle with at least 4 vertices).
    This also contradicts that $G - X$ is a {\pitgraph}.
	\end{description}
	
  As both the above cases are mutually exhaustive, therefore, $v \in X$ for any optimal solution of size at most $k$.
\end{proof}

Now, we consider bounding the number of cliques in $\cKK$ disjoint from $N_G(S)$. 
In order to achieve this, we first prove the following structural property.

\begin{lemma}
  \label{lem:2-disjoint-triangle}
  Let $K_{1} \dots K_{7}$ be the seven consecutive cliques in $\mathcal{K}\subseteq G[V_1]$ such that $K_{i}\cap N(S)=\emptyset$, for each $i\in [7]$.
  If none of the above reduction rules is applicable, then there are at least $2$ vertex-disjoint triangles in $G[\mathbb{K}]$, where $\mathbb{K}=\bigcup\limits_{i\in [7]}K_i$.
\end{lemma}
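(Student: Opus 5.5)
Since $K_1,\dots,K_7$ are consecutive cliques of $\cKK$ contained in a single component $C$ of $G[V_1]$ (consecutive cliques in $\cKK$ that are adjacent lie in one component), the plan is to exploit two facts. First, no vertex of $\mathbb{K}$ has a neighbour in $S$. Second, \Cref{rule:degree-2-tail} and \Cref{rule:degree-2-overbridge} are no longer applicable, so there is no degree-2-path of $G$ with at least $5$ vertices and no long degree-2-tail. By \Cref{prop:pig-clique-nbd}, $N(K_i)\subseteq K_{i-1}\cup K_{i+1}$, so for $2\le i\le 6$ every neighbour of a vertex of $K_i$ lies in $K_{i-1}\cup K_i\cup K_{i+1}$. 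In particular, $d_G(u)$ for $u\in K_i$ ($2\le i\le 6$) is computed entirely inside $G[\mathbb{K}]$. Consecutive cliques are joined by at least one edge, since the construction of the clique partition keeps the component connected.

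The key step is a local counting claim: if $|K_i|\ge 3$ for some $i$, then $K_i$ contains a triangle; and if $|K_i|+|K_{i+1}|\ge 3$ with an edge between them, then $K_i\cup K_{i+1}$ contains a triangle only when the two cliques have at least three vertices that are pairwise adjacent. I would therefore argue by contradiction. Suppose $G[\mathbb{K}]$ has no two vertex-disjoint triangles. Then $G[K_2\cup K_3]$ and $G[K_5\cup K_6]$ cannot both contain a triangle, since these sets are disjoint. Say, without loss of generality, that $G[K_2\cup K_3\cup K_4]$ has no triangle (the symmetric case with $K_4\cup K_5\cup K_6$ is handled in the same way). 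Then each of $K_2,K_3,K_4$ has at most $2$ vertices, and the graph induced on them is triangle-free, hence a proper interval graph without triangles, i.e.\ an induced path (as in \Cref{obs:triangle-free-pig}, since a claw is forbidden). Every vertex of $K_3$ has all its neighbours inside $K_2\cup K_3\cup K_4$, so it has degree at most $2$ in $G$.

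From this I would extract a degree-2-path of $G$ with at least $5$ vertices running through $K_2,K_3,K_4$ (possibly extended into $K_1$ or $K_5$). Its internal vertices have degree $2$ because they lie in a triangle-free strip of a proper interval graph with no outside neighbours. This contradicts the non-applicability of \Cref{rule:degree-2-overbridge}. A degree-2-tail is impossible here, because $C$ contains a triangle and continues on both sides, and in any case it would be excluded by \Cref{rule:degree-2-tail}. Alternatively, if the path is short because some $K_i$ is a single vertex of degree $2$, then the strip is a path of length at least $3$ in the middle and the same contradiction follows.

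The main obstacle is the case analysis showing that triangle-freeness of three consecutive cliques really forces a degree-2-path with at least $5$ vertices. Small cliques with $|K_i|=2$ and only partial adjacency to the next clique need to be checked to confirm that every internal vertex has degree exactly $2$. I would handle this by listing the proper interval orderings on at most $6$ vertices that are triangle-free, which are just paths. The path then has $|K_2|+|K_3|+|K_4|\ge 3$ internal vertices, and adding its endpoints in $K_1$ and $K_5$ gives at least $5$ vertices.
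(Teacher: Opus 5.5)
Your overall route matches the paper's: assume there are no two vertex-disjoint triangles, find a triangle-free run of consecutive cliques, and exhibit a degree-2-path on five vertices, contradicting \Cref{rule:degree-2-overbridge}. The gap is in your case reduction. Knowing that $G[K_2\cup K_3]$ and $G[K_5\cup K_6]$ are not both triangle-free says nothing about triangles meeting $K_4$. So it does not give you that $G[K_2\cup K_3\cup K_4]$ or $G[K_4\cup K_5\cup K_6]$ is triangle-free. For instance, take every $K_j=\{p_j,q_j\}$ (in this order), with edges $p_jq_j$ and $q_jp_{j+1}$ inside $\mathbb{K}$, plus the two chords $q_3q_4$ and $q_4q_5$. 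This is a valid proper interval ordering with exactly this clique partition. The only triangles are $\{q_3,p_4,q_4\}$ and $\{q_4,p_5,q_5\}$, and they share $q_4$. So your contradiction hypothesis holds, and $G[K_2\cup K_3]$ and $G[K_5\cup K_6]$ are triangle-free. Yet both triples through $K_4$ contain a triangle, so your argument never reaches a contradiction, even though \Cref{rule:degree-2-overbridge} does apply, at $q_1,p_2,q_2,p_3,q_3$. The repair is to compare two vertex-disjoint triples, e.g.\ $K_1\cup K_2\cup K_3$ and $K_4\cup K_5\cup K_6$. One of them must induce a triangle-free graph. Each of its cliques is followed by another clique of the same component, so each has exactly two vertices.

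Second, your degree count inside a triangle-free triple $K_j,K_{j+1},K_{j+2}$ is overstated, because only the middle clique has all its neighbours inside the triple. A vertex of $K_j$ can have two neighbours in $K_{j-1}$, and the last vertex of $K_{j+2}$ can have two neighbours in $K_{j+3}$, through triangles lying outside the triple. So the path through all of $K_j\cup K_{j+1}\cup K_{j+2}$ that you describe need not be a degree-2-path. Enumerating triangle-free orderings on six vertices cannot detect this, since it ignores the neighbouring cliques. What rescues the argument is a property of $\cKK$: the first vertex of a clique has no neighbour in the next clique. That clique begins at its first non-neighbour, and \Cref{prop:pig-clique} excludes later neighbours. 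Write $K_{j+1}=\{a_1,a_2\}$ and $K_{j+2}=\{b_1,b_2\}$ in the ordering. Then $a_1,a_2,b_1$ have degree exactly $2$ in $G$. The last vertex of $K_j$ followed by $a_1,a_2,b_1,b_2$ is a degree-2-path on five vertices whose endpoints have degree at least $2$, so it is not a degree-2-tail. This is exactly the device the paper uses in its own case analysis (``$u_4$ is the first vertex of $K_6$, hence cannot have any neighbor in $K_7$'').
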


\begin{proof}
  We assume for the sake of contradiction that every pair of triangles in $G[\mathbb{K}]$ intersect in at least one vertex.
  Hence, there is at most one triangle $T$ in $G[\mathbb{K}]$.
  It implies that 
  $G[\mathbb{K} \setminus T]$ induces a collection of paths or single edges.
  Due to \Cref{prop:pig-clique-nbd}, $T$ can lie in at most two consecutive cliques of $\mathbb{K}$.
  
  We consider the case when $T \subseteq K_2\cup K_3$ (the argument is similar when $T \subseteq K_1 \cup K_2$).
  Then, the vertices in $K_4, K_5, K_6$ induces a path with length at least 6.
  Note that any triangle from $G[\mathbb{K}]$ other than $T$ must contain at least one vertex from $K_2$ or from $K_3$.
 	Hence, such a triangle can be present only in $K_1 \cup K_2$ or in $K_2 \cup K_3$ or in $K_3 \cup K_4$.
 	Additionally, such a triangle cannot be present in $K_4 \cup K_5$ or in $K_5 \cup K_6$.
 	Hence, $|K_6| = |K_5| = |K_4| = 2$.
 	Consider the last vertex of $K_4$ as $u_1$.
 	Let us order the vertices as $u_1 ~{\uv} ~u_2 ~\uv ~u_3 ~\uv ~u_4 ~\uv ~u_5$ such that $K_5 = \{u_2, u_3\}$, $K_6 = \{u_4, u_5\}$.
 	In addition, $u_4$ is the first vertex of $K_6$, hence cannot have any neighbor in $K_7$.
 	Since the vertices of $\mathbb{K}$ have no neighbor in $S$, therefore each of $u_2, u_3, u_4$ have degree exactly 2.
 	Hence, $u_1 - u_2 - u_3 - u_4 - u_5$ forms a degree-2-path satisfying the premise of \Cref{rule:degree-2-overbridge}.
	This makes \Cref{rule:degree-2-overbridge} applicable leading to a contradiction. 	
  Therefore, if $K_2 \cup K_3$ contains a triangle, then there must exist at least one more triangle in $K_4 \cup K_5 \cup K_6 \cup K_7$.
  	
  Now, we consider the case when $T \subseteq K_3$.
  Then, there does not exist any triangle in $K_1 \cup K_2$.
  In addition, any other triangle must be part of $K_2 \cup K_3$ or part of $K_3 \cup K_4$ but not completely contained in $K_4$.
  Hence, $|K_4| = |K_5| = |K_6| = |K_7| = 2$.
  But no triangle can be part of $K_4 \cup K_5$ or $K_5 \cup K_6$, hence starting from the last vertex of $K_4$, the vertices of $K_4 \cup K_5 \cup K_6 \cup K_7$ forms a degree-2-path with at least 5 vertices.
  This implies that \Cref{rule:degree-2-overbridge} is applicable leading to a contradiction.
  Hence, one triangle must be present in $K_4 \cup K_5 \cup K_6 \cup K_7$.
  
  Now, we consider the case when $T \subseteq K_3 \cup K_4$ but $|K_3| = 2$.
  We have considered the case when $K_2 \cup K_3$ has a triangle that intersects $T$.
  Hence, we consider the case when $K_4 \cup K_5$ has a triangle that intersects $T$.
  Then, $|K_2| = |K_1| = 2$.
  In addition, the first vertex of $K_3$ is not part of any triangle.
  Let the vertices of $K_3$ are $v_4$ and $v_5$ such that $v_4 ~\uv ~v_5$.
  Additionally, we denote the vertices of $K_2$ as $v_2$ and $v_3$ such that $v_2 ~\uv ~v_3$.
  Let $v_1$ denote the last vertex of $K_1$.
  In such a case, when $\mathbb{K}$ has no neighbor in $S$, there is a degree-2-path $v_1 - v_2 - v_3 - v_4 - v_5$ such that only $v_1$ and $v_5$ are the only vertices with degree possibly more than 2.
  Then, \Cref{rule:degree-2-overbridge} is applicable leading to a contradiction.
  Hence, there should be a triangle in $K_1 \cup K_2 \cup K_3$.
  
  As the above-mentioned cases are mutually exhaustive, therefore there are 2 vertex-disjoint triangles in $\mathbb{K}$.
\end{proof}

Now, we state the next reduction rule that bypasses a clique which ensures us that there can be at most $14k + 5$ consecutive cliques in a component that are disjoint from $N_G(S)$.

\begin{reductionrule}
  \label{rul:bypass-rule}
  Let $K_{i-7k},\ldots,K_i,\ldots,K_{i+7k+4}$ is a set of $14k + 5$ consecutive cliques in a connected component of $G - S$.
  Consider the first vertex $x$ of $K_{i}$ and the first vertex $y$ of $K_{i+5}$.
  Find a minimum $(x, y)$-separator $L$ in $G - S$ and the clique(s) $K_{\ell}$ such that $L$ is disjoint from $K_{\ell}$, where $\ell\in \{i+1, i+2, i+3\}$.
  %{a clique $K_j$ such that $j \in \{i+1,i+2,i+3\}$ and $L$ is disjoint from $K_j$}.
  Let $G'$ be the graph obtained from $G$ by bypassing the clique $K_{\ell}$.
  Then, the new instance is $(G', k)$.
\end{reductionrule}

Let $\widehat{Y} = \cup_{j = i-7k}^{i + 7k + 4} K_j$ as per \Cref{rul:bypass-rule}.
Note that there are at least $7k$ cliques appearing before $K_{\ell}$ and at least $7k$ cliques appearing after $K_{\ell}$.
Hence, there are $2k$ vertex-disjoint blocks of 7 cliques each.
Due to \Cref{lem:2-disjoint-triangle}, each of those $2k$ blocks contain at least 2 vertex-disjoint triangles.
Hence, for any solution $X$ of size at most $k$, $G[\widehat{Y} \setminus X]$ contains at least one triangle.
Using these above characterizations, we prove the following structural lemma that ensures us that \Cref{rul:bypass-rule} is safe.

\begin{lemma}
  \label{lemma:Ki-opt-soln-structure}
  Let $X \subseteq V(G)$ be a set of minimum cardinality and $|X| \leq k$ such that $G - X$ is a {\pitgraph}.
  Additionally, assume that $Y = X \cap (\cup_{j=i}^{i+4} K_j) \neq \emptyset$ where $K_j$s are as per the \Cref{rul:bypass-rule}.
  If $x, y \notin Y$, then $Y$ is a $(x, y)$-separator in $G - S$.
  Subsequently, for a minimum $(x, y)$-separator $\widehat{Y}$, $G - (X \setminus Y) \cup \widehat{Y}$ is a {\pitgraph}.
\end{lemma}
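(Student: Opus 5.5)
Both assertions will be proved from the minimality of $X$, combined with the fact, established just before the statement, that $G[\widehat{Y}\setminus X]$ keeps a triangle on each side of the window $K_i,\dots,K_{i+4}$. The key input is this. There are $7k$ cliques before $K_i$ and $7k$ after $K_{i+4}$, giving $k$ disjoint blocks of seven consecutive cliques on each side. By \Cref{lem:2-disjoint-triangle}, each block carries two vertex-disjoint triangles. Since $|X|\le k$, some block on the left and some block on the right each keep a triangle $T_L$ and $T_R$ in $G-X$. These triangles also lie in the same component $C$ of $G[V_1]$ as $x$ and $y$.

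\emph{$Y$ is an $(x,y)$-separator.} Suppose some $x$--$y$ path $P$ in $G-S$ avoids $Y$. The cliques $K_i,\dots,K_{i+4}$ are consecutive, so by \Cref{prop:pig-clique-nbd} we may take $P$ inside $\bigcup_{j=i}^{i+5}K_j$, and hence $P$ avoids $X$. I then compare $X$ with $X':=X\setminus Y$ and aim to show $G-X'$ is still a {\pitgraph}, contradicting minimality of $X$. Every obstruction meeting $Y$ must be hit by $X\setminus Y$. The argument splits by obstruction type.
\begin{itemize}
\item \emph{Nets, tents and $C_4$--$C_6$.} By \Cref{lem:small-obstruction}, it suffices to hit those lying in $G[S]$. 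The vertices of $Y$ lie in $V_1\setminus S$, so such obstructions never need $Y$.
\item \emph{Holes through $Y$.} A hole through $Y$ must leave $C$ through $S$, because $C$ is a {\pig} and so has no holes. It therefore contains two vertices of $N(S)$ on opposite sides, using a segment of $C$ that crosses the window. The path $P$, together with the retained triangles, lets me reroute: I replace the segment through $Y$ by a segment through $P$. This keeps a hole in $G-X$, a contradiction.
\item \emph{Claw--triangle pairs.} Since $x$ and $y$ stay connected in $G-X$, the component containing the window already contains $T_L$ and $T_R$. So the only way $Y$ could help is by separating a claw from these triangles. But a claw linked to $Y$ remains linked to $T_L$ or $T_R$ through $P$.
\end{itemize}
Hence $X'$ is a smaller solution, a contradiction.

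\emph{Replacing $Y$ by a minimum separator.} Now let $\widehat{Y}$ be a minimum $(x,y)$-separator, so $|\widehat{Y}|\le|Y|$, and set $X^*=(X\setminus Y)\cup\widehat{Y}$. Two facts drive the argument.
\begin{itemize}
\item In a {\pig}, a minimum separator between the first vertices of $K_i$ and $K_{i+5}$ can be taken inside a single clique between them.
\item In $G-X^*$, the component $C$ is cut into a left part containing $T_L$ and a right part containing $T_R$, exactly as $Y$ did.
\end{itemize}
Any obstruction in $G-X^*$ must use a vertex of $Y\setminus\widehat{Y}$. Such a vertex lies in the window, between the cut and the triangles. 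I then argue obstruction by obstruction:
\begin{itemize}
\item \emph{Claw--triangle pairs.} The part of the window on each side of $\widehat{Y}$ attaches only to the corresponding side of $C$. A claw--triangle pair there would either lie inside a {\pig} piece, which is impossible, or use vertices of $S$. In the latter case its $S$-part is already cut by $X\setminus Y$, because the same attachment existed in $G-X$ via the retained triangles.
\item \emph{Holes.} A hole that crosses the window must cross $\widehat{Y}$, which is impossible. A hole that only touches one side corresponds to a hole in $G-X$ obtained by shortcutting through that side.
\end{itemize}

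The main obstacle is this last exchange argument. Vertices of $Y\setminus\widehat{Y}$ may be adjacent to $S$, and restoring them could create new holes, or claw--triangle pairs, through $S$. I would first try to show that any such new obstruction can be mapped to one in $G-X$ by sliding its window vertex along the proper interval ordering. This uses \Cref{prop:pig-clique} and the fact that $T_L$ and $T_R$ survive on both sides. If sliding fails, I would fall back to charging the new obstruction against the retained triangle.
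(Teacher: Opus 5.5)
\paragraph{Part one: the claw--triangle step.} Your first half follows the paper's route: minimality of $X$, a split by obstruction type, and rerouting a hole through $Y$ along a surviving $x$--$y$ path. The weak point is the claw--triangle bullet. It rests on the claim that, because $x$ and $y$ stay connected in $G-X$, the window's component of $G-X$ contains the pre-chosen surviving triangles $T_L$ and $T_R$. This does not follow. Vertices of $X\setminus Y$ lying between $T_L$ and $K_i$ can disconnect $T_L$ from the window, even though $T_L$ lies in the same component $C$ of $G[V_1]$. You reuse the same claim in your second half.

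The fix is to choose the triangle only after fixing the connecting path.
\begin{itemize}
\item The $14k+5$ cliques of \Cref{rul:bypass-rule} are disjoint from $N(S)$. This is the setting of \Cref{lem:2-disjoint-triangle}, which you invoke. Hence no claw meets $K_{i-7k+1}\cup\dots\cup K_{i+7k+3}$.
\item Take a path from a claw to a vertex of $Y$ in $G-(X\setminus Y)$, and cut it at its first vertex in $B=\bigcup_{j=i}^{i+4}K_j$. The prefix lies in $G-X$ and meets every clique of one whole side stretch.
\item That side carries $2k$ vertex-disjoint triangles, while $|X\setminus Y|\le k-1$, so some triangle there avoids $X$.
\item This triangle is adjacent to the prefix, because each $K_j$ is a clique.
\end{itemize}
So the claw and a triangle already share a component of $G-X$, a contradiction.

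\paragraph{Part two is not proved.} Your second half ends with strategies you would try, and the obstacle you name does not exist. Because the window cliques avoid $N(S)$, every vertex of $Y\setminus\widehat{Y}$ has all its neighbours in $K_{i-1}\cup B\cup K_{i+5}\subseteq V_1$. These are the only vertices of $G-X^*$ missing from $G-X$. With this observation the exchange is short, and it is essentially what the paper says.
\begin{itemize}
\item No net, tent, $C_4$--$C_6$ or claw contains such a vertex, since it would then lie inside the proper interval graph $C$.
\item A hole of $G-X^*$ through such a vertex contains an induced path of $C$ between two vertices of $N(S)$. Induced paths are monotone in $\mathcal{V}$, an easy consequence of \Cref{prop:pig-clique}. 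So this path has a vertex in each of $K_{i-1},\dots,K_{i+5}$. By \Cref{prop:pig-clique}, its segment from $K_i$ to $K_{i+4}$ extends to an $x$--$y$ path avoiding $\widehat{Y}$, which is impossible.
\item A claw in the component of such a vertex is handled by the traversal argument above.
\end{itemize}
No sliding or fallback is needed. Your single-clique separator fact is unnecessary, and it would not apply anyway, since the statement concerns an arbitrary minimum separator $\widehat{Y}$. Monotonicity also shows that your case of a hole touching only one side cannot occur.
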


\begin{proof}
  Consider a set $X \subseteq V(G)$ of the smallest cardinality such that $|X| \leq k$ and $G - X$ is a {\pitgraph} and $Y$ is defined as per the premise of the lemma statement.
  For the sake of simplicity, we assume that $B = \cup_{j=i}^{i+4} K_j$.
  As $X$ is a minimal solution to $(G, k)$, $X \setminus Y$ is not a solution to $(G, k)$.
  It implies that $G - (X \setminus Y)$ contains a component $D$ which is neither a proper interval graph nor a tree.
  However, this component $D$ contains the vertices of $Y$.
  Note that $Y$ cannot intersect any tent or net.
  Moreover, $Y$ cannot intersect any claw as $N(Y) \cap S = \emptyset$.
  %Hence, $D$ must contain a hole.
  Now, we consider the components that intersect $(\cup_{j=i-7k}^{i-1} K_j)$ and $(\cup_{j=i+5}^{i+7k+4} K_j)$.
  Due to \Cref{lem:2-disjoint-triangle}, there are $2k$ vertex-disjoint triangles in $(\cup_{j=i-7}^{i-1} K_j)$ and $2k$ vertex-disjoint triangles in $(\cup_{j=i+5}^{i+7k+4} K_j)$.
  Since $|X| \leq k$, there is at least one connected component of $G - X$ intersecting $(\cup_{j=i-7k}^{i-1} K_j)$ has a triangle.
  Additionally, there is at least one connected component of $G - X$ that intersects $(\cup_{j=i+5}^{i+7k+4} K_j)$ and has a triangle.
  Hence, any component of $G - X$ that intersects $(\cup_{j=i-7k}^{i-1} K_j)$ or $(\cup_{j=i+5}^{i+7k+4} K_j)$ is a path or a triangle.

  Hence, if the vertices of $Y$ are adjacent to any vertex of a component $D_1$ of $G - X$ intersecting $(\cup_{j=i-7k}^{i-1} K_j)$, then $D_1$ is a {\pig}.
  Similarly, if the vertices of $Y$ are adjacent to any vertex of a component $D_2$ of $G - X$ intersecting $(\cup_{j=i+5}^{i+7k+4} K_j)$, then $D_2$ is a {\pig}.
  Additionally, $v$ is not incident to any claw of $G$.
  Therefore, the component $D$ of $G - (X \setminus Y)$ that contains the vertices of $Y$ cannot have both claw and a triangle.
  Hence, it must contain a hole $H$. 
  As $H$ intersects $Y$, $H \cap (B \setminus \{x\}) \neq \emptyset$.
  Consider the induced path $P$ such that $P = H \cap B$.
  Let $x'$ and $y'$ be two neighbors of $P$ in $K_{i-1}$ and $K_{i+5}$.

  We first argue that $Y$ separates $x'$ from $y'$ in $G[B \cup \{x', y'\}]$.
  Suppose not.
  Then, there is an induced path $P'$ from $x'$ to $y'$ in $G[(B \setminus Y) \cup \{x', y'\}]$. 
  Let $x''$ be the neighbor of $x'$ in $P'$ and $y''$ be the neighbor of $y'$ in $P'$.
  Note that $x' \uv x'' \uv y'' \uv y'$.
  In addition, suppose that $x^*$ be the neighbor of $x'$ in $H$ such that $x^* \neq x$ (not in $P'$) and $y^*$ be the other neighbor of $y'$ in $H$ (not in $P'$) such that $y \neq y^*$.
  Hence, by choice, $x^* \uv x'$ and $y' \uv y^*$.
  Then, $x^* \uv x' \uv x''$ and $y'' \uv y' \uv y^*$.

  We consider the situations depending on $x^* x'', y'' y^*$ are edges or not.
  In either of these situations, we consider the path $P^*$ which is $x''$-$P'$-$y''$.
  If $x^* x'' \in E(G)$, then we add the edge $x^* x''$ into $P^*$. 
  Otherwise, we add two edges $x^* x', x' x''$ into $P^*$.
  If $y^* y'' \in E(G)$, then we add the edge $y^* y''$ into $P^*$, otherwise we add two edges $y'' y', y' y^*$ into $P^*$.
  In every possible combination of situations, we get an induced path $\widehat{P}$ that starts from $x^*$, uses $x''$-$P$-$y''$ and subsequently moves to $y^*$.
  After that, it uses the edges of $H$ that is disjoint from the edges of $P$ and the edges of $H \setminus (P \cup \{x, y\})$.
  This creates a hole in $G - X$, contradicting that $G - X$ is a {\pitgraph}.
  Hence, $Y$ must separate $x'$ from $y'$ in $G[B \cup \{x', y'\}]$.

  Finally, after deleting a $(x', y')$-separator of $G[B \cup \{x', y'\}]$ ensures that no vertex of $B$ can participate in any hole of $G$.
  Hence, $Y$ must be a minimal $(x', y')$-separator of in $G[B \cup \{x, y\}]$.
  As $x' \in K_{i-1}$ is adjacent to a vertex in $K_i$, $x'$ is adjacent to the first vertex of $K_i$ which is $x$.
  Since $y' \in K_{i+5}$, $yy'$ is an edge.
  As $Y$ separates $x'$ from $y'$, $Y$ separates $x$ from $y$.
  This completes the proof that $Y$ is a $(x, y)$-separator.

  Consider an arbitrary minimum $(x, y)$-separator $\widehat{Y}$ in $G - S$.
  Note that removal of $\widehat{Y}$ from $B \setminus \{x\}$ also separates $K_{i-1}$ from $K_{i+5}$.
  This will eventually ensure that there does not exist any hole of $G - X$ passing through the vertices of $B$.
  Hence, $(X \setminus Y) \cup \widehat{Y}$ is also an optimal solution to $(G, k)$.
\end{proof}

\begin{lemma}
 \label{lem:safe-bypass-rule}
 \Cref{rul:bypass-rule} is safe.
\end{lemma}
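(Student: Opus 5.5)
We prove the two directions separately, relying mainly on \Cref{lemma:Ki-opt-soln-structure} together with the fact that bypassing a clique of a proper interval graph keeps it a proper interval graph.

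\emph{Forward direction ($\Rightarrow$).} Let $X$ be a minimum solution of $(G,k)$. We normalise $X$ so that it avoids $K_\ell$. If $Y = X\cap(\cup_{j=i}^{i+4}K_j)=\emptyset$, then $X$ already avoids $K_\ell$. Otherwise we apply \Cref{lemma:Ki-opt-soln-structure} and replace $Y$ by the minimum $(x,y)$-separator $L$ chosen in \Cref{rul:bypass-rule}. This yields a solution $X^*=(X\setminus Y)\cup L$ with $|X^*|\le|X|$ and $X^*\cap K_\ell=\emptyset$. The boundary cases where $x$ or $y$ lies in $Y$ need a small exchange argument: shift the deleted vertex along the ordering, or note that $\{x\}$ is itself a separator of size one.

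We then show that $G'-X^*$ is a {\pitgraph}. The only edges of $G'$ not present in $G$ join $N(K_\ell)\cap K_{\ell-1}$ and $N(K_\ell)\cap K_{\ell+1}$, and $K_\ell$ has no neighbours in $S$. If $X^*$ contains $L$ and $L$ separates $x$ from $y$, then every new edge has both endpoints on one side of the cut, or has an endpoint in $L$. So the component structure of $G'-X^*$ is obtained from that of $G-X^*$ by bypassing $K_\ell$ inside a single proper interval component, which remains a proper interval graph by \cite{KE2018109}. Double-edges, nets, tents and claws do not touch $K_\ell$ or the new edges, because all these vertices lie in $V_1$, away from $S$, and claws there would contradict properness.

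\emph{Backward direction ($\Leftarrow$).} Let $X'$ be a minimum solution of $(G',k)$. We reverse the bypass: $G-X'$ is obtained from $G'-X'$ by re-inserting the clique $K_\ell$ between $K_{\ell-1}$ and $K_{\ell+1}$, replacing the complete join between their neighbourhoods. We argue as follows.
\begin{enumerate}[(a)]
\item The surrounding $7k$ cliques on each side still carry at least $2k$ disjoint triangles (\Cref{lem:2-disjoint-triangle}), so the components near $K_\ell$ in $G'-X'$ that contain triangles are proper interval graphs. Hence no claw can join them.
\item Re-inserting $K_\ell$ into a proper interval component corresponds to refining the clique partition. It creates no claw, net, tent or hole, provided that whenever a vertex of $N(K_\ell)\cap K_{\ell\pm1}$ survives, the local ordering is preserved. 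If some such vertices are deleted by $X'$, we again apply the separator exchange of \Cref{lemma:Ki-opt-soln-structure} in $G'$, moving deletions onto an $(x,y)$-separator. Since $L$ avoids $K_\ell$, the minimum separator size is the same in $G$ and $G'$.
\item If a hole in $G-X'$ passes through $K_\ell$, contracting it yields a hole, or a chordless path closing a hole, in $G'-X'$ through the bypass edges, which is a contradiction.
\end{enumerate}

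\emph{Main obstacle.} The hardest step is this backward direction. We must show that reconstructing $K_\ell$ cannot create a hole when $X'$ only partially deletes $N(K_\ell)\cap(K_{\ell-1}\cup K_{\ell+1})$. We would first try to prove that a minimum solution can be assumed to intersect $\cup_{j=i}^{i+4}K_j$ in a minimum separator, which gives the same value in $G$ and $G'$, and then transfer that separator back unchanged.
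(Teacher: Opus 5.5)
Your outline follows the paper's proof. In the forward direction you normalise a solution with \Cref{lemma:Ki-opt-soln-structure} so that it avoids $K_\ell$. In the backward direction you use the surviving triangles of $B_{pv}$ and $B_{nt}$ (\Cref{lem:2-disjoint-triangle}) against claw--triangle pairs, and you project holes through $K_\ell$ onto bypass edges.

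\textbf{The main problem.} You declare the backward direction unfinished because of an obstacle that does not exist. You then propose to remove it with step (b), a separator exchange via \Cref{lemma:Ki-opt-soln-structure} carried out inside $G'$. That step is unjustified: the lemma is proved for the clique structure of $G$, and its hypotheses would have to be re-established for $G'$. It is also unnecessary, because your item (c) already settles the matter, for the following reasons.
\begin{itemize}
\item Since $X'\subseteq V(G')$, all of $K_\ell$ is present in $G-X'$.
\item In $G'$ the bypass edges join every vertex of $N(K_\ell)\cap K_{\ell-1}$ to every vertex of $N(K_\ell)\cap K_{\ell+1}$, whatever $X'$ deletes.
\item A hole $H$ of $G-X'$ meeting $K_\ell$ has at most two (consecutive) vertices there. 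It must leave $V_1$, since $G[V_1]$ has no hole and these cliques have no neighbour in $S$.
\item Hence the $K_\ell$-segment of $H$ is flanked on $H$ by one vertex of $N(K_\ell)\cap K_{\ell-1}$ and one of $N(K_\ell)\cap K_{\ell+1}$. Two flanking vertices on the same side would be adjacent, giving a chord or a $C_4$ inside $G[V_1]$. These flanking vertices lie on $H$, so they are automatically outside $X'$.
\item Let $a$ and $b$ be the vertices of $H$ in $N(K_\ell)\cap K_{\ell-1}$ and in $N(K_\ell)\cap K_{\ell+1}$, respectively, that are farthest from the segment along $H$ (there are at most two on each side). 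Replace the subpath of $H$ from $a$ to $b$ through $K_\ell$ by the bypass edge $ab$.
\item The result is an induced cycle of $G'-X'$. The only new edges of $G'$ are bypass edges, and $a,b$ are the only remaining vertices of $H$ incident to them. It has far more than four vertices, because the rest of $H$ must cross $B_{nt}$ and $B_{pv}$ to reach $S$.
\end{itemize}
This is exactly the paper's argument, so partial deletion of $N(K_\ell)$ is irrelevant and step (b) should be dropped.

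\textbf{Smaller points on the forward direction.}
\begin{itemize}
\item You claim that $G'-X^*$ arises from $G-X^*$ by bypassing $K_\ell$ inside a single proper interval component. This is not immediate: that component may contain vertices of $S$, or be a tree. Moreover, $K_\ell$ is a member of the clique partition of $G[V_1]$, not of that component, so the bypass property from the preliminaries does not apply verbatim.
\item The paper avoids this by lifting obstructions. Because $X^*\cap K_\ell=\emptyset$, a hole of $G'-X^*$ through a bypass edge $uw$ becomes a hole of $G-X^*$ after subdividing $uw$. Use either a common neighbour $z\in K_\ell$, or $z_1z_2$ with $z_1\in N(u)\cap K_\ell$ and $z_2\in N(w)\cap K_\ell$. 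Claw--triangle components are handled via the surviving triangles of $B_{pv}$ and $B_{nt}$. You can use this directly.
\item Your treatment of the cases $x\in Y$ or $y\in Y$ is only gestured at. In particular, a set containing $x$ is not an $(x,y)$-separator. The paper's proof silently skips these cases as well.
\end{itemize}
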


\begin{proof}
  Let $K_{\ell}$ be the clique such that $\ell \in \{i+1, i+2, i+3\}$ and $K_{\ell}$ is bypassed by \Cref{rul:bypass-rule}.
  Additionally, assume that
  \begin{itemize}
    \item $B_{pv} = \cup_{j=i-7k}^{i-1} K_j$,
    \item $B = \cup_{j=i}^{i+4} K_j$, and
    \item $B_{nt} = \cup_{i+5}^{i+7k+4} K_j$.
  \end{itemize} 
  We use a few very crucial observations here.
  As $B_{pv}$ contains $k$ blocks of 7 consecutive cliques in $\cKK$ in the same connected component, therefore due to \Cref{lem:2-disjoint-triangle}, in a block of every consecutive 7 cliques, there are 2 vertex-disjoint triangles.
  Since there are $k$ such blocks, there are at least $2k$ such vertex-disjoint triangles in $B_{pv}$.
  Similarly, due to \Cref{lem:2-disjoint-triangle}, in $B_{nt}$, there are $2k$ vertex-disjoint triangles.
  Therefore, for solution $X$ of size at most $k$, there exists a connected component $D^*$ of $G - X$ that intersects $B_{pv} \cup B \cup B_{nt}$, and has a triangle.

  First we give the backward direction ($\Leftarrow$) of the proof.
  Let $X$ be a set of at most $k$ vertices such that $G' - X$ is a {\pitgraph} but $G - X$ has a connected component $D$ that is neither a proper interval graph nor a tree.
  Observe that the component $D$ intersects $K_{\ell}$.
  If $D$ contains a claw and a triangle, then $D$ must contain a path $P^*$ from $K_{\ell}$ to a claw $J$ in $G - X$.
  Note that no claw of $G$ can intersect $B_{pv} \cup B \cup B_{nt}$. 
  Hence, the path $P^*$ must pass through a vertex in $S$.
  In particular, $P^*$ passes through at least one vertex from every clique $K$ such that $K \subseteq B_{pv}$ or every clique $K \subseteq B_{nt}$.
  But note that $G[B_{nt}]$ (respectively, $G[B_{pv}]$) has at least $2k$ vertex-disjoint triangles.
  As $|X| \leq k$, both the subgraphs  $G[B_{nt}] \setminus X]$ and $G[B_{pv} \setminus X]$ have at least one triangle.
  Hence, $P^*$ passes through at least one triangle.
  But $G[B_{nt} \setminus X]$ and $G[B_{pv} \setminus X]$ are induced subgraphs of $G' - X$.
  Additionally, the deletion of $K_i$ does not disconnect the vertices of $G[B_{nt} \setminus X]$ from $J$ (respectively, the vertices of $G[B_{pv} \setminus X]$ from $J$) in the graph $G' - X$.
  Hence, $G' - X$ also has a connected component that contains a claw and a triangle, leading to a contradiction that $G' - X$ is a {\pitgraph}.

  Suppose that $D$ contains a hole $H$.
  Observe that $H$ must pass through the vertices of $K_{\ell}, K_{\ell - 1}$, and $K_{\ell + 1}$.
  Consider the leftmost vertex $x$ of $H$ in $N(K_{\ell}) \cap K_{\ell-1}$ and the rightmost vertex $y$ of $H$ in $N(K_{\ell}) \cap K_{\ell + 1}$.
  As $x$ and $y$ both have neighbors in $K_{\ell}$, $xy$ is an edge in $G'$.
  Hence, $xy$ is an edge in $G' - X$.
  Additionally, $H$ has at least 7 vertices and at most 2 vertices of $H$ are in $K_{\ell}$.
  Observe that removing the vertices of $K_{\ell}$ and adding the edge $xy$ into $H$ creates a subgraph $H'$ that has at least 5 vertices.
  Hence, $H'$ is also a hole in $G' - X$, leading to a contradiction that $G' - X$ is a {\pitgraph}.

  For the forward direction ($\Rightarrow$), let $X$ be a set of at most $k$ vertices such that $G - X$ is a {\pitgraph}.
  If $X \cap B (= Y) \neq \emptyset$ and $x, y \notin X \cap B$, then due to \Cref{lemma:Ki-opt-soln-structure}, $Y$ is a minimum $(x, y)$-separator and $Y \cap K_{\ell} \neq \emptyset$.
  Hence, we can assume that $X$ is disjoint from $K_{\ell}$.
  In addition, due to \Cref{lem:2-disjoint-triangle}, $G[B_{pv}]$ and $G[B_{nt}]$ both have at least $2k$ vertex-disjoint triangles.
  Hence, $G - X$ has at least one triangle when restricted to $G[B_{nt} \setminus X]$ as well as when restricted to $G[B_{pv} \setminus X]$.

  Suppose that $E'$ denote the edges of $G'$ added that are not in $G$ and $Z$ denote the endpoints of the edges in $E'$.
  Assume for the sake of contradiction that $G' - X$ is not a {\pitgraph}.
  Then, there is a component $D$ of $G - X$ that is not a {\pitgraph} and it contains the edges of $E'$.
  It implies that$D$ has a hole or a claw as well as a triangle.
  In case $D$ has a claw $J$ and a triangle $T$, then there is a path $P^*$ from $J$ to $T$.
  In addition, the claw $J$ cannot intersect the vertices of $(B \cup B_{pv} \cup B_{nt}) \setminus (K_{i-7k} \cup K_{i+7k+4})$.
  Hence, $P^*$ must pass through at least one vertex from every clique in $B_{nt} \setminus K_{i+7k+4}$ or at least one vertex from $B_{pv} \setminus K_{i-7k}$.
  As both $G[B_{nt} \setminus (K_{i+7k+4} \cup X)]$ and $G[B_{pv} \setminus (K_{i-7k} \cup X)]$ contains at least one triangle, hence such a triangle $T$ is already present in $G - X$ as well.
  Then, $G - X$ itself has a component containing both a claw and a triangle, leading to a contradiction.

  Now, we consider the case when $G' - X$ has a hole $H$.
  Note that the hole must use an edge from $E'$.
  Let $xy \in E'$ that is in $H$.
  It must be that $x \in K_{\ell - 1}$ and $y \in K_{\ell + 1}$.
  Consider the vertices of $K_{\ell}$ that are adjacent to $x$ or $y$.
  If there is a common neighbor $z$ of $x$ and $y$ in $K_{\ell}$, then $x$-$z$-$y$ forms an induced path with 3 vertices.
  In such a case, subdividing the edge $xy$ by a vertex $z$ creates a hole $H'$ that is in $G - X$. 
  This leads to a contradiction.
  If there is no common neighbor of $x$ and $y$ in $K_{\ell}$, then consider a neighbor $z_1$ of $x$ and $z_2$ of $y$ both in $K_{\ell}$.
  Note that $x$-$z_1$-$z_2$-$y$ forms an induced path.
  Then, subdividing the edge $xy$ by 2 internal vertices creates a hole $H'$ in $G - X$.
  This leads to a contradiction.

  Hence, this reduction rule is safe.
\end{proof}

Now, we are ready to bound the number of cliques in $\cKK$ with the help of Reduction Rules \ref{rul:proper-interval-bounding}, \ref{rul:pig-high-nbd} and \ref{rul:bypass-rule}.

\begin{lemma}
  \label{lem:boudning-number-of-cliques}
  Let $(G, k)$ be an instance of {\PITVD} to which none of the above reduction rules are applicable.
  Then, the number of cliques in $\cKK$ is $\cOO(k^2|S|^2)$.
\end{lemma}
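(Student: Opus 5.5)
We bound the number of cliques in $\cKK$ by bounding three quantities in turn: the number of components of $G[V_1]$, the number of cliques per component that contain a neighbor of $S$, and the lengths of the runs of cliques that contain no such neighbor.

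First, since \Cref{rul:proper-interval-bounding} is not applicable, the auxiliary graph $\cBB$ of \Cref{def:auxiliary-bipartite} satisfies $|\cCCC| < 3|A| = 3|S|$. Here we use that \Cref{rul:isolated-component-reomval} has removed every component of $G$ that is a proper interval graph or a tree. Hence every component of $G[V_1]$ has a neighbor in $S$, so no component of $\cCCC$ is isolated in $\cBB$, and the hypotheses of \Cref{lem:q-expansion} would hold if $|\cCCC|\ge 3|S|$. So $G[V_1]$ has fewer than $3|S|$ components.

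Next, fix a component $C$ of $G[V_1]$ with cliques $\cKK_C$. We call a clique of $\cKK_C$ \emph{marked} if it contains a vertex of $N_G(S)$. Since \Cref{rul:pig-high-nbd} is not applicable, each $v\in S$ meets at most $6k+4$ cliques of $\cKK_C$. Therefore $C$ has at most $(6k+4)|S|$ marked cliques. The marked cliques split the consecutive sequence $\cKK_C$ into at most $(6k+4)|S|+1$ maximal runs of consecutive unmarked cliques. Each unmarked clique is disjoint from $N_G(S)$.

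Then we bound each run by exhausting \Cref{rul:bypass-rule}. If some run had at least $14k+5$ consecutive cliques, we could pick $i$ so that $K_{i-7k},\dots,K_{i+7k+4}$ all lie in the run. We would then compute the minimum $(x,y)$-separator $L$ and find the clique $K_\ell$ with $\ell\in\{i+1,i+2,i+3\}$ avoided by $L$, and the rule would apply, contradicting exhaustiveness. So each run has at most $14k+4$ cliques. Multiplying, $C$ contains at most
\[
(6k+4)|S| + \big((6k+4)|S|+1\big)(14k+4) = \cOO(k^2|S|)
\]
cliques. Summing over the fewer than $3|S|$ components gives $\cOO(k^2|S|^2)$ cliques in $\cKK$, which is the claim.

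The main obstacle is the third step. We must check that when the rule is not applicable, no such window of $14k+5$ unmarked consecutive cliques exists. This requires that a clique $K_\ell$ avoided by $L$ always exists among $K_{i+1},K_{i+2},K_{i+3}$. My first attempt at this is to take $L$ inside a single clique between $x$ and $y$. In a proper interval graph whose cliques have no neighbors in $S$, the set $N(K_j)\cap K_{j+1}$ for any one $j$ already separates $x$ from $y$ within the window, and a minimum separator can be chosen to be of this form. Consequently $L$ meets at most one or two of the three candidate cliques, and at least one candidate is avoided. If this choice is not automatic, I would read the rule as being applicable whenever such a window exists, with the minimum separator chosen appropriately, and state that convention explicitly.
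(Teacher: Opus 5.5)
Your counting argument is the paper's: fewer than $3|S|$ components by \Cref{rul:proper-interval-bounding}, at most $(6k+4)|S|$ cliques per component meeting $N(S)$ by \Cref{rul:pig-high-nbd}, and runs of cliques avoiding $N(S)$ bounded by \Cref{rul:bypass-rule}. That argument is correct. You are also more careful than the paper, which neither checks the no-isolated-vertex hypothesis of \Cref{lem:q-expansion} nor checks that a long run actually makes \Cref{rul:bypass-rule} applicable.

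What needs fixing is your justification that some $K_\ell$ is avoided. The claim that a minimum $(x,y)$-separator can be chosen of the form $N(K_j)\cap K_{j+1}$ is false in general:
\begin{itemize}
\item For $i\le j\le i+4$, the set $N(K_{j+1})\cap K_j$ is also an $(x,y)$-separator.
\item It never contains the first vertex of $K_j$, which by construction has no neighbour in $K_{j+1}$.
\item It can therefore be strictly smaller than every $N(K_{j'})\cap K_{j'+1}$ in the window. For example, if every clique has four vertices and the last vertex of each clique is adjacent to all of the next clique, every such forward set has size $4$, while $N(K_{j+1})\cap K_j$ has size at most $3$.
\end{itemize}

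The correct argument works for every minimum separator, so no convention about how $L$ is chosen is needed:
\begin{enumerate}
\item A minimum $(x,y)$-separator $L$ is inclusion-minimal. Hence it lies inside the component $C$ containing $x$ and $y$, and it is a minimal $(x,y)$-separator of $C$.
\item $C$ is a proper interval graph, hence chordal, so by Dirac's theorem $L$ is a clique.
\item By \Cref{prop:pig-clique-nbd}, every clique of $C$ lies inside $K_j\cup K_{j+1}$ for some $j$.
\item Therefore $L$ meets at most two of $K_{i+1},K_{i+2},K_{i+3}$, so at least one of them is avoided.
\end{enumerate}
Finally, $C$ is connected and $K_\ell$ is an interior clique, so $N(K_\ell)\cap K_{\ell-1}$ and $N(K_\ell)\cap K_{\ell+1}$ are both nonempty and the bypass is well defined.
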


\begin{proof}
  First, we note that \Cref{rul:proper-interval-bounding} is not applicable.
  Therefore, there are $3|S|$ connected components in $G - S$ that are in $G[V_1]$.
  Subsequently, we consider a connected component $C$ of $G[V_1]$. 
  Since \Cref{rul:pig-high-nbd} is not applicable, for every $v\in S$ there are at most $6k+5$ cliques of $C$ intersecting $N(v)$.
  Hence, the number of cliques that intersecting $N(S)$ are at most $3(6k+5)|S|$.
  Hence, the total number of cliques in $\cKK$ that intersect $N(S)$ is at most $3|S|^2 (6k + 5)$.
 	Consider every block of consecutive cliques that are disjoint from $N(S)$ and that appear in a single component.
 	As \Cref{rul:bypass-rule} is not applicable, in a single component, the number of cliques that are disjoint from $N(S)$ is $(14k + 5)\{(6k + 5)|S| + 1\}$.
 	Hence, the total number of cliques in $\cKK$ that are disjoint from $N(S)$ is $3|S|[(14k + 5)\{(6k + 5)|S| + 1\}]$ which is $\cOO(k^2 |S|^2)$.
\end{proof}

\paragraph{Bounding the Size of Every Clique of $\cKK$.}
Now, we focus on the final crucial step that illustrates a marking scheme and provides an upper bound on the number of vertices in every clique $K \in \cKK$ that intersect $N(S)$.
This marking scheme is partially inspired by a marking scheme illustrated by Cao et al. \cite{KE2018109}.

\paragraph*{{\sf Mark-Clique}($K$):} We mark the vertices of $K$ in the following way:
\begin{enumerate}
    \item For every $Z \in \binom{S}{\leq 3}$ and for every function $f: Z \to \{0, 1\}$, a vertex $v \in K$ {\em matches} $(Z, f)$ when for every $z \in Z$, (i) $f(z) = 1$ then $vz \in E(G)$ and (ii) $f(z) = 0$ then  $vz \notin E(G)$.
    Let $\mathcal{V}$ be a proper interval ordering of $G[V_1]$.
    For every $Z \in \binom{S}{\leq 3}$ and for every $f: Z \to \{0, 1\}$, let $P_{Z, f}$ denote the vertices of $K$ that match $(Z, f)$.
    Let $p:=|P_{Z, f}|$, and mark the first $min(k+3, p)$ and the last $min(k+3, p)$ vertices of $P_{Z, f}$ with respect to $\mathcal{V}$.
    \item Let $K_{pv}$ and $K_{nt}$ be the previous and the next clique of $K$ in the clique partition $\mathcal{K}$. For each $x \in S$, and each $y$ of the last $k + 1$ non-neighbors of $x$ in $K_{pv}$\footnote{In case $x \in S$ has less than {$k+1$} non-neighbors in $K_{pv}$, then we consider every non-neighbor of $x$ in $K_{pv}$. Moreover, the same goes for the two subsequent points.}, we mark the last $k + 3$ common neighbors of $x$ and $y$ in $K$; for each $x \in S$, and each $z$ of the first $k + 1$ non-neighbors of $x$ in $K_{nt}$, we mark the first $k + 3$ common neighbors of $x$ and $z$ in $K$.
    Let them be denoted by $K^2_{pv}(x, y)$ and by $K^2_{nt}(x, z)$ respectively.
    \item For each $x\in S$, and each $y$ of the last $k + 1$ neighbors of $x$ in $K_{pv}$, we mark the last $k + 3$ vertices in $K$ that are neighbors of $y$ but not $x$; for each $x \in S$, and each $z$ of the first $k +3$ neighbors of $x$ in $K_{nt}$, we mark the first $k + 3$ vertices in $K$ that are neighbors of $z$ but not $x$. Let them be denoted by $K^3_{pv}(x, y)$ and by $K^3_{nt}(x, z)$ respectively.
    \item For each $x \in S$, and each $y$ of the first $k + 1$ neighbors of $x$ in $K_{pv}$, we mark the first $k + 1$ vertices in $K$ that are neighbors of $x$ but not $y$; for each $x \in S$, and each $z$ of the last $k + 1$ neighbors of $x$ in $K_{nt}$, we mark the last $k + 3$ vertices in $K$ that are neighbors of $x$ but not $z$. Let them be denoted by $K^4_{pv}(x, y)$ and by $K^4_{nt}(x, z)$ respectively.
\end{enumerate}
Consequently, the number of marked vertices in $K$ is at most $\eta(k) = 2(k+3)\left(\sum\limits_{i=1}^{3}2^{i}\binom{|S|}{i}\right) + 6|S|(k+1)(k+3)$, which is $\cOO(k^{19})$.
Since the obstructions in $\mathcal{F}$ of size at most $6$ are already taken care of by \Cref{lem:small-obstruction}, we need to focus on the obstructions of size at least $7$, i.e., holes of length at least $7$ and {\ctpair s}.

\begin{reductionrule}
  \label{rul:mark-clique}
  Let a clique $K$ in the clique partition $\mathcal{K}$ of $G[V_{1}]$.
  If a vertex $v \in K$ remains unmarked after applying $\mathsf{Mark}$-$\mathsf{Clique}(K)$, then the new instance is $(G - v, k)$.
\end{reductionrule}

To prove the safeness of  \Cref{rul:mark-clique}, we will use the following observation and lemmas in which we show that if an unmarked vertex $v$ in $K$ is a part of an obstruction $O$ in $G-X$ for some solution $X$ to $(G, k)$, then there is another obstruction in $G-(X\cup \{v\})$ which does not contain $v$.
Let $R = V(G) \setminus (S \cup K)$.
Moreover, for any obstruction $O$, for the sake of simplicity, we assume $O_S=O\cap S$, $O_K=O\cap K$ and $O_R=O\cap R$.
These notations will be changed accordingly for different obstructions $O$.

\begin{observation}
    \label{obs:hole-or-claw-triangle1}
    Let $C = v - u - u_1 - u_2 - \ldots - u_{\ell} - u' - v$ be a cycle of length at least 6 where $u - u_1 - u_2 - \ldots - u_{\ell} - u'$ is an induced path $P$.
    Then, $G$ has an induced cycle of length at least 4 or $G$ has a claw $J$ and a triangle $T$ such that $J \cap T \neq \emptyset$.
\end{observation}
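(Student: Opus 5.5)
The plan is to look at how $v$ attaches to the induced path $P = u - u_1 - \cdots - u_\ell - u'$. Since $C$ has length $\ell + 3 \geq 6$, we have $\ell \geq 3$, so $P$ has at least $5$ vertices. Let $N_P = N_G(v) \cap V(P)$. Then $u, u' \in N_P$, and the vertices of $N_P$ appear in some order along $P$. The argument splits according to whether two cyclically consecutive members of $N_P$ are far apart on $P$ or not.

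\textbf{First case: some gap in $N_P$ is long.} Suppose two members $a, b \in N_P$ are consecutive along $P$ (no vertex of $N_P$ lies strictly between them) but are not adjacent on $P$. Let $P_{a,b}$ be the subpath of $P$ from $a$ to $b$; it has at least $3$ vertices. Consider the cycle $v - P_{a,b} - v$.
\begin{itemize}
\item It has no chord inside $P_{a,b}$, because $P$ is induced.
\item It has no chord from $v$ to an internal vertex of $P_{a,b}$, by the choice of $a$ and $b$ as consecutive neighbours.
\end{itemize}
So this is an induced cycle on at least $4$ vertices, i.e., a hole. This covers in particular the case where $v$ has no neighbour among $u_1, \dots, u_\ell$: then $C$ itself is an induced cycle of length at least $6$.

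\textbf{Second case: every consecutive pair of $N_P$ is adjacent on $P$.} Since $u$ and $u'$ are the endpoints of $P$ and both lie in $N_P$, this forces $N_P = V(P)$: $v$ is adjacent to every vertex of $P$. Take $J = \{v, u, u_2, u'\}$.
\begin{itemize}
\item $u$ and $u_2$ are at distance $2$ on the induced path $P$, so they are nonadjacent.
\item $u_2$ and $u'$ are at distance $\ell - 1 \geq 2$ on $P$, so they are nonadjacent.
\item $u$ and $u'$ are at distance $\ell + 1 \geq 4$ on $P$, so they are nonadjacent.
\end{itemize}
Hence $J$ is a claw with centre $v$. Moreover $T = \{v, u, u_1\}$ is a triangle, and $J \cap T = \{v, u\} \neq \emptyset$, which gives the required intersecting claw--triangle pair.

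\textbf{Where care is needed.} The only delicate point is the case split: checking that "no long gap" really forces $v$ to be adjacent to all of $P$ (this uses that both endpoints of $P$ are in $N_P$), and checking that $\ell \geq 3$ is exactly what makes $u_2$ nonadjacent to $u'$. Both follow directly from $P$ being induced and from the length bound on $C$.
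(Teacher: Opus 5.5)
Your proof is correct and follows the paper's argument. The paper splits on whether $v$ misses some vertex of $P$, in which case the nearest neighbours of $v$ on either side of a non-neighbour close an induced cycle of length at least $4$ (exactly your ``long gap''), or $v$ is adjacent to all of $P$, which yields a claw centred at $v$ together with the triangle $\{v,u,u_1\}$; the only cosmetic difference is that you take $u'$ as the third leaf where the paper takes the fifth vertex of $P$, and your choice works uniformly even when $\ell = 3$.
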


\begin{figure}[ht]
    \centering
    \includegraphics[scale=0.8]{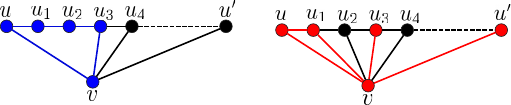}
    \caption{Schematic depiction of \Cref{obs:hole-or-claw-triangle1}}
    \label{fig:cycle-reduction}
\end{figure}

\begin{proof}
  As $P$ is an induced path, there may exist an edge between $v$ and the vertices in $P$.
  We have the following two cases:
  \begin{description}
    \item[Case-(i)] When there exists a vertex $x\in P$ such that $vx\notin E(G)$.
    Let ${\ell}_x$ be the first vertex on $P$ that is adjacent to $x$ if traversed from $x$ towards $u$, and $r_x$ be the first vertex on $P$ that is adjacent to $x$ if traversed from $x$ towards $u'$.
    Clearly, ${\ell}_x$ and $r_x$ are two different vertices.
    Therefore, $v$-${\ell}_x$-$P$-$r_x$-$v$ is an induced cycle of length at least 4.    
    \item[Case-(ii)] When $v$ is adjacent to all the vertices in $P$.
    Given that the length of $C$ is at least 6, there must be at least 5 vertices $u, u_1, u_2, u_3, u_4$ in $P$ in this sequence.
    Observe that out of these 5 vertices, $\{u, u_2, u_4\}$ forms an independent set.
    As $v$ is adjacent to all the vertices in $P$, $J:=\{v, u, u_2, u_4\}$ induces a claw such that $v$ is the center of $J$.
    Additionally, observe that $T:=\{v, u, u_1\}$ induces a triangle.
    Therefore, there exists a claw $J$ and a triangle $T$ such that $J \cap T$ contains $v$ implying that $J \cap T \neq \emptyset$.
  \end{description}
  This completes the proof of the observation.
\end{proof}

\begin{lemma}
    \label{lem:safe-hole-7}
    Let $v\in K$ be an unmarked vertex by the procedure {\sf Mark-Clique}($K$) and $X\subseteq V(G)\setminus \{v\}$ be a set of size of at most $k$.
    If $G-X$ has a hole $H$ of length at least $7$ then $G-(X\cup \{v\})$ also has a hole of length at least $4$ or a claw $J$ and a triangle $T$ such that $J \cap T \neq \emptyset$.
\end{lemma}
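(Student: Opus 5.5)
The plan is to replace $v$ in $H$ by a marked vertex $v' \in K \setminus X$ that has the same adjacency pattern towards the relevant part of $H$. We then apply \Cref{obs:hole-or-claw-triangle1} to the resulting cycle. That observation already yields the weaker conclusion (a hole of length at least $4$, or an intersecting claw and triangle), so an exact replacement is not needed.

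\textbf{Local structure of $H$ at $v$.} Since $v \in K \subseteq V_1$ lies on a hole of length at least $7$, it has exactly two neighbours $u, u'$ on $H$. The path $P = H - v$ is an induced path on at least $6$ vertices. Each of $u, u'$ lies in $S$, or in $K_{pv} \cup K \cup K_{nt}$ by \Cref{prop:pig-clique-nbd}. At most one hole-neighbour of $v$ can lie in $K$, because $K$ is a clique and the hole is chordless. Moreover, $H$ contains at most two vertices of any single clique, consecutive along $H$.

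\textbf{Case analysis on the positions of $u$ and $u'$.}
\begin{enumerate}
\item Both $u, u' \in S$. Let $Z = \{u, u'\}$ and $f \equiv 1$. Then $v \in P_{Z,f}$. Since $v$ is unmarked, $P_{Z,f}$ contains at least $k+3$ marked vertices preceding $v$, so some marked $v' \in P_{Z,f}$ avoids $X \cup V(H)$; here $|X| \le k$ and $|V(H) \cap K| \le 2$. The closed walk $v' - u - P - u' - v'$ is a cycle of length at least $7$ whose path part $P$ is induced. \Cref{obs:hole-or-claw-triangle1} then gives the conclusion in $G - (X \cup \{v\})$.
\item $u \in S$ and $u' \in K_{nt}$ (symmetrically $K_{pv}$). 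Here $v$ is a common neighbour of $u$ and $u'$. If $u'$ is among the first $k+1$ neighbours of $u$ in $K_{nt}$, then item~2 or item~3 of {\sf Mark-Clique} applies. The mismatch in item~2 (it is phrased for non-neighbours of $x$) has to be handled by choosing $x$ and $y$ appropriately; I will take $x$ to be the $S$-neighbour of the hole vertex lying beyond $u'$. Otherwise, I replace $u'$ itself by one of the first $k+1$ neighbours of $u$ in $K_{nt}$ that is not in $X$. The proper-interval ordering ensures the new vertex is still adjacent to $v$ (via \Cref{prop:pig-clique}) and still adjacent to the next vertex of $P$, or else we can shortcut. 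If $u' \in K$ instead, use $P_{Z,f}$ with $Z = \{u\} \cup (\text{the }S\text{-neighbours of }u')$, and rely on the clique property of $K$.
\item $u \in K_{pv}$ and $u' \in K_{nt}$ (or one of them in $K$). We use the ordering: any $v' \in K \setminus X$ lying between $v$ and the vertex of $K$ adjacent to both is adjacent to $u$ and $u'$, by \Cref{prop:pig-clique}. Among the marked vertices, the first or last $k+3$ of $P_{\emptyset,\emptyset}$ (which is all of $K$), one such vertex avoids $X$. Since $v$ is unmarked, there are $k+3$ marked vertices on each side of it, and adjacency to $u \in K_{pv}$ is monotone towards the front of $K$. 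The path part remains induced after the replacement.
\end{enumerate}

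In every case the resulting cycle avoids $X \cup \{v\}$ and has length at least $7$ (at least $6$ suffices). Its path portion is a subpath of $H$ possibly with one endpoint swapped, and we check that the subpath is still induced. If the swap created chords, we pass to a shorter induced subcycle, whose length is still at least $4$. \Cref{obs:hole-or-claw-triangle1} then finishes the argument.

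\textbf{Main obstacle.} The hard part is the mixed case 2. There the replacement vertex must simultaneously be adjacent to the $S$-neighbour and to the correct neighbour in $K_{nt}$, without acquiring extra adjacencies to $P$ that collapse the cycle below length $6$. I expect to choose the witness among the $k+3$ marked vertices nearest to $v$ in the ordering, and use monotonicity of neighbourhoods in a proper interval ordering to control adjacency to the remaining vertices of $P$ inside $V_1$. Adjacencies of the new vertex to vertices of $P$ in $S$ are harmless, because \Cref{obs:hole-or-claw-triangle1} tolerates chords from the replacement vertex.
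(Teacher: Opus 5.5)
Your basic strategy of replacing $v$ by a marked vertex of $K\setminus X$ and applying \Cref{obs:hole-or-claw-triangle1} is the paper's, and your case~1 is exactly its argument. The genuine gap is your case~3, where $u\in K_{pv}$ and $u'\in K_{nt}$. By \Cref{prop:pig-clique}, vertices of $K$ before $v$ are adjacent to $u$, and vertices after $v$ are adjacent to $u'$. But the common neighbours of $u$ and $u'$ in $K$ form an interval around $v$ that may be $\{v\}$ alone: take $K=\{a_1,\dots,a_m\}$, $N(u)\cap K=\{a_1,\dots,a_j\}$, $N(u')\cap K=\{a_j,\dots,a_m\}$, $v=a_j$. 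Moreover, no item of {\sf Mark-Clique} marks common neighbours of a pair from $K_{pv}\times K_{nt}$. So a single swap need not exist, and drawing the witness from $P_{\emptyset,\emptyset}$ gives you no usable control.

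The paper uses two witnesses. Take $x\in H_S$, which is nonempty since $G-S$ is hole-free, so $xv\notin E(G)$. Set $g(x)=0$ and pick marked $v'\uv v\uv v''$ in $P_{\{x\},g}\setminus X$. Then $v'u,v''u'\in E(G)$, and if $v'u'$ or $v''u$ is an edge, one swap plus \Cref{obs:hole-or-claw-triangle1} finishes. Otherwise, in $P=H-v$:
\begin{itemize}
\item let $a$ be the first neighbour of $v'$ met walking from $x$ towards $u$, and $b$ the first neighbour of $v''$ met walking from $x$ towards $u'$;
\item if $v'$ has a neighbour on the $x$-to-$u'$ side (or $v''$ on the $x$-to-$u$ side), the first such vertex together with $a$ (resp.\ $b$) closes an induced cycle through $x$ of length at least $4$;
\item otherwise $v'$-$a$-$P$-$x$-$P$-$b$-$v''$-$v'$ is an induced cycle of length at least $5$.
\end{itemize}
The idea missing from your plan is that both witnesses are non-adjacent to an $S$-vertex of $H$, which $g(x)=0$ enforces.

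Your case~2 is left unresolved, and the tools you reach for are unnecessary. First, $u'$ is never a neighbour of $u$, since they are at distance two on a hole. Second, swapping $u'$ for another vertex of $K_{nt}$ can break the inducedness of $P$. It suffices to use item~1 with $Z=\{u\}$, $f(u)=1$, and take the witness $w'$ among the last $k+3$ vertices of $P_{Z,f}$, i.e.\ on the side of $v$ facing $u'$. Then $v\uv w'\uv u'$ together with $vu'\in E(G)$ gives $w'u'\in E(G)$ by \Cref{prop:pig-clique}. If instead $u'\in K$, any marked $w'\in P_{Z,f}\setminus X$ works, since $u'$ itself does not match.

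Your concern about the witness acquiring extra neighbours on $P$ is moot. \Cref{obs:hole-or-claw-triangle1} allows arbitrary chords from the apex vertex, into $S$ or into $V_1$. The cycle $(H\setminus\{v\})\cup\{w'\}$ has exactly $|H|\ge 7$ vertices.
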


\begin{proof}
    For the sake of contradiction, let us assume that $G-(X\cup \{v\})$ does not have any hole of length at least $4$ and $H$ is a hole of length at least $7$ in $G-X$.
    It implies that $v\in H$, $|H_S|\ge 1$ and $|H_K| \le 2$.
    Now, we have the following three mutually exclusive cases based on the structure of $H$ in $G-X$:
    \begin{description}
        \item[Case-(i)] When $|N(v)\cap H_S|=0$.
        Let $N(v)\cap H = \{u, w\}$ and $u'$, $w'$ be the neighbors of $u$ and $w$ in $H\setminus \{v\}$ respectively.
        WLOG, assume that $u ~\uv ~v ~\uv ~w$.
        Let us pick an arbitrary vertex $x$ from $H_S$.
        Define a function $g:\{x\}\to \{0, 1\}$ such that $g(x)=0$.
        Clearly, $v\in P_{\{x\}, g}$.
        Now, we have three situations: (a) $u\in H_R$ and $w\in H_K$; (b) $u\in H_K$ and $w\in H_R$; and (c) $u\in H_R$ and $w\in H_R$.
        The argument is symmetric for situations (a) and (b).
        We give the argument for situations (a) and (c).

        In situation (a), since $v$ is unmarked and $v \in P_{\{x\}, g}$, $|P_{\{x\}, g}| \geq k + 6$.
        Hence, there is at least one marked vertex $v'$ in $K\setminus X$ that matches $(\{x\}, g)$ such that $v' ~\uv ~v$.
        As $v' ~\uv ~v$ and $u ~\uv ~v$, it follows that $v'$ is adjacent to $u$.
        Additionally, $v' \in K$ implies that $v$ is adjacent to $w$.
        Now, we consider the path  $P$ that we obtain by removing $v$ from $P$.
        This is the path that starts from $u$, proceeds along the edges of $H$ and ends at $w$.
        Observe that $P$ is an induced path in $G-(X\cup \{v\})$ and $P \cup \{v'\} (= (H \setminus \{v\}) \cup \{v'\}$ is a cycle of length 7 in $G - (X \cup \{v\})$.
        Then, $(H \setminus \{v\}) \cup \{v'\}$ is a cycle of length at least 7 and $P$ is an induced path in $G-(X\cup \{v\})$.
        Therefore, due to \Cref{obs:hole-or-claw-triangle1}, we have an induced cycle of length at least $4$, or we have a claw $J$ and a triangle $T$ such that $J \cap T \neq \emptyset$ in $G-(X\cup \{v\})$.

        In situation (c), since $v$ is unmarked, there are at least two marked vertices $v'$ and $v''$ in $(K\setminus X)$ that match $(\{x\}, g)$ such that $v' ~\uv ~v$ and $v ~\uv ~v''$.
        Note that $P$, the path starting from $u$ and ending at $w$ containing the edges of $H$ is an induced path in $G-(X\cup \{v\})$.
        With choice of $v'$ and $v''$, it is easy to see that $uv', v''w\in E(G-(X\cup \{v\}))$ by \Cref{prop:pig-clique}.
        If $v'$ is adjacent to $w$ or $v''$ is adjacent to $u$, then $H'=(H\setminus \{v\})\cup \{v'\}$ or $H'=(H\setminus \{v\})\cup \{v'\}$ is a cycle with length at least 7 and $P$ is an induced path in $G-(X\cup \{v\})$.
        Therefore, we have an induced cycle of length at least $4$ or a claw $J$ and a triangle $T$ such that $J \cap T \neq \emptyset$ in $G-(X\cup \{v\})$ by \Cref{obs:hole-or-claw-triangle1}.
        
        Now, consider the case when $v'w, v''u \notin E(G-(X\cup \{v\}))$.
        Then $H'= (H\setminus \{v\})\cup \{v', v''\}$ is a cycle of length at least 8 and $P$ is an induced path of length at least 5 in $G-(X\cup \{v\})$.
        Observe that there is $x \in H_S$ and by choice of $v', v''$, it follows that $x$ is neither adjacent to $v'$ nor adjacent to $v''$.
        Consider the first vertex $x_{\ell}^1$ that is adjacent to $v'$ and appears in the path $x$-$P$-$u$ and the first vertex $x_r^1$ that is adjacent to $v'$ and appears in the path $x$-$P$-$w$.
        Note that $x_{\ell}^1$ definitely exists while $x_r^1$ may not exist.
        Similarly, consider the first vertex $x_{\ell}^2$ that is adjacent to $v''$ and appears in the path $x$-$P$-$u$ and the first vertex $x_r^2$ that is adjacent to $v''$ and appears in the path $x$-$P$-$w$.
        Note that $x_r^2$ definitely exists but $x_{\ell}^2$ may not exist.
        
        If both $x_{\ell}^1$ and $x_r^1$ exist, then $v'$-$x_{\ell}^1$-$P$-$x$-$P$-$x_r^1$-$v'$ forms an induced cycle of length at least 4 in $G - (X \cup \{v\})$.
        Similarly, if both $x_{\ell}^2$ and $x_r^2$ exist, then $v'$-$x_{\ell}^2$-$P$-$x$-$P$-$x_r^2$-$v$ forms an induced cycle of length at least 4 in $G - (X \cup \{v\})$.
        This takes care of the scenario when one of $x_r^1$ and $x_{\ell}^2$ exists.
        
        In case neither $x_{r}^1$ and $x_{\ell}^2$ do not exist, then let us observe the following.
        Note that $v'$ is not adjacent to any vertex appearing in the path $x$-$P$-$w$ and $v''$ is not adjacent to any vertex appearing in the path $x$-$P$-$u$.
        Then, consider $v'$-$x_{\ell}^1$-$P$-$x$-$P$-$x_r^2$-$v''$.
        Note that this subgraph is an induced cycle with at least 5 vertices.
        Hence, there is an induced cycle of length at least 5 in $G - (X \cup \{v\})$.
        \item[Case-(ii)] When $|N(v)\cap H_S|=1$.
        Consider $N(v)\cap H_S=\{x\}$.
        Let us define a function $g:\{x\}\to \{0, 1\}$ such that $g(x)=1$.
        In addition, let $u$ be the other neighbor of $v$ in $H$.
        Assume that $v \uv u$ (for $u\uv v$, the argument is symmetric).
        Consider the path $P$ from $x$ to $u$ that follows the edges of $H \setminus \{v\}$.
        Note that $x$-$P$-$u$ is an induced path in $G-(X\cup \{v\})$ containing at least 6 vertices.
        Now, we have two situations: either $u \in H_K$ or $u \in H_R$.
        Since $v$ is unmarked, there are at least two marked vertices $v', w'$ in $K\setminus X$ that matches $(\{x\}, g)$ such that $v' \uv v$ and $v \uv w'$. 
        If $u \in H_K$, then we consider the vertex $v'$ such that $v' \uv v$.
        Then, $P \cup \{v'\}$ is a cycle with 7 vertices and $P$ is an induced path with at least 6 vertices.
        Hence, due to \Cref{obs:hole-or-claw-triangle1}, then $G - (X \cup \{v\})$ has an induced cycle of length at least 4 or a claw $J$ and a triangle $T$ such that $J \cap T \neq \emptyset$.
        	
        Similarly, if $u \in H_R$, then we consider the $w'$ such that $v \uv w'$.
        Observe that $w'$ is adjacent to $x$ and $u$.
        Then, $P \cup \{w'\}$ is a cycle of length 7.
        Hence, due to \Cref{obs:hole-or-claw-triangle1}, $G - (X \cup \{v\})$ has an induced cycle of length at least 4 or a claw $J$ and a triangle $T$ such that $J \cap T \neq \emptyset$.
        \item[Case-(iii)] When $|N(v)\cap H_S|=2$.
        Say $N(v)\cap H_S=\{x, y\}$.
        Let us define a function $g:\{x, y\}\to \{0, 1\}$ such that $g(x)=1$ and $g(y)=1$.
        Clearly, $v\in P_{\{x,y\}, g}$.
        Since $v$ is unmarked, there are at least $3$ marked vertices in $K\setminus X$ that match $(\{x, y\}, g)$.
        We pick one vertex $v'$ from $K\setminus X$ that is marked and matches $(\{x, y\}, g)$.
       	Observe that $H \setminus \{v\}$ is an induced path $P$ with at least 6 vertices.
        Now, we consider the cycle $P \cup \{v'\}$.
        Observe that $P \cup \{v'\}$ forms a cycle of length at least 7.
        Then, due to \Cref{obs:hole-or-claw-triangle1}, $G - (X \cup \{v\})$ has an induced cycle of length at least $4$ or a claw $J$ and a triangle $T$ such that $J \cap T \neq \emptyset$.
    \end{description}
    This completes the proof of the lemma.
\end{proof}

\begin{lemma}
    \label{lem:safe-ctpair}
    Let $v\in K$ be an unmarked vertex by the procedure {\sf Mark-Clique}($K$) and $X\subseteq V(G)\setminus \{v\}$ be a set of size of at most $k$.
    If $G-X$ has connected component containing a claw $J$ and a triangle $T$, then $G-(X\cup \{v\})$ has a connected component that has a claw and a triangle.
\end{lemma}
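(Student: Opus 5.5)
Our plan is a replacement argument in the style of \Cref{lem:safe-hole-7}. Fix a component $D$ of $G-X$ containing a claw $J$ and a triangle $T$, and a shortest path $Q$ connecting them in $D$; choose the triple $(J,T,Q)$ minimizing $|V(J)\cup V(T)\cup V(Q)|$. If $v\notin J\cup T\cup Q$, the same structure survives in $G-(X\cup\{v\})$ and we are done. So assume $v$ lies in $J$, in $T$, or is an internal vertex of $Q$. Throughout we assume $G-(X\cup\{v\})$ has no hole (otherwise the statement is weaker than what we can get; more precisely, we will phrase the conclusion as ``a hole or a claw and triangle in one component'', matching how \Cref{lem:safe-hole-7} is used, and then show a hole cannot coexist by the small-obstruction preprocessing of \Cref{lem:small-obstruction} only when it is short). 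In each case we replace $v$ by a marked vertex $v'\in K\setminus X$ with the same adjacency pattern to the relevant $S$-vertices. This is possible since $|X|\le k$ and each marked class keeps $k+3$ vertices from each end.

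\textbf{Case $v\in T$.} Triangles inside $V_1$ are plentiful. If $T\subseteq V_1\cup\{v\}$ with $T\setminus\{v\}\subseteq K\cup K_{pv}\cup K_{nt}$, we replace $v$ by a neighbor in $K$. Specifically, $|K|$ is large because $v$ is unmarked, so $K\setminus(X\cup\{v\})$ contains at least two vertices, and together with a neighbor they form a triangle in the same component. If instead $T$ meets $S$, let $Z=T\cap S$ with $f\equiv 1$. Item~1 of {\sf Mark-Clique} provides a marked $v'\in K\setminus X$ matching $(Z,f)$ and lying on the correct side of $v$ in $\mathcal{V}$. By \Cref{prop:pig-clique}, $v'$ is adjacent to the $V_1$-vertex of $T$ as well, so $(T\setminus\{v\})\cup\{v'\}$ is a triangle. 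It remains connected to $J$ through $v'$'s neighbors, namely $K$ and $Z$.

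\textbf{Case $v\in J$.} Let $Z=J\cap S$, so $|Z|\le 3$. Define $f$ by adjacency to $v$. Pick a marked $v'$ matching $(Z,f)$ among the first (or last) $k+3$, chosen to lie between $v$ and the $V_1$-neighbors of $v$ in $J$ in the order $\mathcal{V}$, so that \Cref{prop:pig-clique} preserves adjacencies to the $V_1$-vertices of $J$. If $v$ is a leaf of $J$, non-adjacency to the other two leaves is governed either by $f$ (leaves in $S$) or by the clique structure. When a leaf lies in $K_{pv}$ or $K_{nt}$, we instead use items~2--4 of {\sf Mark-Clique}: the classes $K^2,K^3,K^4$ exactly supply vertices of $K$ that are common neighbours of $x\in S$ and $y\in K_{pv}$, neighbours of $y$ but not $x$, and neighbours of $x$ but not $y$. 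Any residual new adjacency created by $v'$ yields either a triangle through the claw, giving a pairwise-intersecting \ctpair, or, via \Cref{obs:hole-or-claw-triangle1}, a hole.

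\textbf{Case $v$ internal to $Q$.} Here we only need connectivity. Either the two neighbours of $v$ on $Q$ are adjacent to another vertex of $K\setminus X$ (by \Cref{prop:pig-clique} and a marked vertex on the appropriate side), or one of them is in $S$, and a vertex matching $(\{s\},1)$ reroutes $Q$. The main obstacle is the claw case with leaves in neighbouring cliques, where the precise orientation (``first'' versus ``last'' $k+1$ non-neighbours) in items~2--4 must be matched with the position of $v$. I would handle it first by taking $y$ as the extreme leaf in $K_{pv}$ and arguing that the marked set for the extreme $y$ among the last $k+1$ still contains an $X$-free vertex.
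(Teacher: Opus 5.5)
Your outline has a genuine gap, and it is in the step that makes this lemma hard. Your rule ``pick a marked $v'$ matching $(Z,f)$ lying between $v$ and the $V_1$-neighbours of $v$ in $J$'' cannot be satisfied when the relevant $V_1$-vertices of $J$ lie on both sides of $v$. Three configurations fail:
\begin{itemize}
\item $v$ is the centre, with leaves $x\in S$, $u\in K_{pv}$, $w\in K_{nt}$. By \Cref{prop:pig-clique}, a vertex of $K$ preceding $v$ is forced to be adjacent to $u$ but may miss $w$, and one following $v$ is forced to see $w$ but may miss $u$.
\item $v$ is a leaf of a claw with centre $x\in S$ and other leaves $y\in K_{pv}$, $z\in K_{nt}$. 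A vertex before $v$ is guaranteed to miss $z$ but may see $y$, and one after $v$ is the reverse.
\item The centre is $u\in K_{nt}$ and a second leaf is $y\in K_{nt}$. Item~1 cannot produce a vertex of $K$ adjacent to $u$ but not to $y$.
\end{itemize}
These are exactly the situations items 2--4 of {\sf Mark-Clique} were designed for. You explicitly defer them (``I would handle it first by \ldots''), so the core of the proof is missing. Your sentence ``any residual new adjacency \ldots\ yields a triangle through the claw or, via \Cref{obs:hole-or-claw-triangle1}, a hole'' does not settle any of them.

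Here is what the centre case requires. If $u$ is not among the last $k+1$ non-neighbours of $x$ in $K_{pv}$, replace it by an $X$-free one $y$ among them. Then $u\uv y\uv v$ gives $yv\in E(G)$, and $y$ is non-adjacent to $x$ and to $w$. Now $v$ is an unmarked common neighbour of $x$ and $y$, so the last $k+3$ such common neighbours all follow $v$, and at least three of them avoid $X$; call one of them $v_1'$. Since $v\uv v_1'\uv w$ and $vw\in E(G)$, $v_1'$ is adjacent to $w$, so $\{v_1',x,y,w\}$ is a claw. The leaf cases need items 3 and 4 in a similar way. There you replace a leaf, and when the centre lies in $K_{nt}$ possibly the centre itself, by an $X$-free vertex among the first or last $k+1$ neighbours of the $S$-vertex in $K_{pv}$ or $K_{nt}$.

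There are two further problems.

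First, the lemma asks for a component containing a claw and a triangle. A branch that ends in a hole does not prove it, and nothing lets you assume $G-(X\cup\{v\})$ is hole-free. \Cref{lem:small-obstruction} is irrelevant here, because $X$ is an arbitrary set of size at most $k$. You never need holes: every case can be closed with a claw.

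Second, tracking $T$ and $Q$ is both unnecessary and incomplete. In the case $v\in J$ you never show that the new claw $J'$ still shares a component with a triangle. This breaks if $Q$ leaves $J$ at $v$, or if $v\in J\cap T$ and your two replacements for $v$ differ. (Also, in the $Q$-case two consecutive vertices $v_a\uv v\uv v_b$ of $K$ may be needed, rather than one common neighbour.)

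The fix, which is the paper's approach, uses an ingredient you already noted in the $T$-case. Item~1 with $Z=\emptyset$ marks the first and last $k+3$ vertices of $K$, and $v$ is not among them. So $K\setminus(X\cup\{v\})$ contains two vertices $v_2',v_3'$ besides $v_1'$, and $\{v_1',v_2',v_3'\}$ is a triangle sharing $v_1'$ with $J'$. Hence you only need a claw through $v_1'$, or, when $v\notin J$, a path from $J$ to $K\setminus(X\cup\{v\})$ avoiding $v$.
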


\begin{proof}
  For the sake of contradiction, let us assume that $G - X$ has a connected component containing a claw $J$ and a triangle $T$, but $G-(X\cup \{v\})$ is a {\pitgraph}.
  Then, $v$ must be part of the component $D$ that contains the claw $J$ or the triangle $T$.
  In addition, $v \in K$ but $v$ is unmarked as per {\sf Mark-Clique}($K$).
  Then, $K$ contains at least $k+3$ marked vertices excluding $v$ as per the procedure {\sf Mark-Clique}($K$).
  Hence, in $G - (X \cup \{v\})$, there is a component that contains these 3 vertices of $K$ which form a triangle.
  Additionally, in $G - X$, the component $D$ that contains $v$ also contains these 3 vertices of $K$ that are not deleted by $X$.
  Hence, $v$ is part of a $K_4$, implying that $v$ is part of a triangle in the component $D$ of $G - X$ where $v$ is part of.
  We use this crucial observation.
  Now, our situation is divided into the following cases, each concerning if $v$ is part of a claw in $D$ or $v$ is not part of a claw in $D$.
  
  If $v$ is a part of the claw $J$ in $D$, there must exist at least one vertex $x$ in $J \setminus \{v\}$ such that $x\in S$ otherwise, $J$ must be completely contained in $G[V_1]$ and $G[V_1]$ is a collection of {\pig s}.
  Let $J_S = J \cap S$, $J_R = J \cap R$, and $J_K = J \cap K$.
  We have the following three mutually exclusive cases based on the structure of ${J}$ in $G-X$:
  \begin{description}
    \item[Case-(i)] When $v$ is the center of the claw $J$ in $D$.
    Now, we have the following three situations:
    \begin{description}
        \item[Subcase-(a)] When $|J_S|=1$, say $J_S = \{x\}$.
        Clearly, $x$ is a pendant vertex of the claw $J$ in $D$.
        It means that the other pendant vertices of the claw $J$ are in $K \cup R$; say $u$ and $w$ be are the other pendant vertices of $J$ such that $u \uv v \uv w$.
        Now, we have following three situations: (a-i) $u \in {J}_R$ and $w \in {J}_K$; (a-ii) $u \in J_K$ and $w\in {J}_R$; and (a-ii) $u, w \in {J}_R$.
        The argument is symmetric for situations (a-i) and (a-ii).
        We give the arguments for situations (a-i) and (a-iii).

        In situation (a-i), define a function $g(x) = 1$.
        Clearly, $v\in P_{\{x\}, g}$ and $v$ is unmarked.
        Since $v$ is unmarked, there is are at least 3 marked vertices $v_1', v_2', v_3'$ in $K \setminus X$ that matches $(\{x\}, g)$ such that $v_1', v_2', v_3' \uv v$.
        Observe that each of $v_1', v_2', v_3'$ is adjacent to $w$ since $w \in K$ and each of $v_1', v_2', v_3'$ is adjacent to $u$ since these vertices appear between $u$ and $v$ in $\mathcal{V}$.
        Now, we consider $\{v_1', x, u, w\}$.
        Observe that $\{u, x, w\}$ forms an independent set.
        Hence, $\{v_1', x, u, w\}$ forms a claw in $G - (X \cup \{v\})$.
        Subsequently, there is clearly  a triangle $\{v_1', v_2', v_3'\}$ in $G - (X \cup \{v\})$.
        Therefore, a connected component of $G - (X \cup \{v\})$ has a claw and a triangle. 

        In situation (a-iii), consider the pair $x$ and $u$.
        Note that $u \in K_{pv}$ is a nonneighbor of $x$ in $K_{pv}$.
        If $u$ is one of the last $k+1$ nonneighbors of $x$ in $K_{pv}$, then $v$ is unmarked vertex among the common neighbors of $x$ and $u$ in $K$.
        Then, there are $k+3$ common neighbors of $x$ and $u$ in $K$ that are marked, and they appear after $v$ in the proper interval ordering $\mathcal{V}$.
        Hence, there are 3 vertices $v_1', v_2', v_3 \in K \setminus X$ such that $v \uv v_1', v_2', v_3'$.
        In such a case, $v_1'$ is adjacent to $w$ since $v \uv w$.
        Therefore, $\{v_1', x, w, u\}$ induces a claw.
        In addition, $v_1', v_2', v_3'$ forms a triangle.
        Hence, $G - (X \cup \{v\})$ has a connected component that contains a claw and a triangle.
        	
        If $u$ is not one of the last $k+1$ nonneighbors of $x$, then there is a nonneighbor $y \in K_{pv}$ such that $u \uv y$.
        Then, $y$ is adjacent to $v$ as well.
        Then, $x$ and $y$ are nonneighbors and $v$ is a common neighbor of $x$ and $y$ in $K$ that is unmarked.
        By similar argument as before, there are 3 vertices $v_1', v_2', v_3 \in K \setminus X$ such that $v \uv v_1', v_2', v_3'$ and $v_1', v_2', v_3'$ are common neighbors of $x$ and $y$.
        In such a case, $v_1'$ is adjacent to $w$ since $v \uv w$ and $w$ is not adjacent to $y$ since $w \notin K$.
        Therefore, $\{v_1', x, w, y\}$ induces a claw.
        In addition, $v_1', v_2', v_3'$ forms a triangle.
        Hence, $G - (X \cup \{v\})$ has a claw and a triangle that are pairwise intersecting.
        \item[Subcase-(b)] When $|J_S| = 2$; say $J_S=\{x, y\}$.
        Clearly, $x$ and $y$ are pendant vertices of the claw $J$ in the component $D$.
        It means that the other pendant vertex of the claw $J$ is in $K \cup R$, say $u$ be that pendant vertex.
        It is possible that $u \uv v$ or $v \uv u$.
        Without loss of generality, assume that $u \uv v$ (if $v \uv u$, the argument is symmetric).
        We define a function $g:\{x, y\}\to \{0, 1\}$ as $g(x) = g(y) = 1$.
        Clearly, $v\in P_{\{x, y\}, g}$ but $v$ is unmarked.
        Then, there are $k+3$ vertices in $P_{\{x, y\}, g} \cap K$ that are marked and appear before $v$ and after $v$ as per the ordering $\mathcal{V}$.
        In particular, there are 3 vertices $v_1', v_2', v_3'$ in $P_{\{x, y\}, g}$ that are marked, are present in $K \setminus X$, such that $v_1', v_2', v_3' \uv v$.
                	
        Now, we have following two situations: (a) $u \in {J}_K$ and (b) $u \in {J}_R$.
        In both the cases, $u \uv v$.
        We consider the vertices $v_1', v_2', v_3'$ in $P_{\{x, y\}, g}$ that are marked, are present in $K \setminus X$, such that $v_1', v_2', v_3' \uv v$.
        Now, $J':= (J\setminus \{v\})\cup \{v_1'\}$ induces a claw in $G-(X\cup \{v\})$.
        Additionally, $\{v_1', v_2', v_3'\}$ forms a triangle in $G - (X \cup \{v\})$.
        Hence, there is a connected component containing a claw and a triangle in $G - (X \cup \{v\})$.
        \item[Subcase-(c)] When $|{J}_S|=3$, say $J_S = \{x, y, z\}$.
        Define a function $g:\{x, y, z\}\to \{0, 1\}$ such that $g(x) = g(y) = g(z) = 1$.
        Clearly, $v \in P_{\{x, y, z\}, g}$.
        Since $v$ is unmarked, there are at least 3 marked vertex $v_1', v_2', v_3'$ in $K \setminus X$ that matches $(\{x, y, z\}, g)$ such that $v \uv v_1', v_2', v_3'$.
        Now, $J':= (J\setminus \{v\})\cup \{v_1'\}$ and $v_1'$ is adjacent to each of $x, y$ and $z$.
        Observe that $J'$ induces a claw in $G-(X\cup \{v\})$ and $\{v_1', v_2', v_3'\}$ is a triangle in $G - (X \cup \{v\})$.
        Hence, $G - (X \cup \{v\})$ has a connected component that has a claw and a triangle.
    \end{description}
    \item[Case-(ii)]\label{ct-pendant} When $v$ is a pendant vertex of the claw $J$ in $D$.
    Now, we have the following three situations:
    \begin{description}
        \item[Subcase-(a)] When the center of the claw $J$ is in $S$.
        Say the center of the claw $J$ is $x$, and $y, z$ be the other pendant vertices of $J$.
        Now, we have following three situations:
        \begin{description}
            \item[Sub-subcase-(a-i)] When $y, z \in {J}_R$ and suppose that $K_{nt}$ and $K_{pv}$ appear in the same connected component of $G - S$.
         
        	Consider the situation when neither $y, z \notin K_{pv} \cup K_{nt}$.
           	Then, we consider the function $g: \{x\} \rightarrow \{0, 1\}$ such that $g(x) = 1$.
           	Observe that $v$ is unmarked but $v$ matches $(\{x\}, g)$.
           	Then, there are $k+3$ marked vertices that match $(\{x\}, g)$.
           	Hence, there is $v_1', v_2', v_3' \in K \setminus X$ that are adjacent to $x$.
           	Then, $\{x, y, z, v_1'\}$ forms a claw and $\{v_1', v_2', v_3'\}$ forms a triangle.
           	Therefore, a claw and a triangle is in a connected component of $G - (X \cup \{v\})$. 
            	
            Now, we consider the situation when $y \in K_{pv}$ and $z \in K_{nt}$.
            The arguments for the situations: $y \in K_{pv}$ but $z \notin K_{nt}$ and $y \notin K_{pv}$ and $z \in K_{nt}$ will be easier.
            Since $v$ is unmarked, there must exist at one vertex $y' \in K_{pv}$ and at least one vertex $z'\in K_{nt}$ such that $y'$ and $z'$ are adjacent to $x$ in $G$.
            It is possible that $y=y'$ and $z=z'$.
            If $y \neq y'$, then $y'$ is one of the first $k+1$ neighbors of $x$ in $K_{pv}$ and $y' \uv y$.
            Similarly, if $z \neq z'$, then $z'$ is one of the last $k+1$ neighbors of $x$ in $K_{nt}$ and $z \uv z'$.
            As $|X| \leq k$, both $z'$ and $y'$ are in $G - (X \cup \{v\})$.
            Then, $v \in K$ is a vertex that is adjacent to $x$ but neither adjacent to $y$ nor adjacent to $z$.
            Then, $v$ is neither adjacent to $y$ nor adjacent to $z$, $v$ is neither adjacent to $y'$ nor adjacent to $z'$.
            But, at least $k+3$ vertices of $K$ are marked that are adjacent to $x$ but not adjacent to $y'$ but appear before $v$ in $\mathcal{V}$.
            Hence, in $K \setminus X$, there are 3 vertices $v_1', v_2', v_3'\}$ that are adjacent to $x$ but not adjacent to $y'$ and $v_1', v_2', v_3' \uv v$.
            Additionally, $v_1' \uv v \uv z \uv z'$.
       		Hence, $v_1'$ is not adjacent to $z'$ either.
            	
            Then, $J':= \{x, y', v_1', z'\}$ induces a claw in $G-(X\cup \{v\})$ such that $x$ is the center.
            Additionally, $\{v_1', v_2', v_3'\}$ forms a triangle in $K \setminus X$.
            Therefore, a $G - (X \cup \{v\})$ contains a claw and a triangle that pairwise intersect with each other.
            	
            As we have mentioned, for the other two situations: $y \in K_{pv}$ and $z \notin K_{nt}$ as well as $y \notin K_{pv}$ and $z \in  K_{nt}$; we can prove using similar arguments.
            \item[Sub-subcase-(a-ii)] When $y\in {J}_R$ and $z\in {J}_S$.
            WLOG, assume that $v \uv y$ (the argument for $y \uv v$ is similar).
            
            Define a function $g:\{x, z\}\to \{0, 1\}$ such that $g(x)=1$ and $g(z)=0$.
            Clearly, $v\in P_{\{x, z\}, g}$.
            Since $v$ is unmarked and $|X| \leq 3$, there are at least 3 marked vertices $v_1', v_2', v_3'$ in $K\setminus X$ that match $(\{x, z\}, g)$ such that $v_1', v_2', v_3' \uv v$.
            We consider $J':= (J\setminus \{v\})\cup \{v_1'\}$ and $\{v_1', v_2', v_3'\}$.
            Note that $v_1' \uv v \uv y$, hence $v_1'$ is not adjacent to $y$.
            Additionally, $v_1'$ is adjacent to $x$ but not adjacent to $z$.
            Hence, $J' = (J \setminus \{v\}) \cup \{v_1'\}$ induces a claw in $G-(X\cup \{v\})$.
            Additionally, $\{v_1', v_2', v_3'\}$ forms a triangle appearing in $K \setminus X$.
            Therefore, $G - (X \cup \{v\})$ contains a claw and a triangle that are pairwise intersecting.
            Hence, $G - (X \cup \{v\})$ contains a component that contains both a claw and a triangle.
            \item[Sub-subcase-(aiii)] When $y, z\in {J}_S$, each of $x, y, z \in J_S$.
            Define a function $g:\{x, y, z\}\to \{0, 1\}$ such that $g(x)=1$ and $g(y)=g(z)=0$.
            Clearly, $v\in P_{\{x, y, z\}, g}$.
            Since $v$ is unmarked and $|X| \leq k$, there are at least o3 marked vertices $v_1', v_2', v_3'$ in $K \setminus X$ that match $(\{x, y, z\}, g)$ such that $v \uv v_1', v_2', v_3'$.
            Consider $J' = (J \setminus \{v\}) \cup \{v_1'\}$.
            Observe that $v_1'$ is adjacent to $x$ but not adjacent to $y$ and $z$.
            Hence, $J'$ forms a claw.
            Additionally, as $v_1', v_2', v_3' \in K$, hence $\{v_1', v_2', v_3'\}$ is a triangle.
            Then, $J'$ and $\{v_1', v_2', v_3'\}$ are pairwise intersecting.
            Hence, $G - (X \cup \{v\})$ has a claw and a triangle that appear in the same connected component since they are pairwise intersecting.
        \end{description}
        \item[Subcase-(b)]\label{ct-ct-k} When the center of the claw $J$ is in $K$ but $v$ is a pendant vertex of $J$.
        Say the center of the claw $J$ is $u$, and $x, y$ be the other pendant vertices of the claw $J$.
		Without loss of generality assume that $x \in J_S$.
		Then, $y \in J_S$ or $y \in J_R$.
        WLOG, assume that $v \uv u \uv y$ if $y \in {J}_R$ (the case of $y \uv u \uv v$ is symmetric).
        As the claw $J$ must intersect with $S$ else it must be completely contained in $G[V_1]$ and $G[V_1]$ is a collection of {\pig s}, we have the following two situations:
        \begin{description}
            \item[Sub-subcase-(b-i)] When $x, y\in {J}_S$.
            Define a function $g: \{x, y\}\to \{0, 1\}$ such that $g(x) = g(y) = 0$.
            Clearly, $v\in P_{\{x, y\}, g}$.
            Since $v$ is unmarked and $|X| \leq k$, there are at least 3 marked vertices $v_1', v_2', v_3'$ in $K\setminus X$ that match $(\{x, y\}, g)$ such that $v_1', v_2', v_3' \uv v$.
            Now, consider $J':= (J\setminus \{v\})\cup \{v_1'\}$ and $\{v_1', v_2', v_3'\}$.
            Note that $J' = (J \setminus \{v\}) \cup \{v_1'\}$ induces a claw in $G-(X\cup \{v\})$ and $\{v_1', v_2', v_3'\}$ induces a triangle in $G - (X \cup \{v\})$ and these are pairwise intersecting.
            Hence, $G - (X \cup \{v\})$ has a component that contains a claw and a triangle.
            \item[Sub-subcase-(b-ii)] When $x \in {J}_S$ and $y \in {J}_R$.
            Define a function $g:\{x\}\to \{0, 1\}$ such that $g(x) = 0$.
            Clearly, $v \in P_{\{x\}, g}$.
            Since $v$ is unmarked and $|X| \leq k$, there are at least 3 marked vertices $v_1', v_2', v_3'$ in $K\setminus X$ that match $(\{x\}, g)$ and $v_1', v_2', v_3' \uv v$.
            Then, $v_1' \uv v \uv y$.
            Since $v$ is not adjacent to $y$, $v_1'$ is also not adjacent to $y$.
            Additionally, $v_1'$ is not adjacent to $x$ also.
            Therefore, $J':= (J\setminus \{v\}) \cup \{v_1'\}$ induces a claw in $G-(X\cup \{v\})$.
            Additionally, $\{v_1', v_2', v_3'\}$ is a triangle in $G - (X \cup \{v\})$.
            Hence, $G - (X \cup \{v\})$ has a claw and a triangle that are pairwise intersecting.
            Therefore, $G - (X \cup \{v\})$ has a component that has a claw and a triangle.
        \end{description} 
        \item[Subcase-(c)] When the center of the claw $J$ is in $R$.
        Say the center of the claw $J$ is $u$.
        Then, $u \in J_R$, in addition $y \in J_R$ or $y \in J_S$.
        Without loss of generality suppose that $u \in K_{nt}$.
        If $y \in J_S$, then the argument is similar to sub-subcase-(b-i) of Subcase-(b)
        Consider the situation when $y \in J_R$.
        Then, $y \in K_{nt}$ or $y \notin K_{nt}$.
        If $y \notin K_{nt}$, then $y \notin K_{pv}$.
        In such a situation, the argument is similar to Sub-subcase-(a-i) under Subcase-(a).
        Hence, $y \in K_{nt}$ is a nontrivial situation.
        Note that $u$ is adjacent to $x \in S$ and $u \in K_{nt}$.
		Then, either $u$ or some $u' \in K_{nt}$ with $u' \uv u$ is one of the first $k+1$ neighbors of $x$ in $K_{nt}$.
		As $|X| \leq k$, $u'$ exists which could be $u$ itself.
		Then, $v$ is one such vertex that is neighbor to $u'$ (because $v$ is neighbor of $u$ such that $u' \uv u$) but not $x$.
		As $v$ is unmarked, there are $k+3$ marked vertices in $K$ that are neighbors of $u'$ but not $x$ and each such vertex appear before $v$ in $\mathcal{V}$.
		As $v$ is not adjacent to $y \in K_{nt}$, these $k+3$ vertices appearing before $v$ also are not adjacent to $y$.
		Since $|X| \leq k$, there are 3 vertices $v_1', v_2', v_3' \in K \setminus X$ such that $v_1' \uv v$, $v_1'$ is adjacent to $u$ but not adjacent to $x$.
		Then, consider $J' = (J \setminus \{v\}) \cup \{v_1'\}$ and $\{v_1', v_2', v_3'\}$.
		Note that $v_1'$ is not adjacent to $y$, hence $J'$ forms a claw in $G - (X \cup \{v\})$ and $\{v_1', v_2', v_3'\}$ forms a triangle in $G - (X \cup \{v\})$.
		As $J'$ and $\{v_1', v_2', v_3'\}$ are pairwise intersecting, $G - (X \cup \{v\})$ has a component containing both a claw and a triangle.
    \end{description}
    \item[Case-(iii)]\label{triangle-ct} When $v$ does not appear in a claw $J$.
    But $v$ appears in a triangle containing 2 other vertices of $G - X$.
    Consider the component $D$ of $G - X$ that contains $v$ and also has a claw $J$.
    Without loss of generality, consider $J$ being the claw closest to a triangle of $G - X$ that contains $v$.
    Note that $D \cap V_1$ induces a path due to \Cref{obs:triangle-free-pig}.
    Then, $D$ must contain some vertices from $S \cup V_2$ and the claw $J$ intersects $S \cup V_2$.
    Consider the path $P$ from the triangle containing $v$ to $J$ and $x \in S$ be the first vertex of $S$ appearing in the path $P$.
    If $P$ starts from a vertex $u$ of $K \setminus (X \cup \{v\})$, then there is a triangle $T$ such that $u, v \in T$.
    In particular, the triangle $T$ has another vertex $w \in K$.
    Additionally, $K \setminus X$ has a 4th vertex $z$ apart from $v, u, w$.
    Moreover, even in $D \setminus \{v\}$, the path $P$ is present.
    Then, $D \setminus \{v\}$ has a triangle $T' = \{u, w, z\}$.
    In particular, $D \setminus \{v\}$ forms a component of $G - (X \cup \{v\})$ that contains a triangle $T' = \{u, w, z\}$, a claw $J$, and a path $P$ from $z$ to $T'$.
    This leads to a contradiction that $G - (X \cup \{v\})$ is a {\pitgraph}.
    
    Now, we consider when the path $P$ starts from $v$.
    The next vertex of $P$ is a neighbor of $v$.
	If the next vertex of $P$ starting from $v$ is $x \in S$, then $x$ is adjacent to $v$.
	Then, we define a function $g(x) = 1$.
	Note that $v \in K$ matches $(\{x\}, g)$ and $v$ is unmarked.
	Hence, there are at least $k+3$ marked vertices in $K$ that match $(\{x\}, g)$.
	Then, there are 3 vertices $v_1, v_2, v_3 \in K \setminus X$ such that $v_1, v_2, v_3 \uv v$ and $v_1$ is adjacent to $x$.
	Then, $\{v_1, v_2, v_3\}$ is a triangle and $x$ is adjacent to $v_1$.
	Consider the vertex set $D \setminus \{v\}$.
	This forms a component of $G - (X \cup \{v\})$ that contains the triangle $\{v_1, v_2, v_3\}$, a path $(P \setminus \{v\}) \cup \{v_1\}$ and a claw $J$, leading to a contradiction that $G - (X \cup \{v\})$ is a {\pitgraph}.
	
	The other case is when $x$ is not the next vertex of $P$ when starting from $v$.
	Then, the next vertex must contain some other vertex $w' \in V_1$.
	Without loss of generality, assume that $v \uv w'$ (the other case $w' \uv v$ has similar arguments).
	In such a case, $x$ is not adjacent to $v$. 
	Then, we define the function $g(x) = 0$ and $v$ matches $(\{x\}, g)$ but not marked.
	Since $|X| \leq k$, there are 3 marked vertices $v_1, v_2, v_3 \in K \setminus X$ such that $v_1, v_2, v_3 \uv v$ and $x$ is not adjacent to $v_1, v_2, v_3$.
	But then $w'$ is adjacent to $v_1$.
	In such a case, $D \setminus \{v\}$ has a triangle $\{v_1, v_2, v_3\}$ and a claw, in addition to a path that starts from $v_1$ then uses the subpath $w'$-$P$-$x$-$J$.
	Then $D \setminus \{v\}$ is part of a connected component of $G - (X \cup \{v\})$ containing a claw and a triangle.
	This leads to a contradiction that $G - (X \cup \{v\})$ is a {\pitgraph}.
  \end{description}
  As the above cases are mutually exhaustive, this completes the proof of the lemma.
\end{proof}

\begin{lemma}
    \label{lemma:safeness-clique-marking}
    \Cref{rul:mark-clique} is safe.
\end{lemma}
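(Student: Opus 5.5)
The forward direction follows at once from \Cref{obs:reduction}: if $X$ is a solution to $(G,k)$, then $X\setminus\{v\}$ is a solution to $(G-v,k)$. For the backward direction, let $X$ be a solution of size at most $k$ to $(G-v,k)$. The plan is to show that $X$ itself is a solution to $(G,k)$. We argue by contradiction: assume $G-X$ is not a simple {\pitgraph}, and take an obstruction $O$ in $G-X$. Since $G-(X\cup\{v\})$ is a {\pitgraph}, $O$ must contain $v$.

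The first step disposes of the small obstructions and of multi-edges. No edge incident to $v$ is a multi-edge, because $v\in V_1$ and edges to $S$ that become double are created only by \Cref{rule:expansion}, which acts on $V_2$. So $O$ cannot be a double-edge. Suppose $O$ is a net, a tent, or a hole $C_\ell$ with $4\le \ell\le 6$. By \Cref{lem:small-obstruction}, every such obstruction of $G$ is hit by every minimal hitting set of size at most $k$ of those obstructions lying inside $G[S]$. Since $v\notin S$, any such obstruction through $v$ is already handled by $X\cap S$. To make this precise, we take $X$ to be inclusion-minimal and restrict it to a minimal hitting set of the small obstructions in $G[S]$. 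Then \Cref{lem:small-obstruction} (second bullet) shows $X$ hits all small obstructions of $G$, and in particular $O$.

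It remains to treat two kinds of obstruction: holes of length at least $7$, and components containing a \ctpair. For a hole of length at least $7$ through $v$, \Cref{lem:safe-hole-7} gives $G-(X\cup\{v\})$ either a hole of length at least $4$ or a pairwise-intersecting claw and triangle. Either way $G-(X\cup\{v\})=(G-v)-X$ is not a {\pitgraph}, a contradiction. For a component of $G-X$ containing both a claw and a triangle, \Cref{lem:safe-ctpair} gives a component of $G-(X\cup\{v\})$ with a claw and a triangle, again a contradiction. Hence $G-X$ is a simple {\pitgraph}, and the rule is safe.

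The main obstacle is the small-obstruction step. Its delicacy is that \Cref{lem:small-obstruction} talks about minimal hitting sets, whereas $X$ is an arbitrary solution. I would handle this by choosing $X$ minimal, letting $X'\subseteq X$ be a minimal hitting set of the small obstructions of $G[S]$ (which exist in $G-v$ since $S\subseteq V(G-v)$), and then applying the equivalence to conclude that $X'$ hits all small obstructions of $G$.
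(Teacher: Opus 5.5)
Your proposal is correct and is essentially the paper's proof. The forward direction comes from \Cref{obs:reduction}; for the backward direction the nets, tents and $C_4,C_5,C_6$ are discharged through \Cref{lem:small-obstruction}, and then holes of length at least $7$ and claw--triangle components are handled by \Cref{lem:safe-hole-7} and \Cref{lem:safe-ctpair}. Your passage to a minimal sub-hitting-set $X'\subseteq X$ and your separate treatment of double edges are details the paper skips; the double-edge argument relies on the original input being simple, because in the general multigraph formulation a double edge between $S$ and $V_1$ is not excluded, and the marking scheme ignores multiplicities.
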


\begin{proof}
    The forward direction ($\Rightarrow$) of the reduction rule is trivial due to \Cref{obs:reduction}.

    We consider the backward direction ($\Leftarrow$) of the reduction rule.
    Suppose for the sake of contradiction that $X \subseteq V(G) \setminus \{v\}$ is a set of at most $k$ vertices such that $G - (X \cup \{v\})$ is a {\pitgraph} but $G - X$ is not a {\pitgraph}.
    Then, there must be an obstruction $O$ in $G - X$ that contains $v$ and is isomorphic to an induced cycle of length at least 4, or a tent, or a net, or $G - X$ has a component containing $v$ that has a claw and a triangle.
    But note that every net, every tent, and every induced $C_4, C_5, C_6$ of $G[S]$ is intersected by $X$.
    Due to \Cref{lem:small-obstruction}, every net, every tent, and every induced $C_4, C_5, C_6$ of $G$ are also intersected by $X$.
    Hence, the obstructions of $G - X$ can only be induced cycles of length at least 6, or $G - X$ has a component $D'$ containing $v$ that contains a claw and a triangle.

    If $G - X$ has an induced cycle of length at least 7, then due to \Cref{lem:safe-hole-7}, $G - (X \cup \{v\})$ also contains hole of length at least 4, or has a claw and a triangle that are pairwise intersecting.
    This leads to a contradiction that $G - (X \cup \{v\})$ is a {\pitgraph}.
    Similarly, if $G - X$ has a component that has a claw and a triangle, then due to \Cref{lem:safe-ctpair}, $G - (X \cup \{v\})$ has a component that has a claw and a triangle, leading to a contradiction that $G - (X \cup \{v\})$ is a {\pitgraph}.
    Hence, this reduction rule is safe.
\end{proof}

\paragraph*{Bounding the Number of Vertices in $G[V_1]$ and Putting Things Together.}
Now, we bound the number of vertices in $G[V_1]$ and use it to prove our main result, restated below.

{\FinalMainThm*}

\begin{proof}
    Consider an instance $(G, k)$ in which none of the reduction rules are applicable.
    Our kernelization algorithm first invokes \Cref{lem:small-obstruction} that either outputs that $(G, k)$ is a no-instance, or computes a set $S$ of size $\cOO(k^6)$ such that $G - S$ has no tent, no net, and no $C_4, C_5, C_6$ as induced subgraph.
    Additionally, any set $X$ of at most $k$ vertices from $S$ intersects all nets, all tents, and all induced $C_4, C_5, C_6$ of $G[S]$ if and only if $X$ intersects all nets, all tents, and all induced $C_4, C_5, C_6$ of $G$.
    After that, we apply all our Reduction Rules \ref{rul:isolated-component-reomval}--\ref{rul:mark-clique} exhaustively.
    It follows from \Cref{lem:boudning-number-of-cliques} that $G[V_1]$ has $\cOO(k^2 |S|^2)$ cliques which is $\cOO(k^{14})$ cliques.
    Subsequently, due to our marking procedure {\sf Mark-Clique}($K$), we have that every clique of $\cKK$ in $G[V_1]$ that intersects $N(S)$ contains at most $\eta(k)$ vertices which is $\cOO(k^{19})$ vertices.
    Our procedure additionally also marks at most $\eta(k)$ vertices for every clique $K$ such that $K \cap N(S) = \emptyset$.
    As $|S|$ is $\cOO(k^6)$, it follows that $G[V_1]$ is $\cOO(k^{33})$.
    It follows from \Cref{lemma:upper-bound-V2} that $G[V_2]$ is $\cOO(|S|^2)$ which is $\cOO(k^{12})$.
    Hence, an irreducible instance $(G, k)$ has $\cOO(k^{33})$ vertices.
    Therefore, {\PITVD} admits a kernel with $\cOO(k^{33})$ vertices. 
\end{proof}

\section{Conclusion}
\label{sec:conclusion}
%%%%%%
% We provide the nontrivial polynomial kernelization result for {\PITVD} problem; in particular, our result is a first result for polynomial kernel for any pairs of scattered graph classes $\cGG_1$ and $\cGG_2$ such that both $\cGG_1$ and $\cGG_2$ are defined by infinite forbidden families.
% A natural future direction is to improve our upper bound result? For instance, can we provide a kernel with $\cOO(k^6)$ vertices?
% While we believe that is possible, but it will require significantly specific approach to mitigate the obstructions like tent and net.

We give the first nontrivial polynomial kernel for {\PITVD} where both $\cGG_1$ and $\cGG_2$ are defined by infinite forbidden families.
A natural direction is improving the bound, e.g., to $\cOO^*(k^6)$ vertices.
We believe that it is possible but requires a specific approach to handle obstructions like tent and net.
Jacob et al. \cite{JACOB2023280} gave single-exponential FPT algorithms for {\PITVD} running in $\cOO^*(7^k)$; improving this to $\cOO^*(4^k)$ or $\cOO^*(3^k)$ is an interesting direction.
It would also be intriguing to explore if our ideas can be adopted to provide a polynomial kernel when $\cGG_1$ is interval graphs (respectively, chordal graphs) and $\cGG_2$ is forest (respectively, bipartite permutation graphs).
Even more interesting is designing kernelization algorithms for vertex deletion into three or more scattered hereditary graph classes $(\cGG_1,\cGG_2,\cGG_3, \ldots, \cGG_\ell)$ such that $\cGG_i$'s-\textsc{Deletion} problems admits a polynomial kernel for each $i\in [\ell]$.

\bibliographystyle{plain}
% \bibliography{arxiv-references.bib}

\end{document}